\documentclass[10pt]{article}

\usepackage[T1]{fontenc}
\usepackage[utf8]{inputenc}
\usepackage{lmodern}
\usepackage{microtype}

\usepackage[margin=1in]{geometry}

\usepackage{amsmath,amssymb,amsthm}
\usepackage{aliascnt}
\usepackage{mathtools}
\providecommand{\wideparen}[1]{\overset{\frown}{#1}}
\numberwithin{equation}{section}

\usepackage{graphicx}

\usepackage{multicol}
\usepackage{enumitem}

\setlist[itemize]{topsep=2pt,itemsep=1pt,parsep=0pt,partopsep=0pt}

\usepackage{tikz-cd}

\usepackage[hidelinks]{hyperref}
\usepackage[nameinlink,noabbrev]{cleveref}
\usepackage{url}
\usepackage{doi}
\usepackage{etoolbox}
\AtBeginEnvironment{thebibliography}{%
  \raggedright
  \sloppy
}

\theoremstyle{plain}

\newaliascnt{lemma}{theorem}
\newtheorem{lemma}[lemma]{Lemma}
\aliascntresetthe{lemma}
\newaliascnt{proposition}{theorem}
\newtheorem{proposition}[proposition]{Proposition}
\aliascntresetthe{proposition}
\newaliascnt{corollary}{theorem}

\aliascntresetthe{corollary}
\theoremstyle{definition}
\newaliascnt{definition}{theorem}

\aliascntresetthe{definition}
\theoremstyle{remark}
\newaliascnt{remark}{theorem}

\aliascntresetthe{remark}

\crefname{theorem}{theorem}{theorems}
\Crefname{theorem}{Theorem}{Theorems}
\crefname{lemma}{lemma}{lemmas}
\Crefname{lemma}{Lemma}{Lemmas}
\crefname{proposition}{proposition}{propositions}
\Crefname{proposition}{Proposition}{Propositions}
\crefname{corollary}{corollary}{corollaries}
\Crefname{corollary}{Corollary}{Corollaries}
\crefname{definition}{definition}{definitions}
\Crefname{definition}{Definition}{Definitions}
\crefname{remark}{remark}{remarks}
\Crefname{remark}{Remark}{Remarks}

\begin{document}
\title{Constructive Euclidean Proofs of the Equivalence Between Keplerian Orbits and Newton's Inverse-Square Law}
\author{Changchun Shi}
\date{}

\maketitle

\begin{abstract}
  Kepler's first two laws state that a planet moves on an ellipse with the Sun at a focus and sweeps out equal areas in equal times (constant areal speed). In the \emph{Principia}, Newton showed how these laws connect to universal gravitation. Since then, the equivalence between orbital laws and force laws has remained a central topic in celestial mechanics. We present fully geometric proofs---built from explicit Euclidean straightedge-and-compass constructions---of this equivalence in both directions. The proof system combines finite-step constructions, tangent and triangle geometry, affine transport, local displacement ratios, conic invariants, and several hodograph realizations. Within this broader framework, one contribution is to use the auxiliary circle as the primary hodograph proxy in configuration space rather than the directrix-circle normalization of radius \(2a\). Our emphasis is a \emph{Principia}-style argument that avoids differential equations while remaining close to Euclidean methods.
\end{abstract}

\section{Introduction}

Kepler, building on Tycho's observations, first extracted empirical regularities: planets move on ellipses with the Sun at a focus, sweep out equal areas in equal times, and across planets satisfy the invariant period--size relation $T^2\propto a^3$ \cite{ostermannwanner2012geometry}.

In Book~I of the \emph{Principia} (1687), Newton formulates the key hinge in two parts. Proposition~I (Theorem~I) and Proposition~II (Theorem~II) establish the equivalence between central direction and area sweep proportional to time in planar motion; with the conic-focus condition added, Newton then derives inverse-square dependence \cite{newton1846mottewikisource,chandrasekhar1995principia}. In the first-edition tradition, the conic-specific historical forward chain is commonly cited as Propositions~XI--XIII, while the complementary historical reverse reconstruction chain is often cited through Propositions~XVII and~XLI. Unless noted otherwise, proposition/theorem references in this paper follow the 1726 third-edition sequence as presented in the Motte--Cajori translation line \cite{newton1846mottewikisource,newton1934mottecajori}. Chandrasekhar's common-reader commentary is widely associated with this translation tradition \cite{chandrasekhar1995principia,newton1934mottecajori}.

Historically, orbit-to-force was the direction of discovery, while force-to-orbit was its complement. In the terminology used in this paper, these are the inverse problem and forward problem, respectively. Newton's geometric arguments in Book~I treat both directions for ellipse, parabola, and hyperbola through conic-specific propositions. Under inverse-square centripetal attraction, trajectories are conic sections with the force center at a focus, and the orbit type (ellipse/parabola/hyperbola) is determined by the motion regime (equivalently, by the constants fixed by initial data). In revisions prepared around 1712 and published in the second edition (1713), Newton also included additional forward-facing comparisons that refer motion to the conic center (center of ellipse or hyperbola), complementing the focus-centered constructional route \cite{Nov95,chandrasekhar1995principia}. Modern reconstructions of Newton's argument structure are given in \cite{chandrasekhar1995principia,markowsky2011newton}.

The continental differential-calculus tradition associated with Leibniz and Bernoulli emphasized analytic generality, while Newton's treatment remained geometric and limit-based (first and last ratios). Johann Bernoulli and others criticized parts of Newton's forward-problem argument, and this methodological dispute shaped later celestial mechanics \cite{guicciardini1999reading}. A modern defense of Newton's internal deductive structure is given by Chandrasekhar (especially pp.~111--112), and a related historical assessment is given by Arnol'd \cite{chandrasekhar1995principia,arnold1990huygens}. In modern terms, fully explicit ODE existence/uniqueness theorems came later, especially with 19th-century work led by Cauchy.

Modern pedagogy usually starts from Newton's laws and a force model as given, then derives orbital laws as consequences. In the rest of this paper, we use forward/inverse in this modern sense unless explicitly marked as historical. Thus Kepler-to-$1/r^2$ is the modern inverse direction, while $1/r^2$-to-conic is the modern forward direction. Standard modern routes for the former are summarized in \cite{chandrasekhar1995principia,ostermannwanner2012geometry,markowsky2011newton}; for the latter, a common pedagogical line runs through Hamilton (1847), Maxwell (1877), and Feynman's 1964 ``lost lecture,'' and is further refined in vHH (2009) and CRS (2016) \cite{hamilton1847hodograph,maxwell1877mattermotion,goodstein1996feynman,vanhaandelheckman2009teaching,carinena2016newlook}. Our aim is to present geometric proofs in both directions, in a style close to Newton's Euclidean constructive approach.

We briefly summarize what the main cited references emphasize, to clarify how the present proof fits among them (we refer to the bibliography for full details). For historical discussion of how Book~I evolves across different versions/editions, see also \cite{Nov95,chandrasekhar1995principia}.
\begin{itemize}

  \item Nobel laureate Chandrasekhar (1995) \cite{chandrasekhar1995principia}. A detailed modern reconstruction of Newton's arguments, combining synthetic geometry with contemporary analytic language. Some derivations do not follow Newton's original route step by step, so this book is best read together with the \emph{Principia} itself. Nonetheless, it is indispensable and greatly aids reading the \emph{Principia}.

  \item Guicciardini (1999) \cite{guicciardini1999reading}. A historical study of the post-\emph{Principia} debate on Newton's mathematical method, including Bernoulli-era criticism and the geometry-vs-analysis fault line.

  \item Arnol'd (1990) \cite{arnold1990huygens}. A historical-mathematical perspective on the Newton--Hooke/Huygens--Barrow line, often cited in discussions of how Newton's geometric method should be interpreted against later analytic standards.

  \item Sir William Rowan Hamilton (1847) \cite{hamilton1847hodograph}. Hamilton introduces the hodograph and proves circular hodographs for inverse-square central attraction. This is a landmark exposition, notable for its distinctive communication style: almost no diagrams and very light symbolic machinery, yet a strikingly clear geometric argument. Hamilton is an early, prominent British-Isles figure in the 19th-century analytical reform and re-connection with continental methods, while still keeping a strongly synthetic/geometric mode of thought; this work is a prime example of that style.

  \item Maxwell (1877) \cite{maxwell1877mattermotion}. In this introductory treatment of motion and matter, Maxwell develops the pedagogical hodograph viewpoint and emphasizes its value for reconstructing orbital geometry from inverse-square dynamics. He gives an early clear construction in which a rotated hodograph circle corresponds to the ellipse's directrix circle, making the conic-to-inverse-square (modern inverse problem) derivation especially transparent for students.

  \item Goodstein \& Goodstein (1996) \cite{goodstein1996feynman}. A geometric narrative inspired by Feynman's 1964 ``lost lecture,'' emphasizing a nearly calculus-free route for the forward problem (inverse-square $\Rightarrow$ conic orbits) using hodograph synthesis.

  \item Derbes (2001) \cite{derbes2001reinventing}. A clear pedagogical account of hodographic (velocity-space) methods for the Kepler problem, including the parabolic case.

  \item Stavek (2019) \cite{stavek2019newtonsparabola}. A geometric exploration centered on Newton's parabola and related classical constructions (directrix, pedal curve, evolute, subtangent/subnormal, and a Ptolemy-circle/hodograph viewpoint).

  \item Markowsky (2011) \cite{markowsky2011newton}. A careful retelling and streamlining of Newton's arguments, with an emphasis on geometric structure and pedagogy. Markowsky also emphasizes that whether Newton regarded ``having a solution'' as sufficient for the forward problem (inverse-square $\Rightarrow$ conics) is debatable: one can justify the step via existence/uniqueness results for the associated ordinary differential equations. It also seeks a \emph{Principia}-style proof of energy conservation, from which one can derive that the trajectory is a conic.

  \item Provost \& Bracco (2008) \cite{provostbracco2008simple}. A hybrid yet insightful and concise derivation highlighting conserved quantities---essentially the Laplace--Runge--Lenz vector---that analytical approaches often use to show that the path is a conic. This paper offers an elementary view based on areal speed and related geometric relations, and also summarizes Jakob Hermann's early-18th-century proof (c.\ 1710s) and the aforementioned proof of Hamilton (1847).

  \item Woan \cite{a1dynamicalastronomy}. A standard mechanics approach in polar coordinates, deriving the conic form by solving for $r(\theta)$ using angular momentum and energy, in the tradition of analytic approaches developed by mathematicians such as Bernoulli (late 17th to early 18th century).

  \item Simha (2021) \cite{simha2021algebra}. A calculus-light, trigonometric presentation of Kepler's first law; it substantially overlaps with the Markowsky-style approach, possibly without awareness of that earlier source. It is mentioned here because it uses the law of cosines effectively.

  \item van~Haandel \& Heckman (2009) \cite{vanhaandelheckman2009teaching}. A freshman-oriented teaching note emphasizing geometric intuition. Besides giving a quick account of Feynman and Newton's methods, it also provides a geometric construction of the LRL vector that offers further insight into the forward problem.

  \item Cari\~nena--Ra\~nada--Santander (2016) \cite{carinena2016newlook}. A modern refinement of the hodograph/Feynman line (referred to in this paper as CRS16), including a cleaner force-center transformation framework.
\end{itemize}

In Section~2 we briefly repeat Newton's argument (with his classic figure) showing why the area theorem implies a centripetal (central) force. In Sections~3--5 we present geometric proofs for inverse-square dependence using the auxiliary circle, affine transport, and conic-specific refinements. Section~6 then treats the complementary forward problem through a translated, $\Delta t$-scaled hodograph construction.

\section{Central Force and Related properties}

Consider planar motion of a point mass about a fixed center, denoted by $F$ (or $S$ in Newton's polygonal construction).

Assume the areal law about the center: equal areas are swept in equal times. In local form, there is a constant $c$ such that
\begin{equation}
  \Delta s = c\,\Delta t.
\end{equation}
Equivalently, over a finite time interval $t$, one has $s=ct$.

It is convenient to write $L:=2c$ (twice the constant area rate), so
\begin{equation}
  \Delta s=\frac{L}{2}\Delta t,
  \qquad
  s=\frac{L}{2}t.
\end{equation}
In modern mechanics, $L$ is the \emph{specific angular momentum} (angular momentum per unit mass), often denoted by $h$ (see, e.g., \cite{chandrasekhar1995principia,goodstein1996feynman}). The ordinary angular momentum is then $\mathcal{L}=mL$.

We assume the motion is smooth enough that ``short time step'' expansions are valid in the usual limiting sense (errors are higher order in the small time step).
% !TEX root = ../../main.tex
\subsection{Central direction from constant areal speed}

\subsubsection{Newton's area-law centripetal direction}
\label{subsec:central-direction}

We briefly recall Newton's original geometric implication of Kepler's area law (1609) in the \emph{Principia} (Book~I, Proposition~1, 1687) \cite{newton1846mottewikisource}. We do not repeat Newton's elementary geometric proof here; an accessible modern retelling is in Section~22 (``The Area Theorem'') of Chapter~5 of \cite{chandrasekhar1995principia}. A short online exposition in the same geometric spirit is also available in \cite{nonagonnewtonkepler}, and a detailed fill-in of some steps in Newton's Proposition~1 is given in \cite{nauenberg2003keplerarea}.

\begin{figure}[t]
  \centering
  \includegraphics[width=0.5\linewidth]{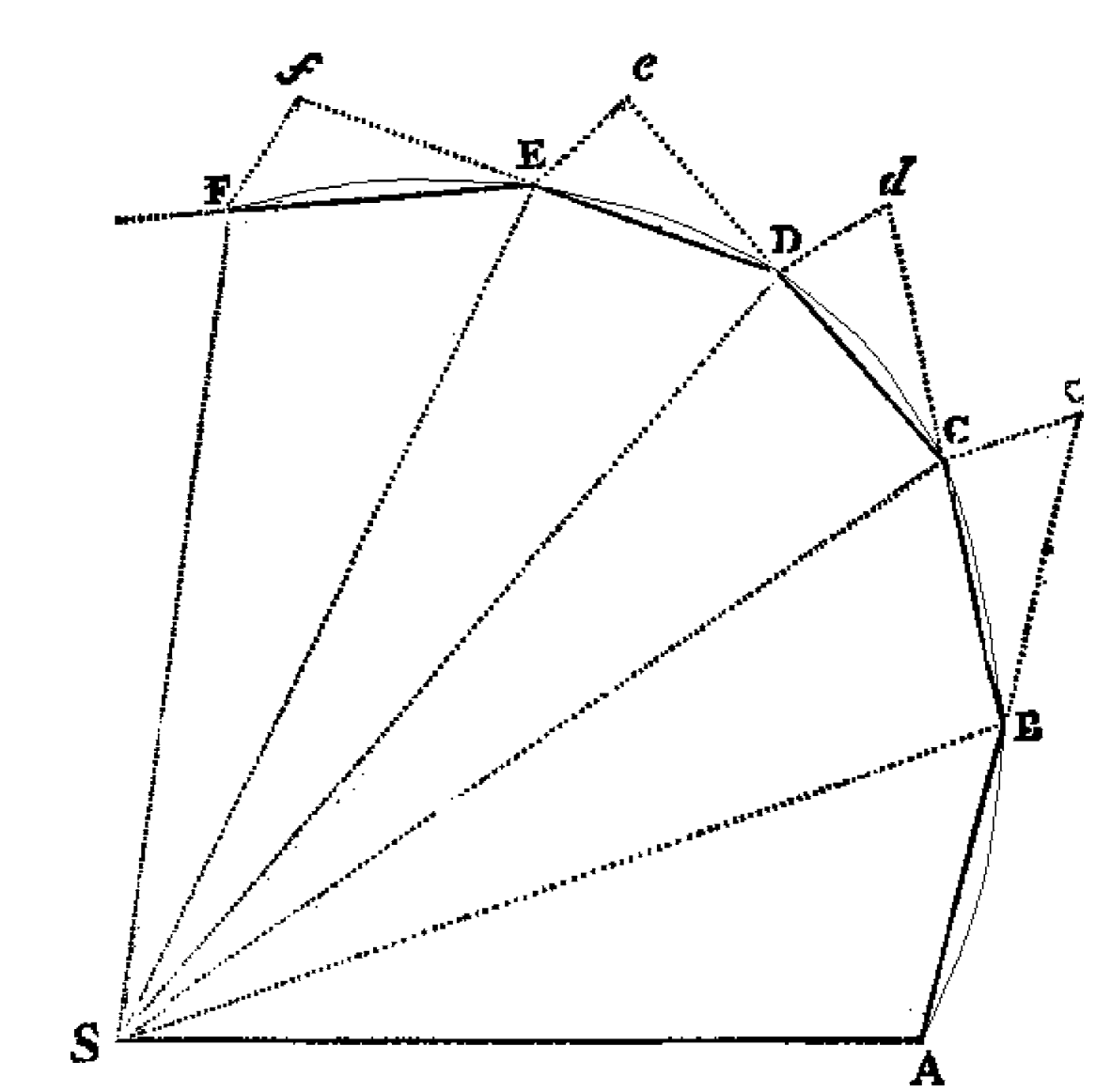}
  \caption{Newton's area theorem construction, reproduced from \cite{newton1687principia,newton1846mottewikisource}. Because it is an original 1687 diagram, it naturally looks coarse/ancient, but the geometric idea is clear.}
  \label{fig:newton-area}
\end{figure}

Approximate the orbit by a broken line $ABCDEF\cdots$ traversed in equal time steps, with instantaneous ``impulses'' at the vertices. Fix the point $S$ about which areas are swept. If there is no impulse at $B$, then after $AB$ the motion would continue straight to a point $c$ on the extension of $AB$ with $Bc=AB$. Newton's construction draws through $c$ a line parallel to $SB$ and places $C$ on this line so that $C\in SC$ (see \Cref{fig:newton-area}); this makes the successive swept triangles $\triangle SAB$ and $\triangle SBC$ have equal areas. Conversely, if equal areas are swept in equal times, then at each vertex the ``jump'' from the straight continuation (e.g., from $c$ back to $C$) must lie along $BS$ (and similarly at $C,D,\ldots$). In the smooth limit, the acceleration is therefore always directed along $AS$, i.e., the force is \emph{centripetal} with center $S$.

Thus it remains to determine the magnitude of the centripetal acceleration $a(r)$.

\subsubsection{Local distance relationships for \texorpdfstring{$BD$ and $BR^2$}{BD and BR squared}}
\label{subsec:central-direction-lemmas}

\begin{figure}[t]
  \centering
  \includegraphics[width=1\linewidth]{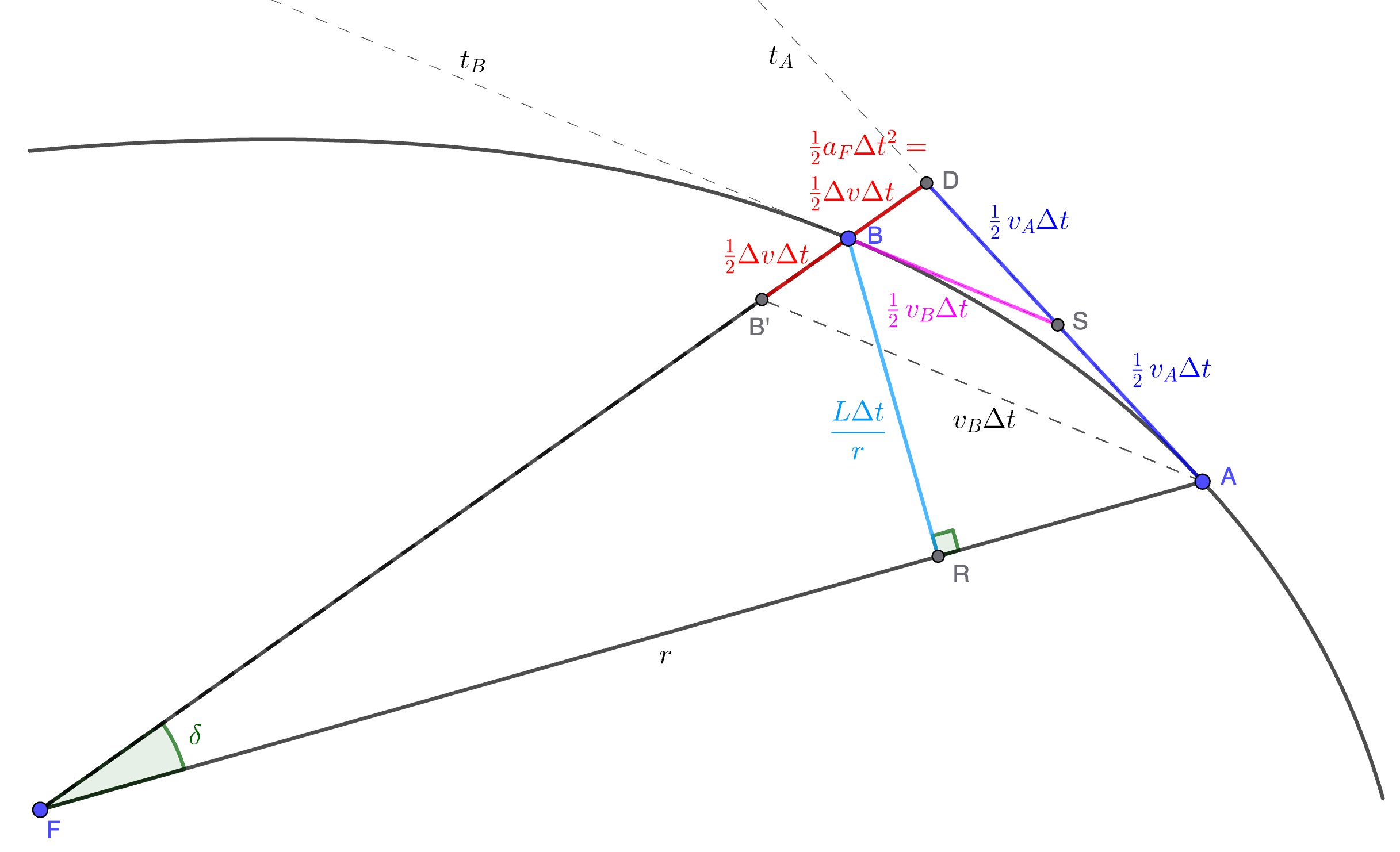}
  \caption{Local construction used in \Cref{lem:deflection-identity,lem:bd-over-br2}.}
  \label{fig:lemma-bd-br2}
\end{figure}

\begin{lemma}[Deflection identity]
\label[lemma]{lem:deflection-identity}
Over an evanescent time step $\Delta t$, the deflection produced by the mean acceleration $a_F$ satisfies
\begin{equation}
  BD=\frac12 a_F(\Delta t)^2=\frac12 \Delta v\,\Delta t,
  \qquad (\Delta v=a_F\Delta t).
\end{equation}
With the tangent construction, $S$ bisects $AD$; and in the limit $B\to A$,
\begin{equation}
  AS=SD=SB=\frac12\,\wideparen{AB}.
\end{equation}
\end{lemma}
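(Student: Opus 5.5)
The plan is to work in the "evanescent step" regime assumed at the start of the section: over a short time step the motion is uniformly accelerated with the mean acceleration $a_F$, up to errors of order $(\Delta t)^3$. In \Cref{fig:lemma-bd-br2} I read $A$ as the starting position with velocity $\mathbf v$, and $D$ as the point the body would reach along the tangent at $A$ with no force, so $D=A+\mathbf v\,\Delta t$. I read $B$ as the actual position after time $\Delta t$, and $S$ as the intersection of the tangents at $A$ and $B$. All identities are then claimed up to higher-order terms in $\Delta t$, in the sense of Newton's first and last ratios.

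First I establish the deflection identity. Expanding the position gives $B=A+\mathbf v\,\Delta t+\tfrac12 \mathbf a_F(\Delta t)^2+O((\Delta t)^3)$. Subtracting $D$ gives $\overrightarrow{DB}=\tfrac12\mathbf a_F(\Delta t)^2$. Hence $BD=\tfrac12 a_F(\Delta t)^2$, and with $\Delta v=a_F\Delta t$ this equals $\tfrac12\Delta v\,\Delta t$. Geometrically, $DB$ is parallel to the force direction, which by \Cref{subsec:central-direction} points toward the center. This is the familiar Galilean ``fall from the tangent'' and needs only the quadratic expansion.

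Next I locate $S$. The velocity at $B$ is $\mathbf v+\mathbf a_F\Delta t$, so the tangent at $B$ is $B+\lambda(\mathbf v+\mathbf a_F\Delta t)$. I intersect it with the tangent at $A$, namely $A+\mu\mathbf v$. The trial value $\lambda=-\Delta t/2$ gives $A+\tfrac12\mathbf v\Delta t+\tfrac12\mathbf a_F(\Delta t)^2-\tfrac12\mathbf a_F(\Delta t)^2=A+\tfrac12\mathbf v\,\Delta t$. This point lies on the tangent at $A$, so $S=A+\tfrac12\mathbf v\Delta t$ up to higher order, i.e.\ $S$ is the midpoint of $AD$. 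As a Euclidean cross-check, this is the parabola property that the tangents at two points meet on the diameter bisecting the chord's projection.

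Finally I get the equal lengths. From the previous step, $AS=SD=\tfrac12|\mathbf v|\Delta t$. Also $SB=\tfrac12|\mathbf v+\mathbf a_F\Delta t|\,\Delta t=\tfrac12|\mathbf v|\Delta t+O((\Delta t)^2)$. The arc satisfies $\wideparen{AB}=|\mathbf v|\Delta t+O((\Delta t)^2)$, since speed changes by $O(\Delta t)$ over the step. Dividing, the ratios $AS:SD:SB:\tfrac12\wideparen{AB}$ all tend to $1$ as $B\to A$.

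I expect the main obstacle to be bookkeeping of orders. $BD$ is second order, while $AS$, $SB$ and the arc are first order. So the equalities in the second display hold only as limiting ratios, whereas $S$ bisecting $AD$ must be exact through second order for later use with $BR^2$. I would therefore state the error terms explicitly at each step, and justify that replacing the true path by the uniformly accelerated one changes $B$ only at $O((\Delta t)^3)$, using the smoothness assumption.
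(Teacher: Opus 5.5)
Your proof is correct, and it takes a more explicit route than the paper's. The paper's argument is essentially definitional:
\begin{itemize}
  \item it rewrites $\tfrac12 a_F(\Delta t)^2=\tfrac12(a_F\Delta t)\Delta t$;
  \item it identifies $BD$ with the distance generated by the mean velocity increment $\tfrac12\Delta v$;
  \item it declares $S$ to be the midpoint of $AD$ ``by construction'';
  \item it asserts $AS=SD=SB=\tfrac12\wideparen{AB}$ from the coalescence of chordal midpoints in the ultimate ratio.
\end{itemize}
You instead derive both facts from a second-order expansion. One is the Galilean fall $\overrightarrow{DB}=\tfrac12\mathbf a_F(\Delta t)^2$. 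The other is the half-time property of uniformly accelerated (parabolic) motion: the tangent at $B$ meets the tangent at $A$ at $A+\tfrac12\mathbf v\Delta t$. This buys something real. You show that the intersection of the tangents at $A$ and $B$ is, to leading order, the midpoint of $AD$. The paper needs exactly this later, when Proof~2 uses $\triangle SDB\sim\triangle F'A'B'$ with $S$ on both $t_A$ and $t_B$, and ``by construction'' does not supply it. The paper's version buys only brevity. Your reading $D=A+\mathbf v\Delta t$ matches the paper's description of $BD$ as the force-free deflection. Where the paper instead sets $D:=t_A\cap FB$, the two points agree to $O((\Delta t)^3)$ because the force is central, so nothing changes.

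One correction to your final paragraph. Suppose ``exact through second order'' means $AS-\tfrac12 AD=O((\Delta t)^3)$. That is neither needed nor true for the actual orbit. Including the jerk $\mathbf j=\dot{\mathbf a}$, the tangent intersection satisfies $AS/AD=\tfrac12+\tfrac{j_n}{12a_n}\Delta t+O((\Delta t)^2)$, where subscript $n$ denotes components normal to $\mathbf v$. For Kepler motion $j_n/(12a_n)=-\dot r/(4r)$, which is nonzero away from the apsides.

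Your proposed justification would not detect this, because controlling $B$ to $O((\Delta t)^3)$ does not control $S$. The velocity at $B$ is only matched to $O((\Delta t)^2)$. Intersecting two lines that meet at an angle of order $\Delta t$ then turns transverse errors of order $(\Delta t)^3$ into shifts of order $(\Delta t)^2$ along $t_A$. The resulting relative error $O(\Delta t)$ is harmless. The lemma asserts only $AS/AD\to\tfrac12$, and every later use (e.g.\ $BD=SD\cdot A'B'/A'F'$, leading to the limit of $BD/BR^2$) needs only leading-order ratios.
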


\begin{proof}
From $\Delta v=a_F\Delta t$,
\[
  \frac12 a_F(\Delta t)^2
  =\frac12 (a_F\Delta t)\Delta t
  =\frac12 \Delta v\,\Delta t.
\]
Thus the deflection in time $\Delta t$ equals the distance generated by the mean velocity increment $\tfrac12\Delta v$ during $\Delta t$, namely the constructed segment $BD$. By construction, $S$ is the midpoint of $AD$, so $S$ bisects $AD$. As $B\to A$, the chordal midpoints coalesce in the ultimate ratio, giving
\[
  AS=SD=SB=\frac12\,\wideparen{AB}.
\]
\end{proof}

\begin{lemma}[Ratio \texorpdfstring{$BD/BR^2$}{BD over BR squared} and inverse-square form]
\label[lemma]{lem:bd-over-br2}
Let $F$ be the center of force, $r=FB$, and $BR\perp FB$. Let $L$ be twice the areal speed, so
\[
  \Delta A=\frac{L}{2}\Delta t.
\]
Then, as $\Delta t\to 0$,
\begin{equation}
  BR\sim \frac{L\Delta t}{r},
  \qquad
  \frac{BD}{BR^2}\sim \frac12\,a_F\,\frac{r^2}{L^2}.
\label{eq:bd-br2-asymptotic}
\end{equation}
Define
\begin{equation}
  \kappa:=\lim_{\Delta t\to 0}\frac{BD}{BR^2}.
\label{eq:kappa-def}
\end{equation}
Hence
\begin{equation}
  a_F=\frac{2\kappa L^2}{r^2}=\frac{\mu}{r^2},
  \qquad
  \mu:=2\kappa L^2,
  \qquad
  \vec{a}_F=-\frac{\mu}{r^2}\,\hat{r}.
\end{equation}
\end{lemma}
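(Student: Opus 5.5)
We argue in the evanescent-triangle style of \Cref{lem:deflection-identity}: express the swept area $\Delta A$ as a triangle with base $FB$ and height $BR$, then divide the deflection $BD$ by $BR^2$.

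\textbf{Step 1: asymptotics of $BR$.} Over the step $\Delta t$, the swept region is bounded by $FA$, $FB$ and the arc $\wideparen{AB}$. It differs from the rectilinear triangle with apex $F$ by a sliver between chord and arc. That sliver has area of order $\wideparen{AB}\cdot BD=O(\Delta t^3)$, so the two areas are in ultimate ratio $1$.

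The triangle has base $FB=r$ and altitude $BR$, where $BR$ is the perpendicular dropped as in the figure so that $BR\perp FB$. Up to higher-order terms its area is $\tfrac12 r\,BR$. Equating this with $\Delta A=\tfrac{L}{2}\Delta t$ gives
\[
BR\sim \frac{L\Delta t}{r}.
\]
The point needing care is that the altitude is measured from the correct vertex. Because $A\to B$, the distance $FA$ differs from $r$ by $O(\Delta t)$. This changes $\tfrac12 r\,BR$ only by a relative error of $O(\Delta t)$, which vanishes in the limit.

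\textbf{Step 2: the ratio $BD/BR^2$.} By \Cref{lem:deflection-identity}, $BD=\tfrac12 a_F(\Delta t)^2$, where $a_F$ is the mean acceleration over the step. The deflection is along $BF$ by the area-law argument of \Cref{subsec:central-direction}, so $BD$ is indeed measured toward the center. Squaring Step 1 gives $BR^2\sim L^2(\Delta t)^2/r^2$. Dividing,
\[
\frac{BD}{BR^2}\sim \frac{\tfrac12 a_F(\Delta t)^2}{L^2(\Delta t)^2/r^2}=\frac12 a_F\frac{r^2}{L^2}.
\]
The factors $(\Delta t)^2$ cancel exactly. Consequently the limit $\kappa$ in \eqref{eq:kappa-def} exists whenever the mean acceleration $a_F$ converges to the instantaneous acceleration at $B$, which the standing smoothness assumption guarantees. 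The limit is $\kappa=\tfrac12 a_F r^2/L^2$.

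\textbf{Step 3: solving for $a_F$.} Solving Step 2 gives $a_F=2\kappa L^2/r^2$. Since $L$ is constant by the areal law, setting $\mu:=2\kappa L^2$ gives $a_F=\mu/r^2$. The direction is centripetal, toward $F$, by \Cref{subsec:central-direction}, so $\vec a_F=-\tfrac{\mu}{r^2}\hat r$.

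We must be candid that this is only an identity defining $\mu$ pointwise. It becomes a genuine inverse-square law only once $\kappa$ is shown to be independent of the position $B$ on the orbit. That independence comes from the orbit's geometry: for a conic, $\kappa$ is expected to equal $1/(2\ell)$ with $\ell$ the latus rectum. The later sections of the paper supply this.

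\textbf{Main obstacle.} The delicate step is Step 1. We must justify that the curvilinear sector, the chord triangle, and $\tfrac12 r\cdot BR$ all agree to leading order. We must also check that $BR$ (perpendicular to $FB$) and $BD$ (along $FB$) refer to the same evanescent configuration. We would handle this by bounding each discrepancy by $O(\Delta t)$ relative error using the smoothness of the orbit. The point is that the errors are of higher order than the $(\Delta t)^2$ quantities being compared, so they vanish in the limit.
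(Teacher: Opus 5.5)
Your proposal is correct and follows the paper's proof essentially step for step: $\Delta A\sim\tfrac12 r\,BR$ gives $BR\sim L\Delta t/r$, the deflection identity supplies $BD=\tfrac12 a_F(\Delta t)^2$, and dividing and solving gives $a_F=2\kappa L^2/r^2$; your sliver estimate and your caveat that $\kappa$ must still be shown position-independent are sensible additions that the paper leaves implicit. One small slip in that caveat: the later sections give $\kappa=1/(2p)$ with $p=b^2/a$ the \emph{semi}-latus rectum (i.e.\ $\kappa=1/\ell$ for the full latus rectum $\ell=2p$), so your $1/(2\ell)$ is off by a factor of $2$ unless $\ell$ denotes the semi-latus rectum.
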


\begin{proof}
From Prop.~I \cite{newton1846mottewikisource}, swept areas are proportional to times; for an evanescent step,
\[
  \Delta A\sim \frac12 r\,BR.
\]
Using $\Delta A=\frac{L}{2}\Delta t$ gives
\[
  \frac{L}{2}\Delta t\sim \frac12 r\,BR
  \quad\Longrightarrow\quad
  BR\sim \frac{L\Delta t}{r}.
\]
By \Cref{lem:deflection-identity},
\[
  BD=\frac12 a_F(\Delta t)^2.
\]
Substitute $BR\sim L\Delta t/r$ to obtain
\[
  \frac{BD}{BR^2}
  \sim
  \frac{\frac12 a_F(\Delta t)^2}{(L\Delta t/r)^2}
  =
  \frac12\,a_F\,\frac{r^2}{L^2}.
\]
So if the ratio tends to $\kappa$, then $a_F=2\kappa L^2/r^2$. Writing
\[
  \mu:=2\kappa L^2,
\]
this is $a_F=\mu/r^2$, directed toward $F$, i.e.
\[
  \vec{a}_F=-\frac{\mu}{r^2}\,\hat{r}.
\]
\end{proof}

\section{Auxiliary circle of an ellipse and the affine transformation view}

Let $\ell$ be the major axis and let $O$ be the center of the ellipse. Recall that the ellipse has semi-major axis $a$ and semi-minor axis $b$. Consider the \emph{major-axis circle} $\mathcal C$: the circle centered at $O$ with radius $a$ (so its diameter equals the major axis length $2a$). This is also the standard \emph{auxiliary circle} associated to the ellipse.

Define the affine map $\Phi$ geometrically as follows. For any point $P$ in the plane, let $P_T$ be the foot of the perpendicular from $P$ to $\ell$. On the same perpendicular line $P_TP$ choose a point $P_0$ on the same side of $\ell$ as $P$ such that
\begin{equation}
  P_TP_0=\frac{b}{a}\,P_TP.
\end{equation}
Then set $\Phi(P):=P_0$.

Equivalently, $\Phi$ fixes $\ell$ pointwise and scales all lengths perpendicular to $\ell$ by the constant factor $b/a$. This map sends lines to lines, preserves parallelism, and scales all areas by $b/a$. In the standard auxiliary circle construction, $\Phi$ maps $\mathcal C$ to the ellipse.

\paragraph{Tangents via secants (Newton's 1687 viewpoint).}
Newton often treats a tangent as the \emph{ultimate position} of a secant: if $A$ is a point on the curve and $B$ is a nearby point, then the chord line $AB$ approaches the tangent line at $A$ as $B\to A$ (in Newton's 1687 language, as one takes the ``ultimate ratio''). In our setting this provides an alternative justification for the fact that $\Phi$ carries circle tangents to ellipse tangents.

Indeed, let $A_0\in\mathcal C$ correspond to $A=\Phi(A_0)$ on the ellipse, and let $B_0\in\mathcal C$ correspond to $B=\Phi(B_0)$. Since $\Phi$ is affine, it maps the secant (chord) line $A_0B_0$ to the secant line $AB$, and it preserves incidences and parallelism. As $B_0\to A_0$, the secant line $A_0B_0$ approaches the tangent to $\mathcal C$ at $A_0$, while $AB$ approaches the tangent to the ellipse at $A$. Therefore, in the same limiting (secant-to-tangent) sense, $\Phi$ maps the circle tangent at $A_0$ to the ellipse tangent at $A$.

\begin{figure}[t]
  \centering
  \includegraphics[width=\linewidth]{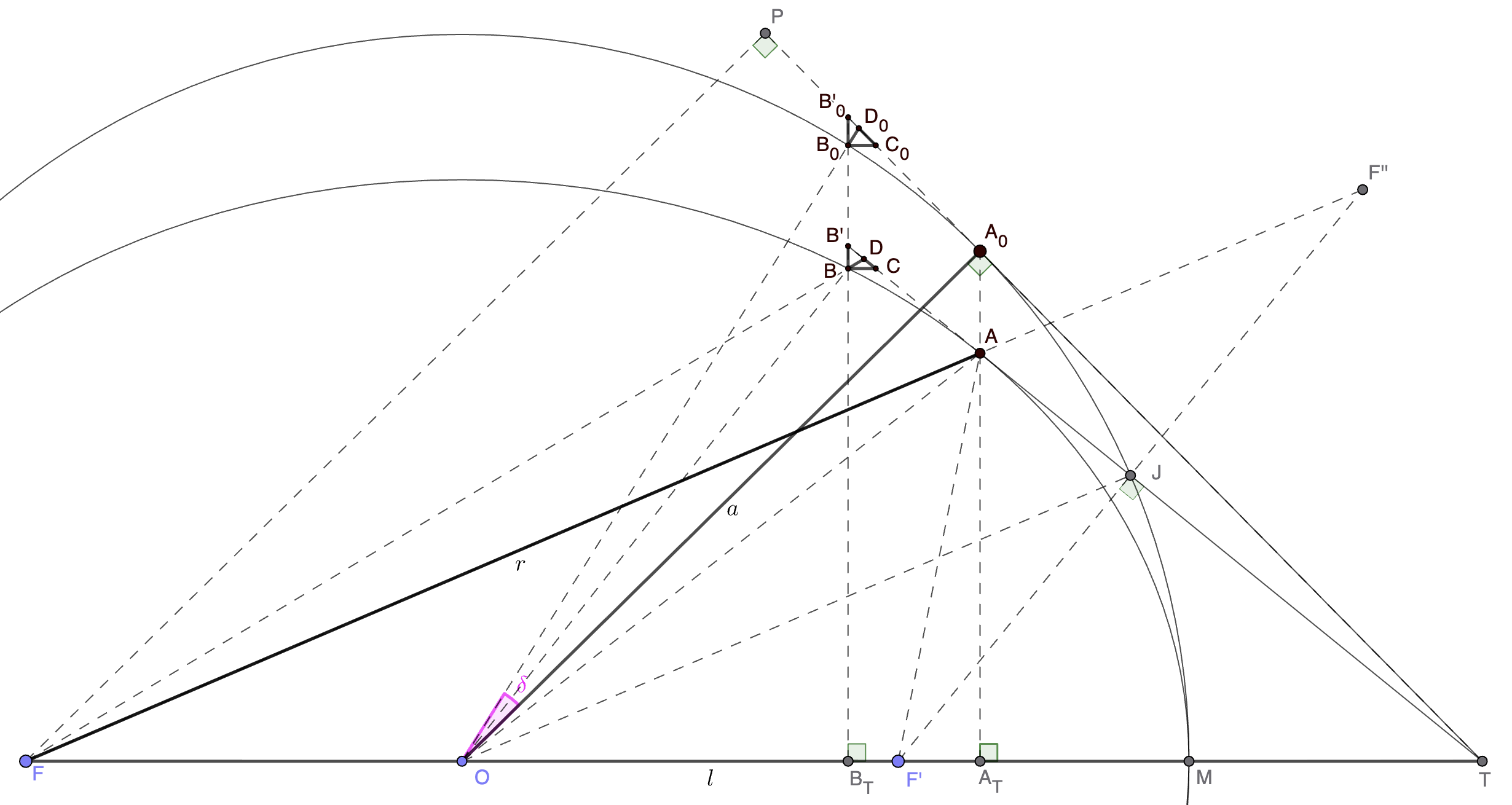}
  \caption{Affine relation of motion between the auxiliary circle and the ellipse.}
  \label{fig:affine-relation-of-motion}
\end{figure}

A few simple geometric consequences we use repeatedly in this section:
\begin{itemize}
  \item Straight lines map to straight lines under $\Phi$.
  \item Horizontal lengths (parallel to $\ell$) are unchanged by $\Phi$.
  \item Vertical lengths (perpendicular to $\ell$) scale by $b/a$.
  \item Parallel lines stay parallel under $\Phi$, and ratios of lengths along such lines are unchanged by $\Phi$.
  \item Areas scale by $b/a$.
  \item The inverse map $\Phi^{-1}$ sends $\Phi(A)$ back to $A$ and scales vertical lengths by $a/b$.
\end{itemize}
Although Newton (1687) did not formulate affine transformations explicitly, he repeatedly used equivalent geometric invariance principles. For example, Markowsky's reconstruction of Proposition~6 shows that, for an ellipse, the product of the ordinate to a conjugate diameter and the semi-length of that conjugate diameter is constant; this is an affine invariant arising from the circle-to-ellipse map \cite{markowsky2011newton}. Newton uses this same type of fact in Proposition~XI \cite{newton1846mottewikisource}. Likewise, in Book~I, Proposition~XXXI (Problem~XXIII) and its Scholium, Newton's computation of time along a given ellipse relies on area scaling between the auxiliary circle and the ellipse \cite{newton1846mottewikisource,chandrasekhar1995principia}.

\subsection{Tangent transfer and matching normal drops}

In the remainder of this section, we use \Cref{fig:affine-relation-of-motion} as the setup for understanding motion on the ellipse through its auxiliary circle. We now record a concrete straightedge-and-compass configuration that makes the affine ``transport'' of the short-time normal departure from the tangent explicit.
The purpose of this section is to obtain the geometric local-drop estimate; the kinematic conversion to acceleration then uses \Cref{lem:deflection-identity,lem:bd-over-br2} from Section~2.

\subsubsection{Circle companions and tangent transfer}

Let $\ell$ be the major axis, and let $\Phi$ map the auxiliary circle $\mathcal{C}$ to the ellipse. From $A$ and $B$ drop perpendiculars to $\ell$ and extend these same vertical lines to meet the auxiliary circle $\mathcal C$ at $A_0$ and $B_0$ (choose the intersections on the same side of $\ell$ as $A$ and $B$). By construction, $\Phi(A_0)=A$ and $\Phi(B_0)=B$.
Draw the tangent to $\mathcal C$ at $A_0$ and let it meet $\ell$ at $T$.

\begin{lemma}[Tangent transfer with fixed $T$]
  \label[lemma]{lem:tangent-transfer}
  Under $\Phi$, the circle tangent at $A_0$ maps to the ellipse tangent at $A$. Moreover, since $\Phi$ fixes $\ell$ pointwise, the intersection point $T\in\ell$ is the same for both tangents.
\end{lemma}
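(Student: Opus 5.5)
The plan has two steps: first show that the image of the circle tangent is the ellipse tangent, then show that $T$ does not move.

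\textbf{Step 1: the tangent is carried to the tangent.} Let $t_0$ be the tangent to $\mathcal C$ at $A_0$. Since $\Phi$ is affine, $\Phi(t_0)$ is a straight line, and it passes through $\Phi(A_0)=A$. I would show it meets the ellipse only at $A$, which characterizes the tangent among lines through $A$ for a convex conic.
\begin{itemize}
  \item The line $t_0$ meets $\mathcal C$ only at $A_0$.
  \item $\Phi$ is a bijection of the plane and maps $\mathcal C$ onto the ellipse.
  \item If $X\ne A$ lay on both $\Phi(t_0)$ and the ellipse, then $\Phi^{-1}(X)\ne A_0$ would lie on both $t_0$ and $\mathcal C$, a contradiction.
\end{itemize}
So $\Phi(t_0)$ is the ellipse tangent at $A$. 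As a cross-check in the paper's limiting language, one can also use the secant argument already given: the chords $A_0B_0$ map to the chords $AB$, and both families tend to the respective tangents as $B_0\to A_0$.

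\textbf{Step 2: the intersection point $T$ is fixed.} Since $T\in t_0\cap\ell$, we get $\Phi(T)\in\Phi(t_0)\cap\Phi(\ell)$. Because $\Phi$ fixes $\ell$ pointwise, $\Phi(\ell)=\ell$ and $\Phi(T)=T$. Hence the ellipse tangent at $A$ meets $\ell$ at the same point $T$.

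\textbf{Degenerate case.} If $A_0$ is an endpoint of the major axis, the tangent is perpendicular to $\ell$, so $T=A_0=A$ and the claim still holds. If $A_0$ is at the top or bottom of $\mathcal C$, the tangent is parallel to $\ell$ and there is no point $T$. Its image is then also parallel to $\ell$, since parallelism is preserved, so the statement should be read for $A_0$ off that position.

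\textbf{Main obstacle.} The only real difficulty is justifying that "tangent" means the unique line through $A$ meeting the curve once. This holds for the circle and, by convexity, for the ellipse. If one prefers to avoid convexity, the secant-limit argument already in the paper supplies the same conclusion.
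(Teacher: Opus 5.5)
Your proof is correct, but it certifies tangency by a different criterion than the paper does.

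\textbf{The paper's route} is local. Points $B_0$ of $\mathcal C$ near $A_0$ lie on one side of $TA_0$. The affine bijection $\Phi$ carries half-planes to half-planes, so the image points $B$ near $A$ lie on one side of $TA$. Smoothness of the conic (a unique tangent, equivalently a unique local supporting line) then identifies $TA$ as the tangent at $A$.

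\textbf{Your route} is global, by intersection counting. $\Phi(t_0)$ meets the ellipse only at $A$, because $\Phi$ is a bijection carrying $\mathcal C$ onto the ellipse. For an ellipse, a line through $A$ with no second intersection must be the tangent.

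\textbf{Comparison.} Your argument needs nothing from $\Phi$ beyond being a line-preserving bijection: no side-preservation and no limits. Its tangency criterion, a line that meets the curve once without cutting it, is the classical Euclid/Apollonius one, which suits the paper's aims. The cost is that it relies on a global property of the ellipse: every non-tangent line through a point meets it again. You justify this by ``convexity,'' but that still quietly uses uniqueness of the supporting line at a smooth point, which is the same background fact the paper invokes. Alternatively you could cite the double root of the quadratic obtained by intersecting a line with the conic. This criterion also fails for parabolas and hyperbolas, where lines parallel to the axis or an asymptote meet the curve once. The paper's one-side argument, by contrast, is purely local and applies to any smooth arc and its affine image.

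Your Step~2 is the same as the paper's. Your treatment of the degenerate positions fills a point the paper leaves implicit. At the vertices, $T=A_0=A$. At the co-vertices, the tangent is parallel to $\ell$, so $T$ does not exist and the lemma must be read away from them.
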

\begin{proof}
  $TA_0$ is the tangent to the circle at $A_0$. For points $B_0$ on the circle taken sufficiently near $A_0$, all such $B_0$ lie on the same side of the line $TA_0$.

  The affine bijection $\Phi$ sends lines to lines and half-planes to half-planes, so it preserves the relation ``lying on the same side of a line.'' Hence the line $\Phi(T)\Phi(A_0)$ meets the ellipse at $A=\Phi(A_0)$. Since $\Phi(T)=T$ and all nearby image points $B=\Phi(B_0)$ lie on one side of $TA$, the line $TA$ is tangent to the ellipse at $A$.

  Here we implicitly use the fact that a conic is smooth and therefore has a unique and well-defined tangent line at each point.\qedhere
\end{proof}

\subsubsection{The \texorpdfstring{$B,C,D$}{B,C,D} construction (ellipse and circle)}

Define $C$ as the intersection of the ellipse tangent $TA$ with the line through $B$ parallel to $\ell$, and define $D:=TA\cap FB$. Likewise, on the circle tangent $TA_0$ define $C_0$ as the intersection with the line through $B_0$ parallel to $\ell$, and define $D_0:=TA_0\cap OB_0$. By construction, $BC\parallel\ell$ and $B_0C_0\parallel\ell$.

\subsubsection{The point \texorpdfstring{$J$}{J} is on \texorpdfstring{$\mathcal{C}$}{C}}

Extend the ray $AF$ beyond $A$ and mark a point $F''$ on this ray such that $AF''=AF'$ (so $\triangle AF'F''$ is isosceles). Let $J:=AT\cap F'F''$.

\begin{lemma}[Ellipse tangent bisects the focal angle]
  The tangent line $AT$ bisects $\angle F''AF'$.
\end{lemma}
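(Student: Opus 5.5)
We give the classical reflection argument: show that the bisector of $\angle F''AF'$ meets the ellipse only at $A$, so it must be the tangent line $TA$. Here $F$ and $F'$ are the two foci. $A$ is on the ellipse, so $AF+AF'=2a$. By construction $F''$ lies on ray $AF$ beyond $A$ with $AF''=AF'$, hence
\[
FF''=FA+AF''=AF+AF'=2a .
\]

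\textbf{Step 1 (the bisector is the perpendicular bisector of $F'F''$).} Let $m$ be the internal bisector of $\angle F''AF'$. Since $\triangle AF'F''$ is isosceles with apex $A$, $m$ is the perpendicular bisector of the base $F'F''$. It passes through the midpoint of $F'F''$, and every point $X\in m$ satisfies $XF'=XF''$.

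\textbf{Step 2 ($m$ touches the ellipse only at $A$).} Take any $X\in m$ with $X\neq A$. Then
\[
XF+XF'=XF+XF''\ \ge\ FF''=2a ,
\]
by the triangle inequality. Equality would force $X$ to lie on segment $FF''$. That segment lies on line $AF$, and this line meets $m$ only at $A$, because $m$ bisects an angle at $A$ whose side is not along $m$ (we may assume $F'$ is not on line $AF$). So for $X\neq A$ the inequality is strict: $XF+XF'>2a$. Thus every point of $m$ other than $A$ lies strictly outside the ellipse. Hence $m$ meets the ellipse at $A$ alone, and the whole ellipse lies on one side of $m$.

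\textbf{Step 3 (identify $m$ with $TA$).} The tangent at $A$ is unique by the smoothness remark in the proof of \Cref{lem:tangent-transfer}. A line through $A$ that keeps all nearby points of the conic on one side is therefore that tangent. This is the same one-side characterization already used in \Cref{lem:tangent-transfer}, so $m=TA$, and $AT$ bisects $\angle F''AF'$. As a byproduct, $J=AT\cap F'F''$ is the midpoint of $F'F''$. Then $OJ$ is the midline of $\triangle F'FF''$, so $OJ=\tfrac12 FF''=a$, which is why $J\in\mathcal C$, as the subsection title indicates.

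\textbf{Degenerate cases.} If $A$ is a vertex on the major axis, then $F''$ lies on $\ell$ and $\angle F''AF'$ is a straight angle. Its bisector is the perpendicular to $\ell$ at $A$, which is indeed the tangent there, and $J$ is the vertex itself. This case can be checked directly.

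The main obstacle is Step 3: moving from the metric statement "all other points of $m$ are outside the ellipse" to "$m$ is the tangent" in the paper's framework. I plan to invoke the one-side/unique-tangent characterization exactly as it was used in \Cref{lem:tangent-transfer}, rather than reprove it via secant limits.
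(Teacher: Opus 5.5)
Your proof is correct. The paper itself gives no proof of this lemma. It invokes the tangent--focal-angle property as the classical reflection property of the ellipse and moves straight to the consequences: $AT\perp F'F''$, $J$ is the midpoint of $F'F''$, and $OJ=\tfrac12 FF''=a$, so $J\in\mathcal C$.

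So you are not offering an alternative route; you are supplying the missing argument, by the standard Heron-type reasoning:
\begin{itemize}
  \item $FF''=AF+AF'=2a$.
  \item The bisector $m$ is the perpendicular bisector of $F'F''$.
  \item The triangle inequality $XF+XF'=XF+XF''\ge FF''=2a$, which is strict off line $AF$, puts every other point of $m$ outside the ellipse.
  \item The one-side/unique-tangent criterion that the paper already uses in the tangent-transfer lemma then identifies $m$ with $TA$.
\end{itemize}
Your approach makes the lemma self-contained within the paper's own framework. It needs nothing beyond the focal definition $AF+AF'=2a$ and a tangent criterion already in use. It also checks the vertex case explicitly, where $\triangle AF'F''$ degenerates and $J=A$. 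The paper's choice gains only brevity.

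One inference should be made explicit: the passage from ``$m$ meets the ellipse only at $A$'' to ``the whole ellipse lies on one side of $m$.'' Add one line, for example: the ellipse with $A$ removed is a connected arc disjoint from $m$, hence contained in a single open half-plane. Alternatively, the region $XF+XF'\le 2a$ is convex and meets $m$ only at its boundary point $A$. With that line added, the argument is complete.
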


In particular (by the isosceles geometry), $AT\perp F'F''$ and $J$ is the midpoint of $F'F''$. But we also know $O$ bisects $FF'$, so $OJ\parallel FA$ and $OJ=FF''/2=a$, which proves that $J$ lies on the circle $\mathcal{C}$.

\subsubsection{Matching the normal drops}

The triangle similarity relations in the tangent line through $T$ give
\begin{equation}
  B_0D_0 = B_0C_0\,\frac{OD_0}{OT},
  \qquad
  BD = BC\,\frac{FD}{FT}.
\end{equation}
Dividing and using that $\Phi(B_0)=B$ and $\Phi$ maps the line through $B_0$ parallel to $\ell$ to the line through $B$ parallel to $\ell$ (since $\Phi$ fixes $\ell$ and preserves parallelism), while also mapping the tangent line $TA_0$ to $TA$, we have $\Phi(C_0)=C$. Hence the horizontal segment $B_0C_0$ is carried to $BC$, and since $\Phi$ preserves horizontal lengths, $B_0C_0=BC$.
\begin{equation}
  \frac{B_0D_0}{BD}=\frac{OD_0}{OT}\cdot\frac{FT}{FD}.
\end{equation}
Letting $B\to A$ (so $D_0\to A_0$ and $D\to A$), we obtain
\begin{equation}
  \lim_{B\to A}\frac{B_0D_0}{BD}=\frac{OA_0}{OT}\cdot\frac{FT}{FA}.
\label{eq:matching-drops-limit}
\end{equation}

\begin{lemma}[Matching drops]
  \label[lemma]{lem:matching-drops}
  In the infinitesimal-step limit, the circle and ellipse normal drops from the tangent agree:
  \begin{equation}
    \lim_{B\to A}\frac{B_0D_0}{BD}=1.
    \label{eq:matching-drops-claim}
  \end{equation}
\end{lemma}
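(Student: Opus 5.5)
The plan is to start from the limit already computed in \eqref{eq:matching-drops-limit}, so that \eqref{eq:matching-drops-claim} reduces to the purely static identity
\[
  \frac{OA_0}{OT}\cdot\frac{FT}{FA}=1,
  \qquad\text{i.e.}\qquad
  \frac{OA_0}{OT}=\frac{FA}{FT}.
\]
Since $OA_0=a$ (as $A_0\in\mathcal C$), the target becomes $a/OT=FA/FT$. This is a ratio statement about the fixed points $T,O,F$ on $\ell$ and the tangent line $TA$, and I will obtain it from a single pair of similar triangles built on the point $J$.

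\emph{Step 1 (locate $J$).} First I establish that $J$ lies on the tangent line $TA$ and that $OJ\parallel FA$. By definition $J=AT\cap F'F''$, so $J$ lies on $TA$. Next, $O$ is the midpoint of $FF'$, and by the preceding bisector lemma $J$ is the midpoint of $F'F''$. Hence $OJ$ is a midline of $\triangle F'FF''$, so $OJ\parallel FF''$. The line $FF''$ is the line $FA$, because $F''$ lies on ray $AF$ extended. The same midline gives $OJ=\tfrac12 FF''=\tfrac12(FA+AF')=a$, using the focal sum $FA+AF'=2a$. This is exactly the fact recorded just before the statement, that $J\in\mathcal C$.

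\emph{Step 2 (similar triangles).} The points $O$ and $F$ lie on $\ell$, which passes through $T$. The points $J$ and $A$ lie on the tangent line, which also passes through $T$. Because $OJ\parallel FA$, the intercept theorem gives $\triangle TOJ\sim\triangle TFA$, so
\[
  \frac{OJ}{OT}=\frac{FA}{FT}.
\]
Substituting $OJ=a=OA_0$ yields the displayed identity, and with it \eqref{eq:matching-drops-claim}.

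\emph{Step 3 (degenerate positions).} Two positions of $A$ need separate treatment.
\begin{itemize}
  \item If $A$ is an endpoint of the major axis, then $T=A=A_0$. Both factors equal $1$ trivially, or the identity follows by continuity.
  \item If $A$ is an endpoint of the minor axis, the tangent is parallel to $\ell$ and $T$ is at infinity. I will handle this case by continuity in $A$, since both sides of \eqref{eq:matching-drops-limit} depend continuously on $A$. Alternatively, I will argue directly: both tangents are then horizontal, $D$ and $D_0$ are obtained by central projection from $F$ and $O$ respectively, and $FA=OA_0=a$, so the same ratio argument applies with parallel lines.
\end{itemize}

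I expect the main obstacle to be justifying the collinearity facts of Step 1: that $J$ is the midpoint of $F'F''$, which comes from the tangent bisecting the focal angle, and that $F''$ lies on line $FA$. These are the inputs that turn the parallelism $OJ\parallel FA$ into the similarity of Step 2; once they are in place, the rest is one application of the intercept theorem.
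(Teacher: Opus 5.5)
Your proof follows the paper's: you use \eqref{eq:matching-drops-limit} to reduce \eqref{eq:matching-drops-claim} to $OA_0/OT=FA/FT$, and you get that identity from $\triangle TOJ\sim\triangle TFA$, using $OJ\parallel FA$ and $OJ=a=OA_0$ from the midline through $J$. Step~1 only re-derives the facts about $J$ that the paper states just before the lemma, and Step~3 handles the axis-endpoint positions the paper skips; at the minor-axis endpoint, rely on your direct argument rather than continuity, since \eqref{eq:matching-drops-limit} is not derived when $T$ is at infinity.
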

\begin{proof}
  Now $F,O,T\in\ell$, and by construction $J\in AT$ and $OJ\parallel FA$. Hence triangles $\triangle TFA$ and $\triangle TOJ$ are similar. Therefore
  \[
    \frac{FT}{FA}=\frac{OT}{OJ}.
  \]
  Since $A_0,J\in\mathcal C$ we have $OA_0=OJ=a$, and consequently the right-hand side of \eqref{eq:matching-drops-limit} becomes
  \[
    \frac{OA_0}{OT}\cdot\frac{FT}{FA}=\frac{a}{OT}\cdot\frac{OT}{a}=1.
  \]
  This proves \eqref{eq:matching-drops-claim}.
\end{proof}

\subsubsection{Transporting swept area from the ellipse to the circle}
\label{sec:transport-and-sagitta}

Let $A$ be the current point on the ellipse and $B$ the position after a small time increment $\Delta t$. Let the swept area about the focus be
\begin{equation}
  S = \operatorname{area}(\triangle F A B)=\frac{L}{2}\,\Delta t.
\end{equation}

We compare this to a corresponding small triangle on the circle. Let
\begin{equation}
  S_0 := \operatorname{area}(\triangle O A_0 B_0),
\end{equation}
Two geometric scalings relate $S$ and $S_0$:
\begin{enumerate}
  \item Height scaling from $O$ to $F$. In the construction used in our discussion, the line through $F$ parallel to a fixed circle radius implies that the perpendicular height from $F$ to the chord $AB$ is $(r/a)$ times the perpendicular height from $O$ to the same chord direction. Hence
    \begin{equation}
      \operatorname{area}(\triangle F A B)=\frac{r}{a}\,\operatorname{area}(\triangle O A B).
    \end{equation}

  \item Affine area scaling. Since $\Phi$ scales area by $b/a$,
    \begin{equation}
      \operatorname{area}(\triangle O A B)=\frac{b}{a}\,\operatorname{area}(\triangle O A_0 B_0)=\frac{b}{a}\,S_0.
    \end{equation}
\end{enumerate}

Combining,
\begin{equation}
  S=\operatorname{area}(\triangle F A B)=\frac{r}{a}\cdot\frac{b}{a}\,S_0=\frac{rb}{a^2}\,S_0,
\end{equation}
so
\begin{align}
  S_0 &= \frac{a^2}{rb}\,S \label{eq:s0-from-area}
  = \frac{a^2}{rb}\cdot \frac{L}{2}\,\Delta t
  = \frac{L a^2}{2 b r}\,\Delta t.
\end{align}

\subsubsection{Circle tangent and sagitta}
\label{subsec:sagitta}

As $B_0\to A_0$, \eqref{eq:s0-from-area} gives
\begin{equation}\label{eq:a0d0}
  A_0D_0 = \frac{2S_0}{a}=
  \frac{2}{a}\cdot\frac{L a^2}{2 b r}\,\Delta t
  =\frac{L a}{b r}\,\Delta t.
\end{equation}

Now use elementary circle geometry relating the \emph{normal drop} from the circle to its tangent (the sagitta). For a circle of radius $a$, the perpendicular distance from the circle point reached to the tangent line satisfies (to leading order)
\begin{equation}
  B_0D_0 \approx \frac{(A_0D_0)^2}{2a}.
\end{equation}
Substituting \eqref{eq:a0d0} gives
\begin{equation}
  B_0D_0 \approx \frac{1}{2a}\left(\frac{L a}{b r}\Delta t\right)^2
  = \frac{L^2 a}{2 b^2 r^2}\,(\Delta t)^2.
\end{equation}

\subsubsection{Inverse problem Proof 1 (ellipse via affine transported drops)}
\label{subsec:proof1-ellipse-affine}

\begin{proposition}[Inverse problem Proof 1: ellipse via affine transported drops]
\label[proposition]{prop:proof1-ellipse-inverse-square}
Under the area-law setup of Section~2 and the ellipse geometry developed in this section, the acceleration is focus-directed and has inverse-square magnitude:
\begin{equation}
  \mathbf a_F(r) = -\mu\,\frac{\mathbf r}{r^3},
  \qquad
  \text{i.e. }\mathbf a_F(r)=-\frac{\mu}{r^2}\,\hat r,
  \qquad
  \mu=\frac{L^2 a}{b^2}=\frac{L^2}{p},
\end{equation}
where $p=b^2/a$ is the semi-latus rectum and $\mathbf r$ is the position vector from $F$.
\end{proposition}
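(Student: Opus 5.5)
The plan is to chain the three ingredients already assembled in this section. Direction comes first: by the area theorem of \Cref{subsec:central-direction}, constant areal speed about $F$ forces the acceleration to point along $BF$, i.e.\ toward the focus. This leaves only the magnitude, which will be read off from \Cref{lem:bd-over-br2} once we compute $\kappa=\lim BD/BR^2$ for the ellipse.

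For $BD$ I would use \Cref{lem:matching-drops}: the ellipse drop $BD$ from the tangent $TA$ (measured along $FB$) is asymptotically equal to the circle drop $B_0D_0$. The sagitta estimate of \Cref{subsec:sagitta} then gives
\[
  BD\sim B_0D_0\approx \frac{L^2 a}{2b^2r^2}(\Delta t)^2 .
\]
For $BR$, \Cref{lem:bd-over-br2} already supplies $BR\sim L\Delta t/r$ from the area law about $F$. Dividing, the factors $(\Delta t)^2$ and $r^2$ cancel:
\[
  \kappa=\lim_{\Delta t\to0}\frac{BD}{BR^2}
  =\frac{L^2a/(2b^2r^2)}{L^2/r^2}=\frac{a}{2b^2}.
\]
Hence $\mu=2\kappa L^2=L^2a/b^2=L^2/p$ with $p=b^2/a$. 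This is independent of the position $A$ on the orbit, so $a_F=\mu/r^2$ with a single constant. Combined with the focal direction, this gives $\mathbf a_F=-\mu\mathbf r/r^3$.

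The main obstacle is making sure that the two ``drops'' being compared are the same geometric quantity that \Cref{lem:deflection-identity} calls the deflection. In \Cref{lem:bd-over-br2}, $BD$ is the displacement along the line to the center $F$ between the tangent continuation and the actual position. In the construction, $D=TA\cap FB$ is exactly on that line, while on the circle $D_0=TA_0\cap OB_0$ lies along the radius, which is where the sagitta formula applies. I would therefore check explicitly that $B_0D_0$ measured along $OB_0$ agrees to leading order with the perpendicular sagitta; for a circle this holds because the radius is normal to the tangent in the limit.

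A second point needing care is that $A_0D_0$ in \eqref{eq:a0d0} is the tangent length, obtained from $2S_0/a$ as the triangle base over height $a$. I would justify this by noting that $\operatorname{area}(\triangle OA_0D_0)$ and $S_0$ agree to first order as $B_0\to A_0$. Only leading-order equalities are used throughout, which is consistent with the ``ultimate ratio'' convention of Section~2.
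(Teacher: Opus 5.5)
Your proposal is correct and follows the paper's proof essentially step for step. \Cref{lem:matching-drops} plus the circle sagitta give $BD\sim L^2a(\Delta t)^2/(2b^2r^2)$, the area law gives $BR\sim L\Delta t/r$, so $\kappa=a/(2b^2)=1/(2p)$; \Cref{lem:bd-over-br2} then yields $\mu=L^2/p$, with the focal direction taken from Section~2. Your two extra checks (the radial drop versus the perpendicular sagitta, and $A_0D_0=2S_0/a$ to first order) are valid leading-order justifications that the paper leaves implicit.
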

\begin{proof}
From \Cref{lem:matching-drops} and the sagitta estimate above,
\[
  BD\sim B_0D_0\sim \frac{L^2 a}{2 b^2 r^2}\,(\Delta t)^2.
\]
Also, by the area-law relation (equivalently, the first asymptotic in \Cref{lem:bd-over-br2}),
\[
  BR\sim \frac{L}{r}\,\Delta t.
\]
Therefore
\[
  \frac{BD}{BR^2}\to \frac{a}{2b^2}=\frac{1}{2p},
  \qquad \left(p=\frac{b^2}{a}\right).
\]
Applying \Cref{lem:bd-over-br2} with $\kappa=\frac{1}{2p}$ gives
\[
  a_F=\frac{2\kappa L^2}{r^2}
  =\frac{L^2}{p}\cdot\frac{1}{r^2}
  =\frac{L^2 a}{b^2}\cdot\frac{1}{r^2}.
\]
Combining this magnitude with Section~2's central-direction conclusion yields
\[
  \mathbf a_F(r) = -\mu\,\frac{\mathbf r}{r^3},
  \qquad
  \text{i.e. }\mathbf a_F(r)=-\frac{\mu}{r^2}\,\hat r,
  \qquad
  \mu=2\kappa L^2=\frac{L^2a}{b^2}=\frac{L^2}{p}.
\]
For the areal constant $k$ with $L=2k$, this is equivalently $\mu=4k^2/p$.
\end{proof}

\subsubsection{Section Summary}
This section presents an alternative geometric route to the inverse problem in the elliptic case. The key mechanism is affine transport between the auxiliary circle and the ellipse, which turns local circle-drop estimates into the corresponding ellipse estimates and then, through Section~2's lemmas, into the inverse-square force law.

It is natural to ask how this approach extends to other conic sections. Several attempts are possible, but some are not valid without additional structure. Our current view is that the affine transformation between ellipse and circle is the essential ingredient of this proof, and that an equally direct mapping is not immediately available for parabola and hyperbola.

\section{Alternative auxiliary-circle proofs of the inverse problem}
\label{sec:alternative-proof}

In this section we give more derivations of the same $1/r^2$ law, still using the auxiliary circle $C$ (center $O$, radius $a$), but emphasizing a different set of local ellipse facts so that the argument has as few moving parts as possible. Similar to Newton's methods, these approaches apply to other forms of conic sections as well, and we will discuss this further in \Cref{sec:other-conics}.

\begin{figure}[t]
  \centering
  \includegraphics[width=\linewidth]{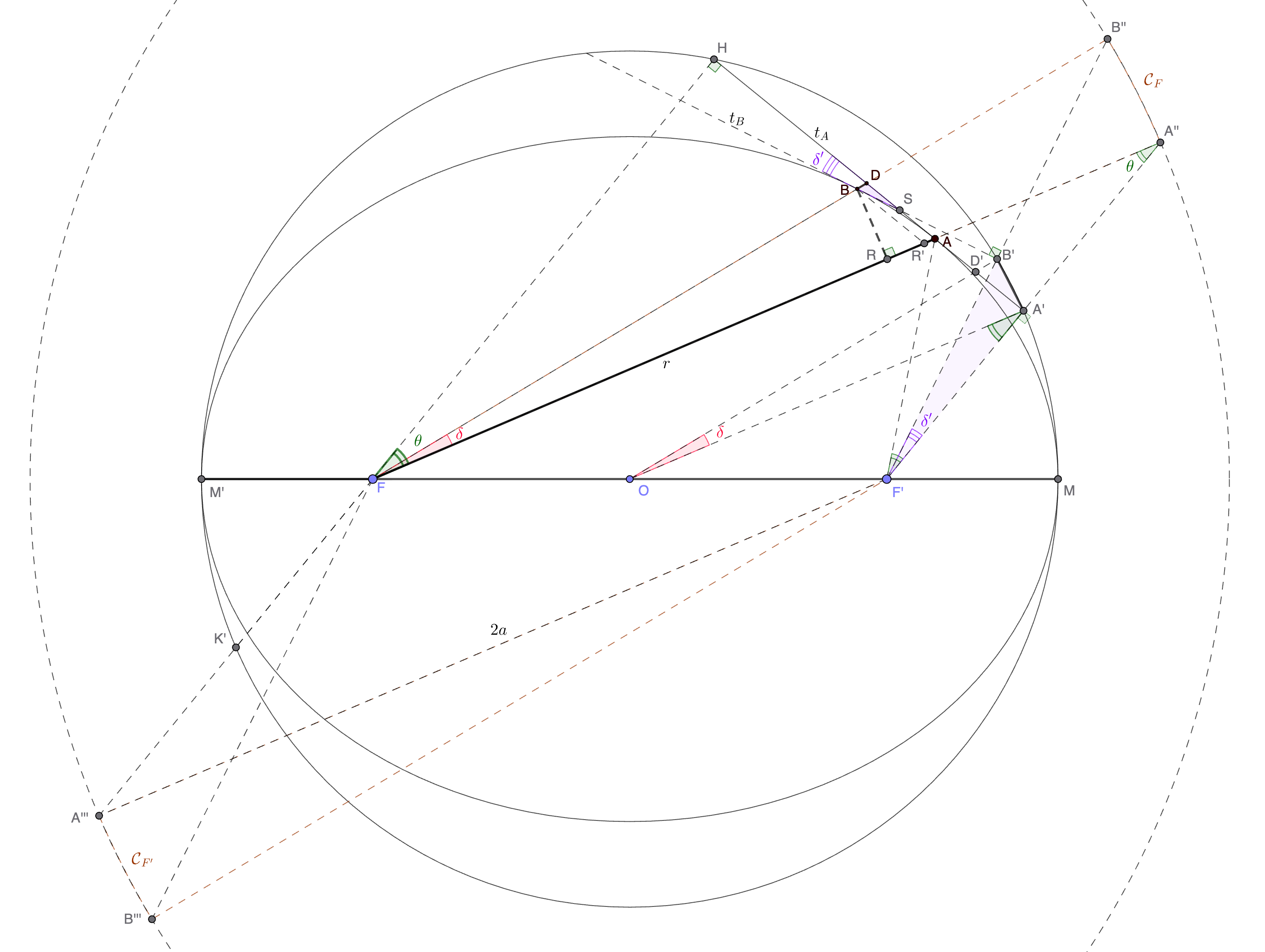}
  \caption{Alternative auxiliary-circle construction used in Section~\ref{sec:alternative-proof}.The angles $\delta$ and $\delta'$ are approaching 0 as $B\to A$. The directrix circles $\mathcal{C}_F$ and $\mathcal{C}_{F'}$ are shown for reference; they are not used in the proof but can replace the auxiliary circle as alternatives. All three circles are also alternative rotated hodographs as metnioned in \Cref{sec:forward-problem-hodograph}.}
  \label{fig:proof2-ellipse}
\end{figure}

\subsection{Geometric setup}

Let the ellipse have foci $F,F'$. Fix a point $A$ on the ellipse and write $r=FA$.
Let the tangent at $A$ meet the auxiliary circle again at $A'$. For a nearby point $B$ on the ellipse, let its tangent meet the same auxiliary circle at $B'$. We let $B\to A$.
Denote these tangent lines by $t_A$ and $t_B$, respectively; the instantaneous velocity vectors $\mathbf v_A$ and $\mathbf v_B$ are directed along $t_A$ and $t_B$.

Let $D$ be the intersection point on the tangent construction (as in \Cref{fig:proof2-ellipse}), and let $R$ be the foot used to form the small segment $BR$ (the small transverse piece in the right triangle at $R$).

\begin{lemma}[Product identity]
  \label[lemma]{lem:product-identity}
  With the notation of \Cref{fig:proof2-ellipse},
  \begin{equation}
    FH\cdot F'\!A'=(a-c)(a+c)=b^2,
  \end{equation}
  where $c=OF$ is the focal distance (so $a^2=b^2+c^2$).
\end{lemma}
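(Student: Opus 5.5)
The plan is to read the statement as the classical \emph{product of focal perpendiculars}: $H$ is the foot of the perpendicular from $F$ to the tangent $t_A$, and $A'$ is the second point where $t_A$ meets the auxiliary circle $\mathcal C$, which I will identify with the foot of the perpendicular from $F'$ to $t_A$. The identity then follows from the power of the point $F$ with respect to $\mathcal C$.

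First I place both feet on $\mathcal C$. For $F'$ this was already done in Section~3. There $J=AT\cap F'F''$ satisfies $AT\perp F'F''$, so $J$ is the foot of the perpendicular from $F'$ to $t_A$, and $OJ=a$. Since $t_A$ meets $\mathcal C$ in at most two points, $J$ is one of them. The same construction with the roles of $F$ and $F'$ exchanged works: extend $F'A$ beyond $A$ to $F^\ast$ with $AF^\ast=AF$, and use the bisecting property of the tangent. This shows that the foot $H$ of the perpendicular from $F$ to $t_A$ satisfies $OH=a$. So $t_A\cap\mathcal C=\{H,J\}$, and $A'=J$ with $F'A'\perp t_A$. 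I will state this identification explicitly, because it is the only point where the figure's notation must be pinned down.

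Next I use the central symmetry about $O$. Both $FH$ and $F'A'$ are perpendicular to $t_A$, so they are parallel. Let the line $FH$ meet $\mathcal C$ again at $H^\ast$. The half-turn about $O$ fixes $\mathcal C$, swaps $F$ and $F'$, and maps the line $FH$ to the parallel line through $F'$, which is the line $F'A'$. Hence it maps $H^\ast$ to a point of $\mathcal C$ on the line $F'A'$. I need to check that this image is $A'$ itself rather than the other intersection. The chord of $\mathcal C$ on the line $FH$ has endpoints $H$ and $H^\ast$, and $F$ lies between them. The half-turn sends $H$ to the point of $\mathcal C$ on line $F'A'$ that lies on the side opposite to $t_A$, and it sends $H^\ast$ to the point on the same side as $t_A$, which is $A'$. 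Therefore $FH^\ast=F'A'$.

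Finally I compute the power of $F$ with respect to $\mathcal C$. Because $F$ is inside $\mathcal C$ at distance $c$ from $O$, the intersecting-chords theorem, applied to the diameter through $F$ along $\ell$, gives
\[
FH\cdot FH^\ast=(a-c)(a+c)=a^2-c^2=b^2 .
\]
Substituting $FH^\ast=F'A'$ yields $FH\cdot F'A'=(a-c)(a+c)=b^2$. I expect the main obstacle to be the bookkeeping in the symmetry step: showing that the half-turn sends $H^\ast$ to $A'$ and not to the antipodal intersection. I will settle it by the sidedness argument above, namely that $F$ separates $H$ from $H^\ast$ on their chord.
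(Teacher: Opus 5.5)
Your proposal is correct and is essentially the paper's proof. Your $H^\ast$ is the paper's $K'$, and the paper likewise gets $F'A'=FK'$ from $FK'\parallel F'A'$ (equivalently, the central symmetry about $O$), then applies the power of $F$ with respect to $\mathcal C$ to obtain $FH\cdot FK'=(a-c)(a+c)=b^2$. Your extra steps only make explicit what the paper reads off from the figure: placing both perpendicular feet on $\mathcal C$ via the tangent-bisector construction, and checking that the half-turn sends $H^\ast$ to $A'$ rather than to the antipode of $H$ (which tacitly uses that both foci lie on the same side of $t_A$).
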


\begin{proof}
  Since $FK' \parallel F'A'$ in the construction, triangles with corresponding sides along these parallels are similar, and the segment on the parallel through $F$ has the same length as the corresponding segment on the tangent through $F'$. In particular,
  \begin{equation}
    F'\!A' = FK'.
  \end{equation}
  Therefore
  \begin{equation}
    FH\cdot F'\!A' = FH\cdot FK' = (a-c)(a+c)=a^2-c^2=b^2,
  \end{equation}
  which is the standard ellipse identity.
\end{proof}

\subsection{Iverse problem Proof 2 of the ellipse case}

\begin{proof}
As $B\to A$, the little segments cut off by nearby tangents/secants agree to first order; thus the similar-triangle relations used below are valid in the Newtonian (1687) ``ultimate ratio'' sense (any error is higher order and vanishes in the limit).

Similarity in \Cref{fig:proof2-ellipse} gives
\begin{equation}
  \triangle SDB \sim \triangle F'\!A'B'
  \qquad\Longrightarrow\qquad
  BD=SD\cdot\frac{A'B'}{A'F'}.
\end{equation}
By \Cref{lem:deflection-identity} in Section~\ref{subsec:central-direction-lemmas}, $SD=\frac12\,AD$, hence
\begin{equation}
  BD=\frac12\,AD\cdot\frac{A'B'}{A'F'}.
\end{equation}

From focal-geometry and auxiliary-circle similarities (equivalently, affine scaling ellipse $\leftrightarrow$ circle),
\begin{equation}
  \begin{aligned}
    AD&=\frac{AF}{FH}\,BR=\frac{r}{FH}\,BR,\\
    A'B'&=\frac{OA'}{F'\!A}\,BR=\frac{a}{r}\,BR.
  \end{aligned}
\end{equation}
\begin{equation}
  BD=\frac12\left(\frac{r}{FH}BR\right)\left(\frac{a}{r}BR\right)\frac{1}{A'F'}
  =\frac12\,a\,\frac{BR^2}{FH\cdot A'F'}.
\end{equation}
By \Cref{lem:product-identity}, $FH\cdot A'F'=b^2$, hence
\begin{equation}
  BD=\frac12\,a\,\frac{BR^2}{b^2}.
\end{equation}
Using $p=b^2/a$,
\begin{equation}
  BD=\frac{1}{2p}\,BR^2,
  \label{eq:proof2-bd-br2}
\end{equation}
so
\[
  \frac{BD}{BR^2}\to \frac{1}{2p}.
\]
Applying \Cref{lem:bd-over-br2} with $\kappa=\frac{1}{2p}$ gives immediately
\[
  a_F=\frac{2\kappa L^2}{r^2}=\frac{L^2}{p}\frac{1}{r^2},
  \qquad
  \mathbf a_F(r)=-\mu\,\frac{\mathbf r}{r^3},
  \qquad
  \text{i.e. }\mathbf a_F(r)=-\frac{\mu}{r^2}\,\hat r,
  \qquad
  \mu=2\kappa L^2=\frac{L^2}{p}.
\]
\end{proof}

\subsection{Inverse problem Proof 3 (hodograph-style computation)}
\label{subsec:proof3-hodograph}

From \Cref{lem:product-identity},
\begin{equation}
  F'\!A'=\frac{b^2}{FH}.
\end{equation}
Let $v=\lvert \mathbf v\rvert$ be the speed at $A$. Since the velocity direction is tangent to the orbit and $FH$ is perpendicular to that tangent (as in the construction), the areal-rate constant gives
\begin{equation}
  FH\cdot v = L.
\end{equation}
Combining,
\begin{equation}
  F'\!A'=\frac{b^2}{FH}=\frac{v\,b^2}{L}.
\end{equation}
For two nearby points $A,B$ we therefore have, in the Newtonian small-step sense,
\begin{equation}
  A'B'=\Delta(F'\!A')=\frac{b^2}{L}\,\Delta v
  =\frac{b^2}{L}\,a_F\,\Delta t,
\end{equation}
where $a_F$ is the magnitude of the (focus-directed) acceleration and we used \Cref{lem:deflection-identity} ($\Delta v=a_F\Delta t$).
On the other hand, from the auxiliary-circle similarity used above,
\begin{equation}
  A'B'=\frac{OA'}{F'\!A}\,BR=\frac{a}{r}\,BR.
\end{equation}
Eliminating $A'B'$ gives
\begin{equation}
  a_F=\frac{L}{b^2}\,\frac{A'B'}{\Delta t}.
\label{eq:proof3-af-from-ab}
\end{equation}
Next, using again the auxiliary-circle similarity $A'B'=\frac{a}{r}BR$ and \Cref{lem:bd-over-br2} ($BR\sim L\Delta t/r$), we have
\begin{equation}
  A'B'=\frac{a}{r}\,\frac{L\,\Delta t}{r}
  =\frac{a}{r^2}\,L\,\Delta t.
\label{eq:proof3-ab-from-br}
\end{equation}
Combining \eqref{eq:proof3-af-from-ab} and \eqref{eq:proof3-ab-from-br} yields
\begin{equation}
  a_F=\frac{L^2}{p}\,\frac{1}{r^2},
  \qquad
  \mathbf a_F(r)=-\mu\,\frac{\mathbf r}{r^3},
  \qquad
  \text{i.e. }\mathbf a_F(r)=-\frac{\mu}{r^2}\,\hat r,
  \qquad
  \mu=\frac{L^2}{p}.
\end{equation}
exactly the same inverse-square formula as in \Cref{prop:proof1-ellipse-inverse-square}.
\hfill$\square$

\subsection{Discussion: directrix circle and a hodograph viewpoint}
\label{subsec:sec6-discussion}

\paragraph{Directrix circle variant.}
One can re-run essentially the same similarity argument by replacing the auxiliary circle $\mathcal C$ (center $O$, radius $a$) with the \emph{directrix circle} centered at $F$ of radius $2a$ (cf.\ the construction used in Section~3).
Extending $FA$ to meet this directrix circle at $A''$, one has $A''A'F'$ colinear; similarly, extending $FB$ meets the directrix circle at $B''$ with $B''B'F'$ colinear.
Moreover, $A'$ and $B'$ are midpoints of the segments $A''F'$ and $B''F'$, respectively, so
\begin{equation}
  A''B'' = 2\,A'B'.
\end{equation}
Consequently,
\begin{equation}
  \triangle A'B'F' \sim \triangle A''B''F',
\end{equation}
and the same chain of similar-triangle identities yields the same ultimate-ratio estimate for the normal drop $BD$, hence the same inverse-square dependence.

\paragraph{Connection to Maxwell's hodograph.}
The product identity in \Cref{lem:product-identity} (already implicit in classical ellipse geometry) also appears in Maxwell's 1877 discussion of the \emph{hodograph} of planetary motion: for an inverse-square central force the velocity vector $\mathbf v(t)$ traces a circle in velocity space (Hamilton's 1847 hodograph), and Maxwell notes that this hodograph circle is similar to the directrix circle, with its ``speed origin'' at $F'$, after a \(-90^\circ\) rotation (clockwise, i.e.\ opposite to the orbital sense) and a scaling that matches the directrix-circle size \cite{maxwell1877mattermotion}.

From this point of view, our construction may be read as relating the geometric drop $BD$ directly to a rotated hodograph circle $C$.
This suggests that the previous Proof 3 is essentially a repetition of Maxwell's approach to the inverse problem, but with the auxiliary circle $\mathcal{C}$ playing the role of the hodograph circle instead of the directrix circle. All three circles are alternative rotated hodographs for the same inverse-square dynamics, and any of them can be used in a similar way to relate the geometric drop $BD$ to the velocity change $\Delta v$. Moreover, the same auxiliary-cicle can also have two different interpretations treating either F or F' as the velocity origin, this is by noting that K' as the image of A' under the symmetry about center O. As argued in \cite{carinena2016newlook}, argued the two directrix circles can both be viewed as the transformed hodograph with either force F as the origin or F' as the origin; and circle $\mathcal{C}_F'$F being the rotated hodograph around focus F has additional advantage with the common polar origin. In the auxiliary-circle construction, either directrix circle is simply a scaled version of the auxiliary circle with the scale center being either F or F'.

% !TEX root = ../../main.tex
\section{Other conic sections}
\label{sec:other-conics}

So far we have focused on the elliptic (bound) case. In the same inverse-square central-force setting, the remaining conic trajectories are the parabola (threshold case) and the hyperbola (unbound case), together with one degenerate limit discussed first.

\subsection{Rectilinear motion as a degenerate conic limit}
\label{subsec:strange-line}

This extreme limit corresponds to rectilinear motion. It can be viewed as a collapsed ellipse with $a=c$ and $b=0$, i.e., a degenerate conic with zero areal speed about the force center. That case is outside the present scope: here we use nondegenerate conic geometry together with nonzero constant areal speed to solve the inverse problem. For a straight-line orbit, many different centripetal laws can produce the same geometric path, and the real task is to determine the time parametrization of motion along the line. We therefore omit this case, though Newton discusses it in detail in the \emph{Principia} using geometric parametrizations.

\subsection{Hyperbola}
\label{subsec:hyperbola}

For an inverse-square central force, hyperbolic trajectories correspond to unbound motion: the point mass comes in from infinity and escapes back to infinity, with the force center located at a focus of the hyperbola.

See \Cref{fig:hyperbola} for the auxiliary-circle construction used to carry out the same inverse-square argument in the hyperbola case.
\begin{figure}[!t]
  \centering
  \includegraphics[width=0.9\linewidth]{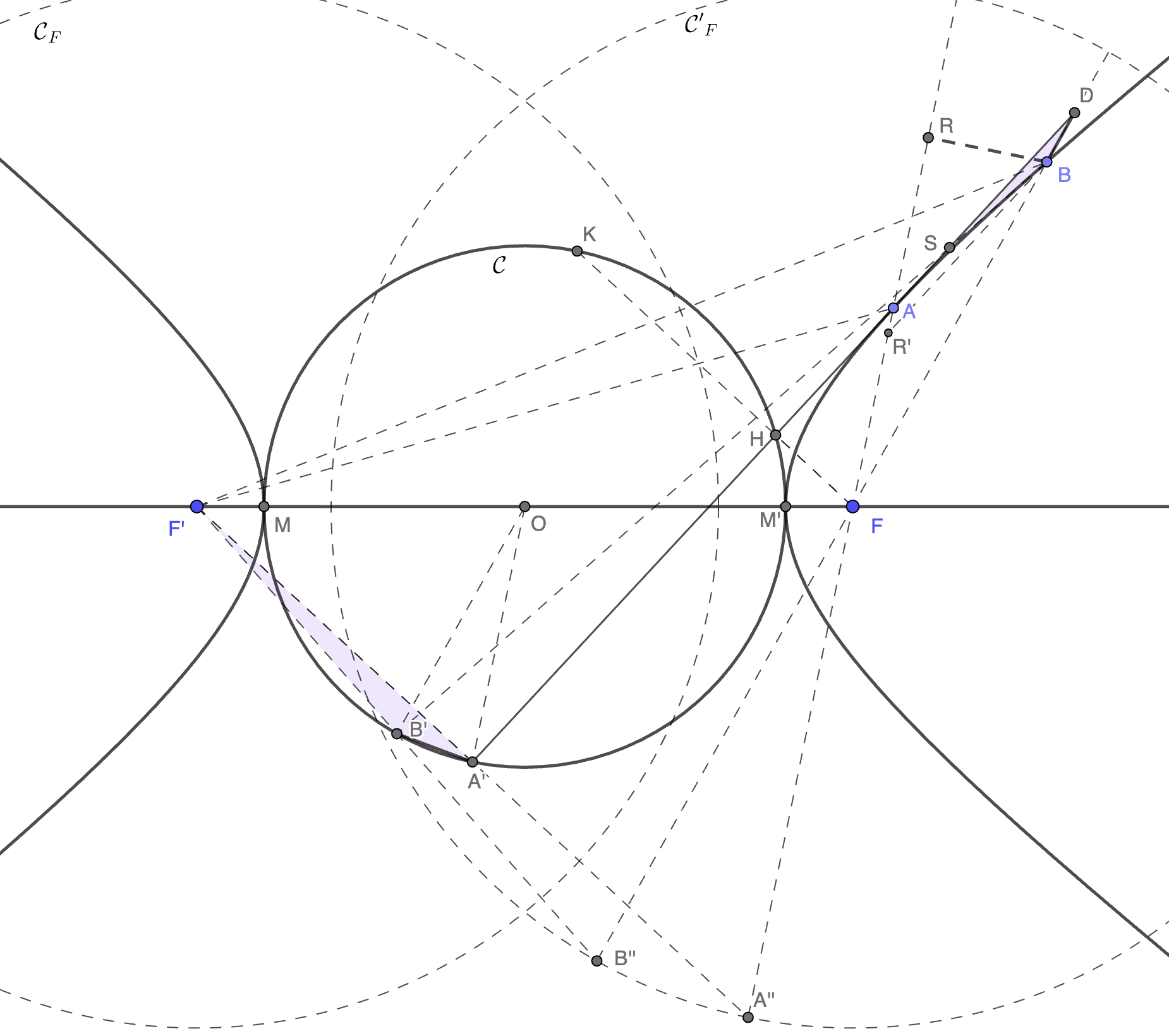}
  \caption{Auxiliary-circle construction for the hyperbola case.}
  \label{fig:hyperbola}
\end{figure}
In this hyperbola setup, one may again view the auxiliary circle $\mathcal C$ as providing a convenient geometric proxy for the (circular) hodograph, with $F'$ taken as the velocity origin.
In this hyperbola convention, this proxy picture must be rotated by \(+90^\circ\) (along the planet's direction of motion about the force center) to match the velocity direction in the hodograph construction. Compared with the ellipse \((-90^\circ)\) convention in Section~\ref{subsec:sec6-discussion}, this sign difference can be unified by convention: keep one rotation direction and absorb the sign into the scaling factor (ellipse uses a negative scale), as discussed in Section~\ref{subsec:proof2f-discussion}.

Likewise, the directrix circle $\mathcal{C}_{F'}$ may be regarded as a $2\times$ scaling of the auxiliary circle $\mathcal C$ with scaling center $F'$; consequently it can also be interpreted as a scaled version of the hodograph, obtained here after the same \(+90^\circ\) rotation.

The same key product identity (\Cref{lem:product-identity}) also holds in this hyperbola setting, though processed differently:
\begin{equation}
  F'\!A'\cdot FH = FH\cdot FK = F\!M'\cdot FM = (c-a)(c+a)=c^2-a^2=b^2,
\end{equation}
using the same tangent/intercept relations (here $c=OF$ and $a^2=b^2+c^2$).

Similarly, in the analogue of inverse problem Proof~2 the computation is again driven by the same pair of similar triangles $\triangle F'\!A'B'$ and $\triangle SDB$.

There is also a ``swapped-focus'' situation leading to the same right-branch hyperbola: instead of an attractive (centripetal) force toward $F$, one may consider a centrifugal force directed away from $F'$ acting on the particle as it moves from $A$ to $B$.
In that case to carry out the proof, the roles of $F$ and $F'$ are interchanged, and the point $R$ is dropped onto the extension of the new line $FA$.
The points $H$ and $A'$ also exchange roles, $B'$ moves to the opposite side of the auxiliary circle $\mathcal C$, and $D$ becomes the intersection of the tangent line at $A$ with the new line $FB$ (namely, $F'B$ as in \Cref{fig:hyperbola}). Readers should be able to construct the modified plot accordingly without difficulty, and the proof then follows in exactly the same way.

So once \Cref{fig:hyperbola} is in place, the remaining steps corresponding to our Proofs~2 and~3 for the hyperbola follow with only minor, mostly notational, changes. In particular, once the hyperbola geometry yields a constant limit for $BD/BR^2$, \Cref{lem:bd-over-br2} gives the inverse-square form directly.

\subsection{Parabola}
\label{subsec:parabola}

\subsubsection{Parabola as a limiting conic and hodograph setup}
\label{subsubsec:parabola-general}

The parabolic trajectory is the boundary between bound (elliptic) and unbound (hyperbolic) motion.
It is also distinguished geometrically as the conic section obtained when the cutting plane is parallel to a generating edge of the cone.

One can also view a parabola as a limiting ellipse in which $a,c\to\infty$ while $(a\!-\!c)$ stays finite.
Indeed, since $p=b^2/a=(a+c)(a-c)/a$, in this limit one has $p\to 2(a-c)$.
It is therefore natural to parametrize the parabola by setting $a-c$ as a single parameter $a'$, so that $p=2a'$, where $a'$ is the focal length (the vertex-to-focus distance). From this perspective, the auxiliary circle degenerates to the tangent line at the vertex of the parabola: at the vertex, this infinite-radius circle is a line perpendicular to the symmetry axis $AF$ (with $F$ the remaining focus).

This is consistent with the polar form: the eccentricity satisfies $e=c/a=1$ for a parabola. One can similarly think of the parabola as an approximation/limit of hyperbolas as $a,c\to \infty$ while keeping $(c\!-a)$ constant.

As the eccentricity of an ellipse tends to 1, the second focus recedes to $\infty$ along the axis; in the limit it becomes a remote focus. Hence the “directrix circle” construction from the ellipse case no longer survives as a literal circle centered at the observed focus.
Yet the effect of that construction does survive: a circle of infinite radius, centered at the remote focus, degenerates into a straight line perpendicular to the axis—namely the parabola’s directrix. In this limiting viewpoint the directrix may be regarded as the remnant of the former directrix circle, positioned so that it meets the axis
$AF$ at a point symmetrically placed with respect to the vertex
$A$ (in the same sense as in the elliptical construction). Thus, while an ellipse and a hyperbola each have two directrix circles and two directrix lines, a parabola has only one observable directrix ``circle'': the directrix line. The other lies at infinity and is not meaningful in this construction. The directrix line intersects the major axis at $2a-(a+c)=a-c=a'$ from the vertex, so it lies opposite the focus at the same distance from the vertex.

However, we will also show that a parabola still admits several natural associated circles that encode useful geometry. In particular, although the earlier auxiliary circle and directrix circle have both degenerated into a line, one may still construct rotated hodograph circles with various scalings and velocity origins, similar to those used for the ellipse and hyperbola. These remain genuinely circular and retain the kinematic and geometric information needed for our proofs. Since a parabola has only one degree of freedom $a'$, there are more relationships among angles and distances than in the generic conic sections. The parabolic case is generally considered a simpler special case of the conic sections, but in some ways it is more subtle and requires special care, as we will see in the proofs below.

In this section we show that the analogues of inverse problem Proofs~2 and~3 from the ellipse case also hold for the parabola, with some special care.
We recommend reading Newton-style arguments (in the 1687 tradition), such as the historical collection of various properties of the parabola in \href{https://doi.org/10.5539/apr.v11n2p30}{\cite{stavek2019newtonsparabola}}.
A clear explanation of the hodograph for the parabolic case is given by Derbes (2001) \cite{derbes2001reinventing}. Derbes uses the circle centered at the focus $F$ with radius $2a'$ and shows that the Maxwell (1877)/Hamilton (1847) hodograph proof for the inverse problem extends naturally to the parabola.

\subsubsection{Parabola variant of inverse problem Proof 2: a constant \texorpdfstring{$BR^2/BD$}{BR squared over BD}}
\label{subsubsec:parabola-proof2}

\begin{figure}[t]
  \centering
  \includegraphics[width=\linewidth]{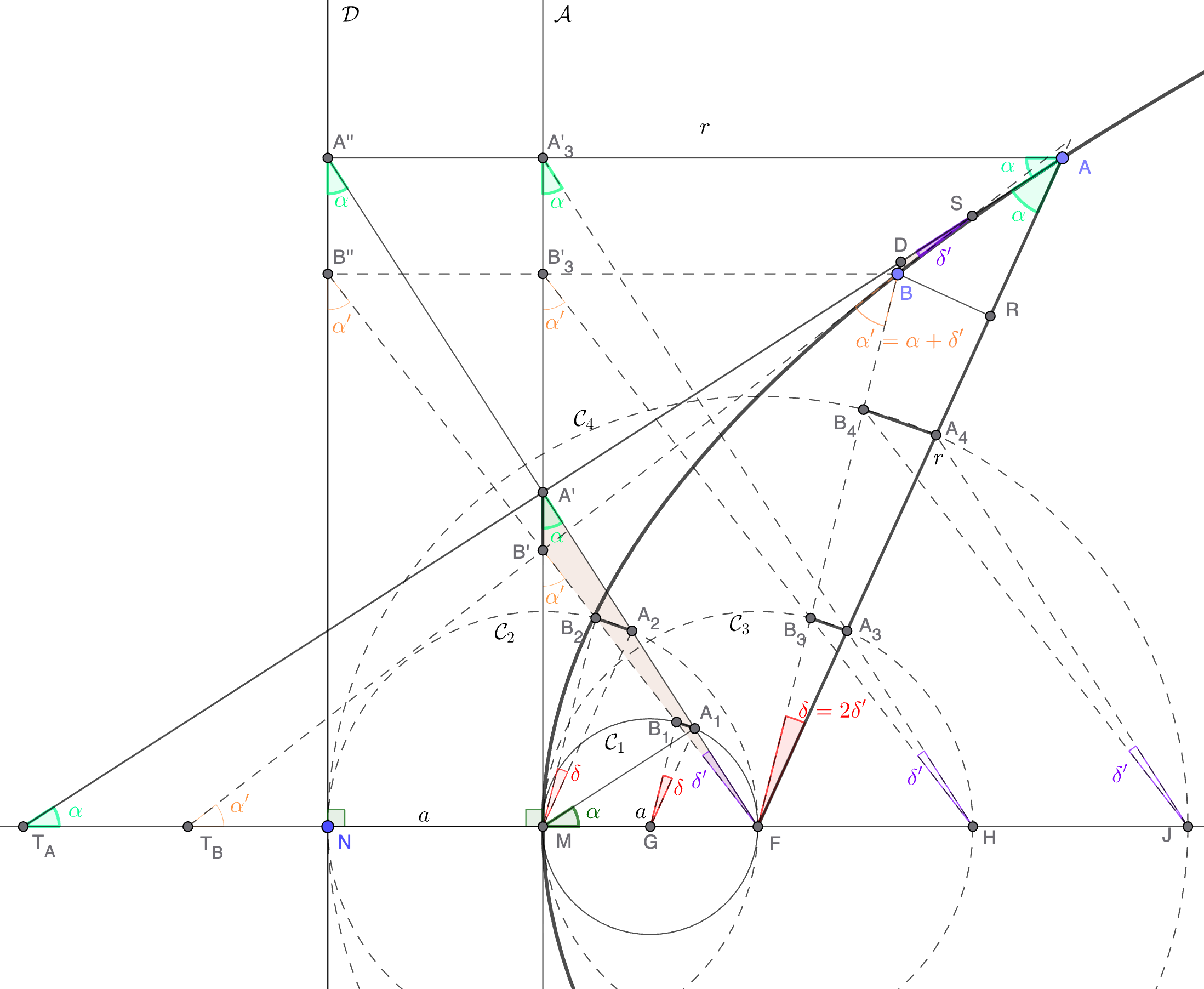}
  \caption{Parabola construction used in Section~\ref{subsec:parabola} (parabola analogue of inverse problem Proof~2).}
  \label{fig:parabola-proof2}
\end{figure}

This is the parabola analogue of inverse problem Proof~2 in the auxiliary-circle style. For convenience in the rest of the discussion we will simply use $a$ instead of $a'$ to denote the parabola parameter, so that the semi-latus rectum is $p=2a$.

\begin{proposition}[Parabola variant of inverse problem Proof~2]
\label[proposition]{prop:parabola-br2-over-bd}
In the parabolic construction (notation matching the ellipse case), as $B\to A$ one has
\begin{equation}
  \lim_{B\to A}\frac{BR^2}{BD}=4a,
\end{equation}
where $a$ is the parabola parameter (so the semi-latus rectum is $p=2a$).
\end{proposition}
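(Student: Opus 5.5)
The plan is to mirror the ellipse Proof~2, replacing the auxiliary circle by its degenerate limit. For the parabola that limit is the tangent line at the vertex $V$, perpendicular to the axis. The product identity (\Cref{lem:product-identity}) will be replaced by the classical parabola fact that the foot of the perpendicular from $F$ to any tangent lies on this vertex tangent.

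Concretely, let the tangent $t_A$ meet the vertex tangent at $A'$, and let $t_B$ meet it at $B'$. Then $FA'\perp t_A$, so $A'=H$ is the foot of the perpendicular from $F$ to $t_A$ and $FH=FA'$. The standard similar-triangle relation $\triangle FVA'\sim\triangle FA'A$ (right angles at $V$ and $A'$, common angle at $F$) then gives
\[
FA'^2 = FV\cdot FA = a\,r .
\]
This is the parabolic substitute for the identity $FH\cdot F'A'=b^2$.

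With this in hand I would run the same ultimate-ratio chain as Proof~2. By \Cref{lem:deflection-identity}, $BD=\tfrac12 AD\cdot(\text{angle between }t_A\text{ and }t_B)$ to leading order, since $BD$ is the offset of $B$ from the tangent measured along $FB$, and $S$ bisects $AD$. Two first-order relations are needed.
\begin{itemize}
\item From the area relation, exactly as in the ellipse case, $AD=\frac{r}{FH}\,BR$.
\item The turning angle of the tangent equals the angle $\angle A'FB'$. This holds because $FA'\perp t_A$ and $FB'\perp t_B$, so that angle is subtended at $F$ by the short segment $A'B'$ on the vertex tangent.
\end{itemize}
To compute that angle, I would use the focal property that the tangent bisects the angle between $FA$ and the axis direction. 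Hence the tangent turns by half the polar angle swept about $F$, i.e.\ by $\tfrac12\cdot BR/r$.

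Combining these gives
\[
BD\sim \tfrac12\cdot\frac{r}{FH}\,BR\cdot\frac{BR}{2r}=\frac{BR^2}{4\,FH}.
\]
Since $BD$ is measured along $FB$ rather than perpendicular to the tangent, I must include the correct obliquity factor $r/FH$ (secant of the angle between $FA$ and the normal). I would also check whether $AD$ already absorbs it. To make the bookkeeping consistent, I would calibrate at the vertex: there $r=FH=a$ and the parabola $y^2=4ax$ gives a perpendicular drop $x=y^2/(4a)$ with $BR=y$, so $BR^2/BD=4a$ exactly.

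For a general point, the expectation is that the factors $FH^2/(ar)$ from the identity $FA'^2=ar$ and the obliquity factor cancel. That cancellation is what yields a constant limit $4a=2p$, consistent with $\kappa=1/(2p)$ and \Cref{lem:bd-over-br2}.

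The main obstacle is this bookkeeping of first-order factors: getting $AD$, the turning angle, and the direction in which $BD$ is measured mutually consistent, so that $FA'^2=ar$ exactly cancels the $r$ and $FH$ dependence. If the synthetic chain becomes murky, my first fallback is to verify the identity $\lim BR^2/BD=2p$ using the polar form $r=p/(1+\cos\theta)$ in the Newtonian small-step sense, and then recast each factor as a similar-triangle ratio in \Cref{fig:parabola-proof2}.
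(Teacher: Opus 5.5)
Your ingredients are the right ones, and the cancellation you anticipate does occur. However, the proposal never actually performs it, and the one explicit formula you derive is wrong away from the vertex.

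\Cref{lem:deflection-identity} only gives $SD=\tfrac12 AD$. The quantity $SD\,\delta$ (with $\delta$ the turning angle of the tangent) is the \emph{perpendicular} drop of $B$ from $t_A$. It is not $BD$, which is measured along $FB$. So your displayed $BD\sim BR^2/(4FH)$ is false for $r\neq a$; it would make $BR^2/BD=4FH=4\sqrt{ar}$, which is not even constant.

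The question you leave open, whether $AD$ ``already absorbs'' the obliquity, has the answer no. The factor $r/FH$ in $AD=(r/FH)\,BR$ comes from the area of $\triangle FAD$ and converts transverse displacement into length along the tangent. The obliquity is a second, independent factor, coming from the law of sines in $\triangle SDB$:
\[
BD=\frac{SD\,\sin\delta}{\sin\angle SBD}\sim SD\,\delta\,\frac{r}{FH},\qquad \sin\angle SBD\to\frac{FH}{r}.
\]
With $SD=\tfrac{r}{2FH}BR$ and $\delta=BR/(2r)$ this yields $BD\sim r\,BR^2/(4FH^2)$. Only then does your identity $FH^2=FA'^2=ar$ give $BD\sim BR^2/(4a)$.

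Your proposed vertex calibration could not have detected the problem. At the vertex $r=FH=a$, so the missing factor $r/FH$ equals $1$, and the wrong and right formulas agree there.

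A smaller point: $\triangle FVA'\sim\triangle FA'A$ has no \emph{common} angle at $F$. You need $\angle VFA'=\angle A'FA$, i.e.\ that $FA'$ bisects $\angle VFA$. This follows from the reflection property you already use. $A'$ is the midpoint of $FP$, where $P$ is the mirror image of $F$ in $t_A$ and lies on the directrix. So in the isosceles $\triangle AFP$ one has $\angle AFA'=\angle APF$, and this equals $\angle A'FV$ because $AP\parallel FV$.

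Once repaired, your route differs from the paper's. The paper packs turning angle and obliquity into one similarity between $\triangle SDB$ and the small triangle $FB'A'$ on the vertex tangent: angle $\delta$ at $S$ and $F$, and a common angle $\alpha$ at $D$ and $B'$. It then resolves $FB'\sin\alpha=a$ and $DA\sin\alpha\sim BR$, and ends with $A'B'\sim\tfrac12 BR$. Your version instead factors $BD$ into three local pieces (area relation, half-angle turning, obliquity) plus the global identity $FH^2=ar$. The paper's step $A'B'\sim r\delta'$ uses that identity only tacitly, so your version makes it explicit.
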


\begin{proof}
As $B\to A$ (small angles $\delta,\delta'\to 0$), similar triangles in the figure give
\[
  \frac{A'B'}{BD}=\frac{B'F}{DS}\sim \frac{2\,B'F}{DA}.
  \qquad\text{(1)}
\]
Resolve the ratio with the common angle $\alpha$:
\[
  \frac{B'F}{DA}
  =
  \frac{B'F\sin\alpha}{DA\sin\alpha}
  \sim \frac{a}{BR}.
  \qquad\text{(2)}
\]
since $DA\sin\alpha\sim BR$ and $B'F\sin\alpha=a$. Substituting (2) into (1),
\[
  \frac{A'B'}{BD}\sim \frac{2a}{BR},
  \qquad\text{so}\qquad
  BD\sim \frac{A'B'\,BR}{2a}.
  \qquad\text{(3)}
\]
Also $A'B'\sim r\delta'$ and $BR=2r\delta'$, hence $A'B'\sim \tfrac12 BR$. Therefore
\[
  BD\sim \frac{(\tfrac12 BR)\,BR}{2a}=\frac{BR^2}{4a},
\]
so $\dfrac{BR^2}{BD}\to 4a$.

An equivalent shortcut (as suggested) is to use $\triangle SDB\sim\triangle FDS$, giving
\[
  DS^2\sim DB\cdot FS,\qquad DB\sim \frac{DS^2}{FS}.
\]
With $DS=\tfrac12 AD$ and the same $\alpha$-projection relations, one again obtains
\[
  DB\sim \frac{BR^2}{4a},
\]
hence the same limit $\dfrac{BR^2}{BD}\to 4a$.
\end{proof}

\Cref{prop:parabola-br2-over-bd} is the inverse-problem parabolic counterpart of \eqref{eq:proof2-bd-br2} in the elliptic setting. Since
\[
  \frac{BR^2}{BD}\to 4a
  \quad\Longleftrightarrow\quad
  \frac{BD}{BR^2}\to \frac{1}{4a},
\]
\Cref{lem:bd-over-br2} gives
\[
  a_F=\frac{2L^2}{r^2}\cdot\frac{1}{4a}
  =\frac{L^2}{2a}\,\frac{1}{r^2}
  =\frac{L^2}{p}\,\frac{1}{r^2}
  \qquad (p=2a),
\]
and therefore
\[
  \mathbf a_F(r)=-\mu\,\frac{\mathbf r}{r^3},
  \qquad
  \text{i.e. }\mathbf a_F(r)=-\frac{\mu}{r^2}\,\hat r,
  \qquad
  \mu=\frac{L^2}{p}
  \quad (p=2a).
\]

\subsubsection{Parabola variant of inverse problem Proof 3: direct hodograph-circle use}
\label{subsubsec:parabola-proof3}

\begin{proposition}[Parabola variant of inverse problem Proof~3]
Using the hodograph-circle quantities in \Cref{fig:parabola-proof2}, one again obtains
\[
  a_F=\frac{L^2}{p}\frac{1}{r^2},
  \qquad
  \mathbf a_F(r)=-\mu\,\frac{\mathbf r}{r^3},
  \qquad
  \mu=\frac{L^2}{p},
  \quad (p=2a).
\]
\end{proposition}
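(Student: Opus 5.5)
The plan is to mirror inverse problem Proof~3 of the ellipse case. We replace the product identity of \Cref{lem:product-identity} by its parabolic analogue, so that a segment on the rotated hodograph circle becomes proportional to the speed. We then equate two expressions for the small arc $A'B'$: one kinematic, through $\Delta v$, and one geometric, through $BR$.

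First, in \Cref{fig:parabola-proof2} we identify the hodograph proxy. It is a circle (Derbes' circle about $F$ of radius $2a$, or the half-scaled circle used in the figure) on which the point $A'$ associated with $A$ has distance from the velocity origin inversely proportional to the focal perpendicular $FH$ onto the tangent at $A$. For the parabola, the classical pedal fact is that the foot $H$ lies on the vertex tangent, with $FH^2=a\,r$. Equivalently, the product (velocity-origin distance to $A'$)$\cdot FH$ equals a constant $k$ depending only on $a$. This is the parabolic replacement of $FH\cdot F'A'=b^2$. We fix the normalization so that $k=p\cdot(\text{scale})$, matching the relation $B'F\sin\alpha=a$ already used in \Cref{prop:parabola-br2-over-bd}.

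Next, from the areal law $FH\cdot v=L$, that distance equals $k v/L$. Over a small step, $A'B'=\frac{k}{L}\Delta v=\frac{k}{L}a_F\Delta t$, using $\Delta v=a_F\Delta t$ from \Cref{lem:deflection-identity}. Here we must check that the displacement of $A'$ along the circle is, to first order, the change of the velocity vector, not merely of its length. This holds because the circle is a rotated and scaled hodograph, so $A'B'$ measures $|\Delta\mathbf v|$ scaled by $k/L$.

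Geometrically, the proof of \Cref{prop:parabola-br2-over-bd} already gives $A'B'\sim \tfrac12 BR$ (up to the chosen scaling). Combining this with $BR\sim L\Delta t/r$ from \Cref{lem:bd-over-br2} gives $A'B'\sim \frac{cL\Delta t}{r}$ for the appropriate constant $c$. Eliminating $A'B'$ and $\Delta t$ between the two expressions yields $a_F=\frac{L^2}{\text{const}}\cdot\frac{1}{r^2}$. The direction is centripetal by Section~2, so only the constant remains.

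The main obstacle is keeping the scalings consistent. The distance from the velocity origin to $A'$ must scale like $1/FH$, while the arc $A'B'$ must scale like $BR/r$. Both are needed for $r$ to appear squared, and the ratio $A'B'/BR$ is only constant if the circle is chosen correctly; otherwise a leftover factor of $r$ would spoil the law. I would calibrate the constant at the vertex, where $r=a$, $FH=a$, and the velocity is perpendicular to the axis, and confirm that it equals $p=2a$. This yields $\mu=L^2/p$, in agreement with \Cref{prop:parabola-br2-over-bd}.
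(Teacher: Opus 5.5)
\textbf{Verdict: genuine gap.} Your overall architecture is the paper's. A parabolic product identity turns the areal law into a proxy point at distance proportional to $v$ from $F$, so the proxy arc equals $\frac{k}{L}\Delta v$. (The paper takes $k=a^2$ on the circle with diameter $FM$, $M$ the vertex, where $FA_1\cdot FH=FM^2=a^2$ and $H$ is the foot of the perpendicular from $F$ to the tangent, called $A'$ there.) That arc is then equated with a geometric expression in terms of the transverse step.

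The gap is in that geometric step. The relation $A'B'\sim\frac12 BR$ in \Cref{prop:parabola-br2-over-bd} is about the feet of the perpendiculars from $F$ to the tangents at $A$ and $B$. Those feet lie on the vertex tangent line (the degenerate auxiliary circle), and their separation grows like $r\,\Delta\theta$. It is not a statement about points of a hodograph circle. Used as you use it, the elimination reads $\frac{k}{L}a_F\Delta t=\frac{cL\Delta t}{r}$, i.e.\ $a_F=\frac{cL^2}{k}\cdot\frac1r$. That is an inverse-first-power law, not the $1/r^2$ you assert. No constant rescaling repairs this, and calibrating at the vertex can only fix a constant once the power of $r$ is already right. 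Your closing remark also has the requirement backwards. On a hodograph proxy the arc is (radius)$\cdot\Delta\theta$, so $A'B'/BR$ must be proportional to $1/r$ (compare $A'B'=\frac{a}{r}BR$ in the ellipse Proof~3). A constant ratio is exactly what produces the spurious $1/r$.

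The missing factor $a/r$ comes from a fact you already quote. The map $H\mapsto A_1$ with $FA_1=a^2/FH$ is inversion about $F$ with radius $a$. It carries the vertex tangent onto the circle with diameter $FM$, and it multiplies short lengths near $H$ by $a^2/FH^2=a/r$, since $FH^2=ar$. Hence $A_1B_1\sim\frac{a}{r}\cdot\frac12 BR=\frac{a}{2r}BR$, i.e.\ $BR=\frac{2r}{a}A_1B_1$, which is the local similarity the paper uses. Equivalently, $FH$ makes angle $\theta/2$ with the axis (with $\theta=\angle MFA$), so by the inscribed-angle theorem $A_1$ sweeps central angle $\Delta\theta$ on a circle of radius $a/2$. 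With $BR\sim L\Delta t/r$ this gives $\frac{a^2}{L}a_F\Delta t=\frac{a}{2r}\cdot\frac{L\Delta t}{r}$, so $a_F=\frac{L^2}{2a}\frac{1}{r^2}=\frac{L^2}{p}\frac{1}{r^2}$.

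Your insistence that $A_1B_1$ must measure $\lvert\Delta\mathbf v\rvert$, not the change of the length $FA_1$, is well taken and cleaner than the paper's $A_1B_1=\Delta(FA_1)$. Since $A_1$ lies on $FH\perp t_A$ with $FA_1=a^2v/L$, the vector $\overrightarrow{FA_1}$ is exactly the velocity rotated by $90^\circ$ and scaled by $a^2/L$.
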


\begin{proof}
In the parabola construction, let $A_1,B_1$ be the corresponding points on the hodograph-related circle. Since $MA_1\perp FA$,
\[
  FA_1\cdot FA' = FM^2 = a^2.
\]
Using the areal-rate relation in the same form as before,
\[
  v\cdot FA' = L
  \quad\Longrightarrow\quad
  FA_1=\frac{v\,a^2}{L}.
\]
For a neighboring point $B$, this gives
\[
  A_1B_1=\Delta(FA_1)=\frac{a^2}{L}\,\Delta v.
\]
From the local parabola similarity in this construction,
\[
  DB=\frac{2r}{a}\,A_1B_1
  =\frac{2ra}{L}\,\Delta v.
\]
Also, from the same local area step,
\[
  DB\cdot r = L\,\Delta t.
\]
Hence
\[
  \frac{2ra}{L}\,\Delta v\cdot r=L\,\Delta t
  \quad\Longrightarrow\quad
  a_F=\frac{\Delta v}{\Delta t}
  =\frac{L^2}{2a}\frac{1}{r^2}
  =\frac{L^2}{p}\frac{1}{r^2},
  \qquad (p=2a).
\]
Therefore
\[
  \mathbf a_F(r)=-\mu\,\frac{\mathbf r}{r^3},
  \qquad
  \text{i.e. }\mathbf a_F(r)=-\frac{\mu}{r^2}\,\hat r,
  \qquad
  \mu=\frac{L^2}{p}.
\]
\end{proof}

\section{Forward Problem: Proof 1F}
\label{sec:forward-problem-hodograph}
\subsection{Historical Hodograph Background}

\subsubsection{Newton, uniqueness, and asserted conics}

From the \emph{Principia} viewpoint (1687), Newton gives a geometric proposition-chain establishing both directions: conic with force at a focus $\Rightarrow$ inverse-square law, and inverse-square centripetal attraction $\Rightarrow$ conic orbits (ellipse/parabola/hyperbola by regime) \cite{newton1846mottewikisource,chandrasekhar1995principia}. What he does \emph{not} provide is a modern initial-value existence/uniqueness theorem for the forward problem; the argument is synthetic and limit-geometric (ultimate ratios), not an ODE well-posedness proof.

In parallel, early Continental analysts (late 17th to early 18th century) reduced the central-force problem by $u(\theta)=1/r$, yielding Binet's equation. For $F(r)=-\mu/r^2$:
\begin{equation}
  u''(\theta)+u=\frac{\mu}{h^2},
\end{equation}
with solution
\begin{equation}
  u(\theta)=\frac{\mu}{h^2}\Bigl(1+e\cos(\theta-\theta_0)\Bigr),
\end{equation}
i.e. a conic in polar form. However, deriving this form by itself is not a substitute for a uniqueness theorem: without an existence/uniqueness framework, one has not yet formalized exclusion of other possible local branches/continuations. Historical accounts of this transition to differential methods are discussed by Nauenberg (2003) \cite{nauenberg2003keplerarea}.

So the modern statement ``initial data determine a unique orbit'' is best read as a reconstruction of Newton's practice rather than a theorem he states in contemporary form. The same caution applies to early Continental differential reductions: the orbit equation gives candidate solution families, while later uniqueness theory is what formally secures single-orbit determinacy from initial data. Likewise, hodograph language is absent in the \emph{Principia} (1687) and appears later with Hamilton (1847). Rigorous local existence/uniqueness theorems for ODEs were developed much later (Cauchy in the 1820s; then Lipschitz/Picard--Lindel\"of in the late 19th century, roughly the 1870s--1890s), i.e. more than a century after 1687.

\subsubsection{Hamilton's circular hodograph for the inverse-square law}

Hamilton introduced the hodograph in 1847 as the curve traced by the tip of the velocity vector $\mathbf v(t)$ in velocity space, and showed that for motion under a central inverse-square force the hodograph is a circle \cite{hamilton1847hodograph}. Our proofs use this as a key starting point. Without offering new insights to this part, we simply explain the modern proof and refer to Feynman's work for a discrete version of the same argument \cite{goodstein1996feynman}.

Write the dynamics as
\[
  \ddot{\mathbf r}=-\frac{\mu}{r^2}\,\hat{\mathbf r},
  \qquad
  L:=r^2\dot\theta=\text{const}.
\]
Reparametrize by $\theta$:
\begin{equation}
  \frac{d\mathbf v}{d\theta}
  =
  \frac{d\mathbf v}{dt}\frac{dt}{d\theta}
  =
  \ddot{\mathbf r}\,\frac{r^2}{L}
  =
  -\frac{\mu}{L}\,\hat{\mathbf r}.
\label{eq:hodo-step1}
\end{equation}
Using the polar-frame identity
\[
  \frac{d\hat{\boldsymbol\theta}}{d\theta}=-\hat{\mathbf r},
\]
equation~\eqref{eq:hodo-step1} becomes
\begin{equation}
  \frac{d\mathbf v}{d\theta}
  =
  \frac{\mu}{L}\frac{d\hat{\boldsymbol\theta}}{d\theta}.
\label{eq:hodo-step2}
\end{equation}
Integrating once gives
\begin{equation}
  \frac{d}{d\theta}\!\left(\mathbf v-\frac{\mu}{L}\hat{\boldsymbol\theta}\right)=0
  \quad\Longrightarrow\quad
  \mathbf v=\mathbf C+\frac{\mu}{L}\hat{\boldsymbol\theta},
\label{eq:hodo-step3}
\end{equation}
where $\mathbf C$ is a constant vector.
Because $\|\hat{\boldsymbol\theta}\|=1$,
\[
  \|\mathbf v-\mathbf C\|=\frac{\mu}{L}.
\]
Hence the hodograph (the tip of $\mathbf v$ in velocity space) is a circle of radius $\mu/L$, centered at $\mathbf C$. Feynman provided a discrete argument without these vector derivatives, based on essentially the same idea \cite{goodstein1996feynman}. Replacing the differential operator $d$ by the finite-difference operator $\Delta$ gives
\begin{equation}
  \Delta\mathbf v = \frac{\mu}{L}\,\Delta\hat{\boldsymbol \theta}.
\label{eq:hodo-step1-discrete}
\end{equation}
This describes a constant-curvature curve in velocity space: a circle of radius $\mu/L$ whose center corresponds to the force center, since both $\Delta\mathbf{v}$ and $\Delta\hat{\boldsymbol\theta}$ lie in the radial direction. Feynman gives a clear presentation of this discrete argument \cite{goodstein1996feynman}.

This is the converse of what we showed in previous sections, where we established that conic motion with constant areal speed (equivalently, motion under a central force) is a sufficient condition for the hodograph to be a circle. Historically, the circular hodograph has become a central tool for both the inverse and forward problems \cite{maxwell1877mattermotion,derbes2001reinventing,goodstein1996feynman,carinena2016newlook}. Later in this section, we show how the forward problem can be solved by two methods.

\subsubsection{Feynman's Lost Lecture, strengths, and later repairs}

Feynman's 1964 lecture re-popularized the hodograph method for teaching: first establishing a circular hodograph for inverse-square attraction, then reconstructing the conic orbit geometrically \cite{goodstein1996feynman}. Later work, including Derbes (2001), clarified missing details in tangent placement and invariants; in particular, angular momentum conservation (or the centripedal nature of the force) is the extra ingredient that fixes line placement and scale \cite{derbes2001reinventing}. More recent refinements, including van Haandel--Heckman (2009) and follow-up work by Cari\~nena--Ra\~nada--Santander (2016), may be read as using either (i) the directrix circle centered at the second focus, or (ii) an equivalent directrix-circle/hodograph picture centered at the force center, with a $\pm 90^\circ$ rotation between velocity and geometric proxy \cite{vanhaandelheckman2009teaching,carinena2016newlook}. CRS16 emphasizes that the force-center version gives a cleaner Feynman-style transformation.

In our constructions (\Cref{fig:hyperbola,fig:parabola-proof2}), the auxiliary circle has the same dual role. In particular, from the local identity used in \Cref{lem:product-identity},
\[
  FH\cdot FK' = b^2,
  \qquad
  F'\!A' = FK',
\]
so $\overrightarrow{F'\!A'}$ and $\overrightarrow{FK'}$ are colinear with opposite orientation and equal magnitude. Hence one can use the auxiliary-circle data with either center choice ($F$ or $F'$) as a hodograph proxy, corresponding to a $+90^\circ$ or $-90^\circ$ rotation. For the parabola, four equivalent rotated-hodograph constructions are shown in \cref{fig:parabola-proof2}; Derbes' choice is one of them (his $C_4$ circle). Each provides a different geometric perspective. The remaining step is the energy classification: $E<0$ gives an ellipse, $E=0$ the parabolic limit, and $E>0$ a hyperbola, equivalently with the focus lying inside, on, or outside the hodograph circle, respectively.

\subsection{Infinitesimal Hodograph Construction for the Forward Problem}

We take as given (from the historical discussion above) the hodograph-circle fact: for planar motion under a central inverse-square force, the hodograph of $\mathbf v(t)$ is a circle in velocity space.

Our goal is to recover the orbit form $r(\theta)$ (forward problem) by a finite-step argument aligned with Newton's impulse-triangle logic, minimizing reliance on continuous calculus.

In the next section, inspired by the geometric methods above, we give a forward-proof route based on explicit geometric construction. In the present section, however, we first take a slightly different approach: an infinitesimal hodograph analysis (Proof~1F).

\subsubsection{Instantaneous view of Hodograph as Proof 1F}

Assume planar motion under a centripetal acceleration toward $F$:
\begin{equation}
  a=\frac{\mu}{r^2},
  \qquad
  r=FA,
\end{equation}
and constant specific angular momentum
\begin{equation}
  L=r^2\dot\theta.
\end{equation}
Fix a small time step $\Delta t$, with consecutive points $B\to A$ on the orbit, $BR\perp FA$, and the translated $\Delta t$-scaled hodograph geometry in which
\begin{equation}
  OM=BR+OR'.
\end{equation}
Define
\begin{equation}
  u:=\frac{\mu}{L}.
\end{equation}

\begin{figure}[t]
  \centering
  \includegraphics[width=0.95\linewidth]{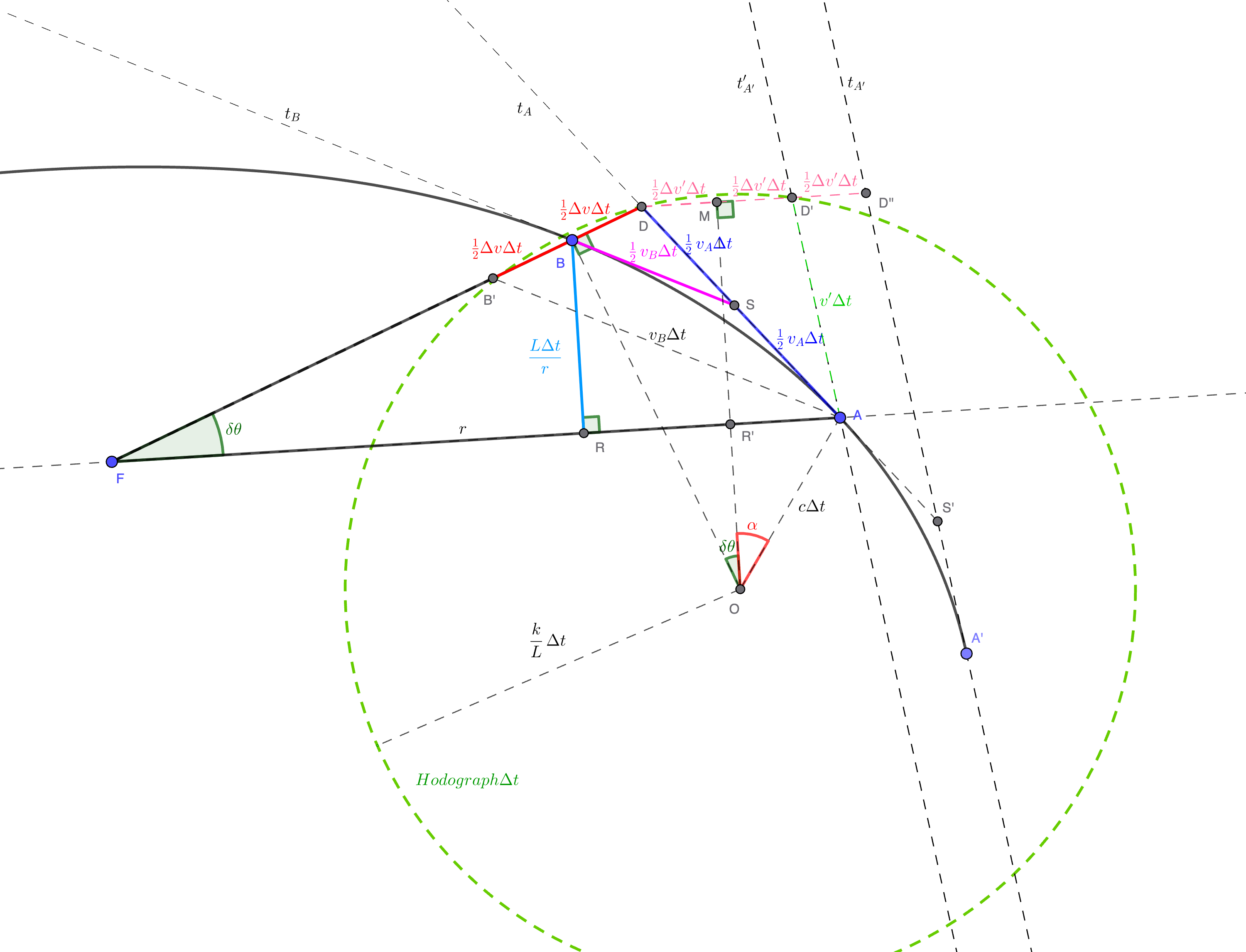}
  \caption{Forward-problem construction: previous/current/next velocities ($V_{A'},V_A,V_B$), each scaled by $\Delta t$, are shown on the translated $\Delta t$-scaled hodograph circle, with decomposition $OM=BR+OR'$. As $\Delta t\to 0$, the hodograph circle and all lengths proportional to $\Delta t$ shrink accordingly, while the deflection segments proportional to $\Delta v\,\Delta t$ shrink faster, all to an infinitesimally small circle near point A; the construction is therefore the limit of discrete approximations.}
  \label{fig:forward-problem}
\end{figure}

\begin{proposition}[Hodograph-radius analysis $\Rightarrow$ conic form]
\label[proposition]{prop:inverse-hodograph-proof}
With the notation of \Cref{fig:forward-problem}, the hodograph of $\mathbf v$ is a circle of radius $u$ (hence the $\Delta t$-scaled hodograph circle has radius $u\Delta t$). If the fixed shift is written directly in eccentricity form as
\begin{equation}
  OR'=e\,u\,\Delta t\cos\alpha,
\end{equation}
then
\begin{equation}
  \frac1r=\frac{\mu}{L^2}\left(1-e\cos\alpha\right)
  \quad\Longleftrightarrow\quad
  r(\alpha)=\frac{p}{1-e\cos\alpha},
  \qquad
  p=\frac{L^2}{\mu}.
\label{eq:proof1f-r-alpha}
\end{equation}
For a direct geometric interpretation of this polar form in the ellipse construction, see the discussion following \Cref{subsec:ellipse-proof2f}, especially \Cref{eq:ellipse-vn-polar-geom}.
Moreover,
\begin{equation}
  \begin{aligned}
    e<1,\;=1,\;>1
    &\Longleftrightarrow
    \text{velocity-origin inside/on/outside hodograph circle}\\
    &\Longleftrightarrow
    \text{ellipse/parabola/hyperbola}.
  \end{aligned}
\label{eq:proof1f-regimes}
\end{equation}
\end{proposition}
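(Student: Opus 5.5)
The plan is to read the radial component of the scaled velocity directly off the hodograph circle and then convert it into $1/r$ through the areal law. We take as given the hodograph-circle fact from Hamilton's argument, \eqref{eq:hodo-step3}: $\mathbf v=\mathbf C+u\,\hat{\boldsymbol\theta}$ with $u=\mu/L$. Multiplying by $\Delta t$ gives the translated, $\Delta t$-scaled circle of \Cref{fig:forward-problem}, with radius $u\Delta t$. Its centre is displaced from the velocity origin by the fixed vector $\mathbf C\,\Delta t$.

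Write $e:=\|\mathbf C\|/u$ and measure $\alpha$ from the direction fixed by $\mathbf C$. Then the only step that carries information is the decomposition $OM=BR+OR'$ of the figure. Project the scaled velocity onto the direction perpendicular to $FA$, i.e.\ along $\hat{\boldsymbol\theta}$:
\begin{itemize}
  \item the circle radius contributes the full $u\Delta t$, since $\hat{\boldsymbol\theta}$ is itself that direction, so $OM=u\Delta t$;
  \item the constant centre offset contributes $OR'=\mathbf C\cdot\hat{\boldsymbol\theta}\,\Delta t$, which is the fixed shift the statement writes as $\pm e\,u\,\Delta t\cos\alpha$;
  \item the remaining piece is the transverse displacement $BR=v_\perp\Delta t$.
\end{itemize}
By the areal law, $BR\sim L\Delta t/r$, as in \Cref{lem:bd-over-br2}. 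Substituting into $OM=BR+OR'$ gives
\[
  u\Delta t=\frac{L\Delta t}{r}+e\,u\,\Delta t\cos\alpha .
\]
Dividing by $L\Delta t$ yields $1/r=(\mu/L^2)(1-e\cos\alpha)$, which is \eqref{eq:proof1f-r-alpha} with $p=L^2/\mu$. Since every term is proportional to $\Delta t$, the relation is exact in the limit; the $\Delta v\,\Delta t$ deflection terms are of higher order and drop out.

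For the classification \eqref{eq:proof1f-regimes}, note that the distance from the velocity origin to the hodograph centre is $\|\mathbf C\|=e\,u$. So $e<1$, $e=1$, $e>1$ corresponds exactly to the origin lying inside, on, or outside the circle. In polar terms:
\begin{itemize}
  \item for $e<1$ the denominator $1-e\cos\alpha$ is positive for all $\alpha$, so $r$ stays bounded and the orbit is an ellipse;
  \item for $e=1$ the denominator vanishes at exactly one direction, giving a parabola;
  \item for $e>1$ it vanishes at two asymptotic directions and $r>0$ holds only on an arc, giving one hyperbola branch with $F$ as a focus.
\end{itemize}
This matches the standard focus–directrix polar form.

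The main obstacle is justifying the sign and the angle convention in the claim that $OR'$ is the projection of the fixed centre offset with $\cos\alpha$. We must check that $\alpha$ in $OR'$ is the same polar angle that appears in $r(\alpha)$, up to a fixed rotation absorbed into $\theta_0$. We must also check that the signs in $OM=BR+OR'$ are consistent with the $-90^\circ$ rotation convention. We would settle this by taking $\alpha$ as the angle between $FA$ and the line through $F$ parallel to $\mathbf C$ rotated by $90^\circ$. We would then verify the case $\alpha=0$ against the sign of the perihelion/aphelion distance.
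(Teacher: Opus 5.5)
Your proposal is correct and is essentially the paper's proof. It uses the same decomposition $OM=BR+OR'$ with $OM=u\Delta t$ and $BR=L\Delta t/r$, cancels $\Delta t$, and classifies by comparing the offset $eu$ with the radius $u$. The only real difference is that you import the circle and its constant centre offset from Hamilton's integrated formula \eqref{eq:hodo-step3}, whereas the paper re-derives the radius by its discrete equal-turning argument; your route has the bonus of explaining, rather than assuming, why $OR'$ has the form $e\,u\,\Delta t\cos\alpha$.

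On the sign you flagged: since $\mathbf v\cdot\hat{\boldsymbol\theta}=u+\mathbf C\cdot\hat{\boldsymbol\theta}$, the decomposition forces $OR'=-\mathbf C\cdot\hat{\boldsymbol\theta}\,\Delta t$, not $+\mathbf C\cdot\hat{\boldsymbol\theta}\,\Delta t$. This equals $\|\mathbf C\|\,\Delta t\cos\alpha$ precisely when $\alpha$ is measured from $\mathbf C$ rotated by $+90^\circ$ in the orbital sense. For $e<1$, $\alpha=0$ is then the aphelion, where $r=p/(1-e)$, which fixes the convention.
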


\begin{proof}
In \Cref{fig:forward-problem}, the previous, current, and next velocities ($V_{A'},V_A,V_B$) are placed together in velocity space after scaling by $\Delta t$; this is exactly the translated $\Delta t$-scaled hodograph circle used in the relations below.

First, $\Delta v$ depends only on $\Delta\theta$.
Over a short interval $\Delta t$,
\begin{equation}
  \Delta v=a\,\Delta t=\frac{\mu}{r^2}\Delta t.
\end{equation}
From $L=r^2\dot\theta$, one has $\Delta t=\frac{r^2}{L}\Delta\theta$. Hence
\begin{equation}
  \Delta v=\frac{\mu}{r^2}\cdot\frac{r^2}{L}\Delta\theta=\frac{\mu}{L}\Delta\theta=u\,\Delta\theta,
  \qquad
  \frac{\Delta v}{\Delta\theta}=u\ \text{(constant)}.
\label{eq:inverse-step1}
\end{equation}

Next, conclude the hodograph-circle radius.
Because the force is central, $\Delta\mathbf v$ is radial inward. As $FA$ turns by $\Delta\theta$, the direction of $\Delta\mathbf v$ turns by the same $\Delta\theta$, while its magnitude is $u\,\Delta\theta$ by \eqref{eq:inverse-step1}. Thus the velocity-tip polygon is the equal-angle, equal-arc-length limit of an inscribed circle polygon, so the hodograph is a circle of radius $u$. Multiplying by $\Delta t$ gives
\begin{equation}
  OM=u\,\Delta t.
\label{eq:inverse-step2}
\end{equation}

Then, use tangential displacement.
The tangential speed is $v_\theta=L/r$, hence
\begin{equation}
  BR=v_\theta\Delta t=\frac{L}{r}\Delta t.
\label{eq:inverse-step3}
\end{equation}

Now use the key geometry in the plot.
From the translated-circle construction, $OM=MR'+OR'$ and $MR'=BR$, so
\begin{equation}
  OM=BR+OR'.
\label{eq:inverse-step4}
\end{equation}
Assume
\begin{equation}
  OR'=e\,u\,\Delta t\cos\alpha.
\label{eq:inverse-step5}
\end{equation}
Substitute \eqref{eq:inverse-step2}, \eqref{eq:inverse-step3}, \eqref{eq:inverse-step5} into \eqref{eq:inverse-step4}:
\begin{equation}
  u\Delta t=\frac{L}{r}\Delta t+e\,u\,\Delta t\cos\alpha.
\end{equation}
Cancel $\Delta t$:
\begin{equation}
  u=\frac{L}{r}+e\,u\cos\alpha
  \quad\Longrightarrow\quad
  \frac1r=\frac{u}{L}\left(1-e\cos\alpha\right)
  =\frac{\mu}{L^2}\left(1-e\cos\alpha\right).
\end{equation}
This is equivalent to
\begin{equation}
  r(\alpha)=\frac{p}{1-e\cos\alpha},
  \qquad
  p=\frac{L^2}{\mu}.
\end{equation}

Finally, classify by the origin position in velocity space.
The hodograph circle has radius $u$, and the velocity-origin offset has magnitude $eu$. Hence the origin is inside/on/outside the hodograph circle according as $eu<u$, $eu=u$, $eu>u$, i.e. $e<1,=1,>1$. These are precisely the ellipse/parabola/hyperbola regimes.
\end{proof}
This concludes the first forward-problem proof, which is a direct infinitesimal version of the hodograph construction. In the next section, the second proof follows a geometric insight similar to Feynman's Lost Lecture method, with stronger emphasis on the auxiliary circle as a proxy for the hodograph.

% !TEX root = ../../main.tex
\section{Forward Problem: Proof 2F}
\label{sec:forward-problem-proof2f}

We again start from the forward-problem assumption: a centripetal inverse-square law $a_F\propto 1/r^2$.
By Section~\ref{sec:forward-problem-hodograph}, this implies a circular hodograph in velocity space.
This section presents a second route, using that hodograph-circle nature directly through rotated/scaled auxiliary circles.
As in earlier sections, we intentionally over-scribe assisting lines so readers can spot more angle and length relations at a glance.
For each conic, a simpler base shape is enough for a minimal proof, but the enriched diagram exposes multiple valid relation chains.
In the subsections below we highlight several such chains so readers can develop alternative proof techniques from the same geometric idea.
Throughout this section, whenever the scale factors \(b^2/L\) (ellipse/hyperbola) or \(a^2/L\) (parabola, with \(a:=MF\)) appear, see \Cref{subsec:proof2f-discussion} for their equivalent expressions purely in initial data \((\mu,L,r_0,d_0)\). In the same notation, \(b^2/p=a\) and \(a^2/p=p/4\), so these geometric ratios are also determined by initial conditions.

\subsection{Ellipse: Forward Problem Proof 2F}
\label{subsec:ellipse-proof2f}

\begin{figure}[t]
  \centering
  \includegraphics[width=\linewidth]{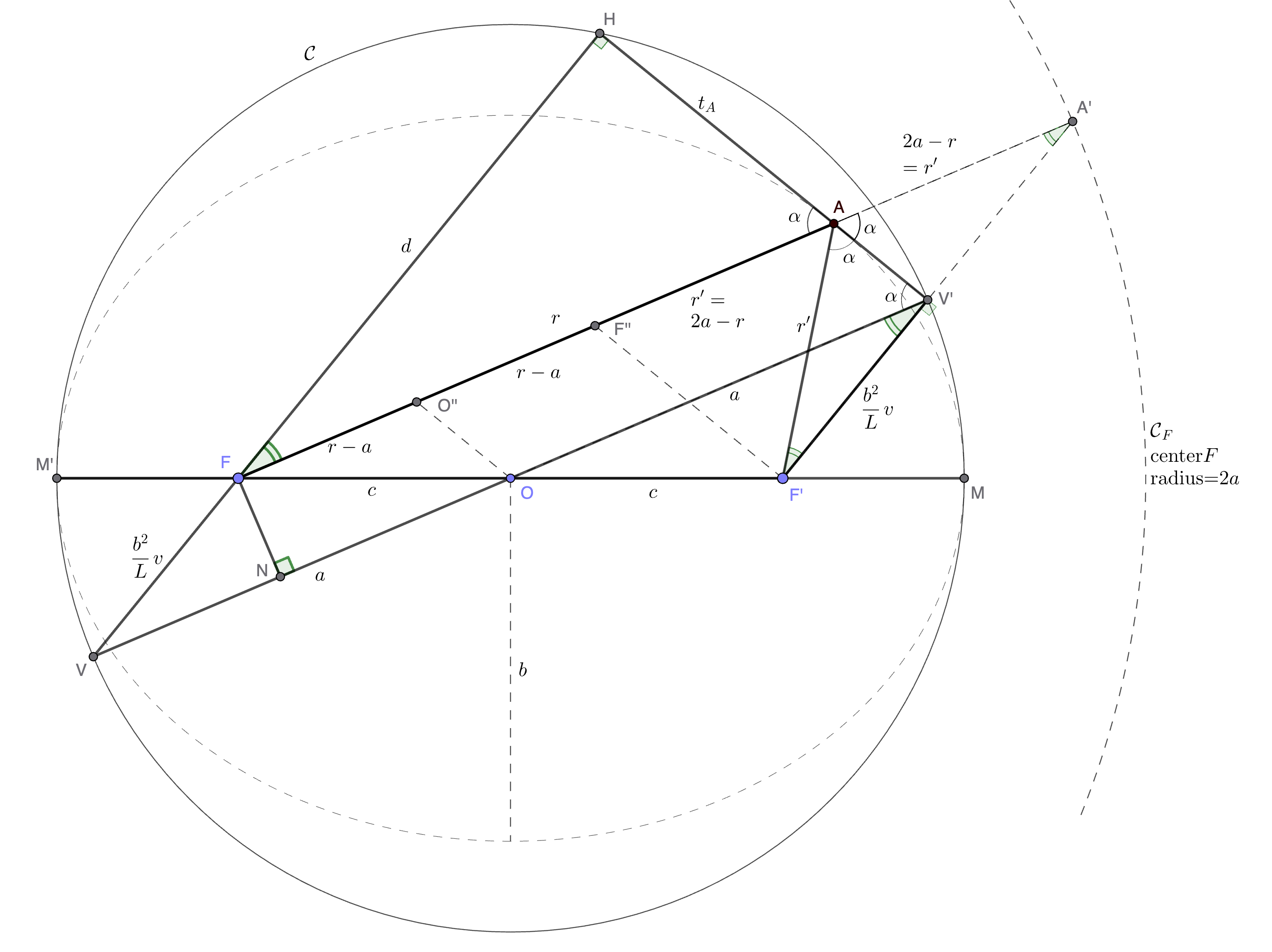}
  \caption{Ellipse configuration for Forward Problem Proof~2F.}
  \label{fig:ellipse-proof2f}
\end{figure}

Using \Cref{fig:ellipse-proof2f}, we set up Forward Problem Proof~2F for the ellipse case in the circle-to-orbit framework.
In this subsection we also use the standard elliptic identity for the semi-latus rectum \(p\): geometrically, \(p\) is the perpendicular distance from the force center \(F\) to the orbit along the line through \(F\) orthogonal to the major axis. In \((a,b,c)\) notation,
\begin{equation}
  p^2+(2c)^2=(2a-p)^2
  \quad\Longrightarrow\quad
  p=\frac{a^2-c^2}{a}=\frac{b^2}{a}.
\label{eq:p-b2-over-a-ellipse}
\end{equation}

\begin{proposition}[Forward Problem Proof~2F: ellipse case]
Assume inverse-square centripetal attraction, so the hodograph is a circle. If the velocity origin lies inside that hodograph circle, then the orbit is an ellipse.
\end{proposition}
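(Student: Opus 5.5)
We plan to run the inverse-problem Proof~3 backwards: build an ellipse from the initial data, show its focal motion has a circular hodograph of the same radius and center, and conclude that the two motions coincide.

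\emph{Step 1: hodograph data.} From Section~\ref{sec:forward-problem-hodograph}, the velocity satisfies $\mathbf v=\mathbf C+u\,\hat{\boldsymbol\theta}$ with $u=\mu/L$. The hypothesis that the velocity origin lies inside the circle means $|\mathbf C|<u$. Write $|\mathbf C|=e\,u$ with $0\le e<1$.

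\emph{Step 2: choose the ellipse.} Define
\[
p:=\frac{L^2}{\mu},\qquad a:=\frac{p}{1-e^2},\qquad b^2:=pa,\qquad c:=ea.
\]
Put the focus $F$ at the force center and the center $O$ at distance $c$ from $F$. Orient the major axis so that, after the $-90^\circ$ rotation of Section~\ref{subsec:sec6-discussion}, the rotated hodograph is centered on the axis. Scale the plane by the factor $b^2/L$. Under this map the hodograph circle of radius $u$ becomes a circle of radius
\[
\frac{b^2}{L}\cdot\frac{\mu}{L}=\frac{b^2}{p}=a,
\]
which is the auxiliary circle $\mathcal C$. The offset $e\,u$ becomes $e\,a=c$, which places the velocity origin at the second focus $F'$.

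\emph{Step 3: the constructed point lies on the ellipse.} At each instant, take the image point $A'$ of the velocity tip on $\mathcal C$. Draw the line through $A'$ perpendicular to $F'A'$; it is parallel to the velocity and is the tangent line $t_A$ in the notation of \Cref{lem:product-identity}. We must show the actual position $A$ lies on the ellipse with foci $F,F'$ and major axis $2a$, and is the point where $t_A$ touches it. Use \Cref{lem:product-identity} in reverse. The relation $FH\cdot v=L$ gives
\[
FH\cdot F'A'=FH\cdot v\,\frac{b^2}{L}=b^2 .
\]
Because $O$ is the midpoint of $FF'$ and $F'A'\perp t_A$, the pedal property of the auxiliary circle identifies $t_A$ as a tangent of that ellipse. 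By the area-law, $A$ lies on this tangent line at distance $FH$ from $F$ measured perpendicular to it, so the tangent line is the correct one.

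\emph{Step 4: consistency and closing the argument.} To pin down $A$ itself on the tangent, we use the polar form of \Cref{prop:inverse-hodograph-proof}:
\[
r(\alpha)=\frac{p}{1-e\cos\alpha}.
\]
This says $r$ agrees with the focal radius of the ellipse at the same polar angle. We then conclude by matching the motion to Keplerian motion on the constructed ellipse, which has the same inverse-square law by \Cref{prop:proof1-ellipse-inverse-square}, the same $\mu$, and the same initial position and velocity.

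\emph{Main obstacle.} The hardest step is justifying that the tangent-envelope reconstruction places $A$ exactly at the tangency point. In the Feynman line this is the delicate tangent-placement issue that Derbes repaired. I would try to settle it with the reflection point $J$ and the focal-angle bisection, together with $OJ\parallel FA$. If that fails, I would fall back on the polar formula above.
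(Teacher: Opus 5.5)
There is a genuine gap, and it is the one you flag yourself: nothing in your argument shows that $A$ is the point where $t_A$ touches the ellipse. Steps~2--3 are fine, with one side check: of the two lines parallel to $\mathbf v$ at distance $L/v$ from $F$, the sign of $L$ selects $t_A$. So you do get $A\in t_A$ with $t_A$ tangent to your ellipse.

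Step~4 does not close, for two reasons.
\begin{enumerate}
\item Quoting the polar form of \Cref{prop:inverse-hodograph-proof} either finishes the forward problem on its own, which makes your construction idle, or requires identifying its $e$ and the reference direction of $\alpha$ with your $|\mathbf C|/u$ and your rotated axis. You do not do this. In the paper, that formula also rests on an assumed eccentricity form of the shift $OR'$.
\item ``Matching to Keplerian motion with the same initial position and velocity'' is circular. It needs $A_0$ to lie on your ellipse, which is exactly what is unproved. It also imports an ODE uniqueness theorem that the paper's pointwise construction is meant to avoid.
\end{enumerate}

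The missing idea is that the hodograph gives more than the length and direction of $\mathbf v$: where the tip sits relative to the centre encodes the radial direction. Since $\mathbf v-\mathbf C=u\,\hat{\boldsymbol\theta}$, your $-90^\circ$ rotation and $b^2/L$ scaling give $\overrightarrow{OA'}=a\,\hat{\mathbf r}$. That is, $OA'\parallel FA$ with the same sense. Then close as the paper does:
\begin{enumerate}
\item Reflect $F'$ in $t_A$ to get $F''$ (the point of Section~3).
\item $A'$ is the midpoint of $F'F''$ and $O$ is the midpoint of $F'F$, so $\overrightarrow{FF''}=2\,\overrightarrow{OA'}=2a\,\hat{\mathbf r}$. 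This puts $F''$ on ray $FA$ with $FF''=2a$.
\item $t_A$ is the perpendicular bisector of $F'F''$, so $AF'=AF''$.
\item $A$ cannot lie beyond $F''$, since then $AF=AF'+2a>AF'+FF'$, violating the triangle inequality.
\end{enumerate}
Hence $AF+AF'=2a$.

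This is the paper's proof in the dual convention. The paper rotates by $+90^\circ$ with the velocity origin at $F$. Its $V'$ is your $A'$, its point $FA\cap F'V'$ is this $F''$, and its centripetal condition $AF\parallel V'V$ is your $OA'\parallel FA$. Your convention has the mild advantage that the fixity of $F'$ is built in rather than derived.
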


\begin{proof}
Start with the hodograph circle of radius $u$. Rotate it counterclockwise by $90^\circ$ and scale by \(\beta/L\), where \(\beta\) is a length-squared parameter left undetermined at this stage, and \(L\) is the specific angular momentum (twice areal speed, fixed by central-force motion from Section~2; later \(\beta\) is fixed to \(b^2\), and \(a,b,c\) are expressed from initial data in \Cref{eq:proof2f-disc-shodo-u-E0,eq:proof2f-disc-abc-from-initial}). Denote the scaled circle by $\mathcal C$ with center $O$, radius $a$, and
\[
  OF=:c,
  \qquad
  a=\frac{\beta}{L}u.
\]
Let the line $OF$ meet $\mathcal C$ at $M'$ (near $F$) and $M$ (far from $F$). For any point $V\in\mathcal C$, let the line $VF$ meet $\mathcal C$ again at $H$.

By the Euclidean power-of-a-point theorem (secant-secant form),
\[
  FM'\cdot FM = FV\cdot FH = (a-c)(a+c)=a^2-c^2.
\]
Write $FH=:d$ and, from the chosen scaling, $FV=\frac{\beta}{L}v$; then
\[
  a^2-c^2=\frac{\beta}{L}\,v\,d.
\]
Now choose the normalization
\[
  \beta:=a^2-c^2.
\]
For the ellipse case we denote this positive quantity by
\[
  b^2:=a^2-c^2.
\]
Hence $v\,d=L$. Therefore, the point where the true velocity line (perpendicular to $FH$) meets the tangent line $t_A$ is exactly the constructed point $H$.

Now draw $t_A$ through $H$ perpendicular to $FH$, and let $V'$ be the second intersection of $t_A$ with $\mathcal C$. Since $\angle VHV'=90^\circ$, $VV'$ is a diameter of $\mathcal C$, so $O\in VV'$. Draw through $V'$ the line perpendicular to $HV'$ and let it meet $OF$ at $F'$. Then $F'V'\parallel VH$, and the corresponding triangles give
\[
  F'V'=FV=\frac{b^2}{L}v,
  \qquad
  OF'=OF=c.
\]
So $F'$ is uniquely determined as the reflection of $F$ across $O$, independent of the choice of $V$.

To locate the orbital point $A$ on $t_A$, use direction information: $OV$ is normal to $\mathcal C$ at $V$, hence normal to the local tangent proxy, so it is parallel to the acceleration direction. Because the force is centripetal, this gives
\[
  AF\parallel V'V.
\]
Let $A'=FA\cap F'V'$. Since $AF\parallel V'V$ and $O\in V'V$, we have $OV'\parallel FA'$. In $\triangle FF'A'$, $O$ is the midpoint of $FF'$ and $OV'\parallel FA'$, so $V'$ is the midpoint of $F'A'$. Therefore
\[
  FA'=2\,OV'=2a.
\]
Also, $A,V'\in t_A$, while $A'F'$ is collinear with $F'V'$ and $F'V'\perp t_A$; hence $AV'\perp A'F'$. Since $V'$ is the midpoint of $A'F'$, the line $AV'$ is the perpendicular bisector of $A'F'$. Therefore
\[
  AA'=AF',
  \qquad
  \angle V'AF'=\angle V'AA'.
\]
This is the reflection characterization at $A$ across the tangent line $t_A$. Combining $AA'=AF'$ with the distance relation above gives
\[
  AF+AF' = AF+AA' = FA' = 2a.
\]
This is the gardener characterization of an ellipse (constant sum of distances to two fixed foci), with foci $F,F'$. Therefore, when the velocity origin is inside the hodograph circle, the orbit is elliptic.
\end{proof}

\paragraph{Alternative closing step without constructing \(A'\): ellipse proof 2F'.}
\begin{proof}[Alternative proof (ellipse 2F')]
One may branch from the step \(AF\parallel V'V\) (just before introducing \(A'\)). Define
\[
  r:=FA,
  \qquad
  r':=F'A.
\]
Draw \(F'F''\parallel t_A\) and let it meet \(FA\) at \(F''\). Draw \(OO''\parallel t_A\) and let it meet \(FA\) at \(O''\) (on the branch from \(F\) to \(A\)).

Since \(O\in V'V\), the relation \(AF\parallel V'V\) implies \(AF\parallel OV'\). Also \(OO''\parallel t_A\parallel AV'\). Hence \(O''AV'O\) is a parallelogram, so
\[
  O''A=OV'=a,
\]
where \(OV'=a\) is the radius definition of \(\mathcal C\). Therefore
\[
  FO''=FA-O''A=r-a.
\]
In \(\triangle FF'F''\), \(O\) is the midpoint of \(FF'\), and \(OO''\parallel F'F''\); hence \(O''\) is the midpoint of \(FF''\), so
\[
  FF''=2\,FO''=2(r-a).
\]
Thus
\[
  AF''=AF-FF''=r-2(r-a)=2a-r.
\]

The reflection-angle relation at \(A\) used above can then be established via the right-triangle similarity \(\triangle AHF\sim\triangle AV'F'\), since \(V'F':FH=VF:FH=AV':AH\). Set
\[
  \angle FAH=\angle F'AV'=: \alpha.
\]
Because \(F'F''\parallel t_A\) and \(AV'\subset t_A\),
\[
  \angle AF''F'=\angle FAH=\alpha,
  \qquad
  \angle AF'F''=\angle F'AV'=\alpha.
\]
Hence \(\triangle AF'F''\) is isosceles, so \(AF''=AF'=r'\). Combining with \(AF''=2a-r\), we get
\[
  r'=2a-r
  \quad\Longrightarrow\quad
  r+r'=2a.
\]
This is exactly the gardener characterization of an ellipse, completing the same conclusion without constructing \(A'\), i.e., the alternative ellipse Proof~2F'.
\end{proof}

\paragraph{Alternative polar-form closure (ellipse Proof~2F'').}
\begin{proof}[Alternative proof (ellipse 2F''): polar-form closure]
As an alternative closure of \Cref{subsec:ellipse-proof2f}, we verify directly that the constructed orbit obeys the ellipse polar formula (\Cref{eq:proof1f-r-alpha}). Let \(V\) denote the point on \(\mathcal C\) whose radius satisfies \(OV\parallel AF\) (in the present construction). Drop the perpendicular from \(F\) to the line \(OV\), and call the foot \(N\). Then \(N\in OV\), so
\[
  VN=OV-ON=a-c\cos\angle OFA.
\]

Now use \(\triangle VNF\sim\triangle AHF\), hence
\[
  \frac{VF}{FA}=\frac{VN}{FH}
  \quad\Longrightarrow\quad
  VF\cdot FH=VN\cdot FA.
\]
With \(FA=r\) and \(VF\cdot FH=b^2\), this gives \(VN=b^2/r\). Therefore
\begin{equation}
  \frac{b^2}{r}=a-c\cos\angle OFA=VN.
\label{eq:ellipse-vn-polar-geom}
\end{equation}
Equivalently,
\[
  r=\frac{b^2}{a-c\cos\angle OFA}.
\]
This is the geometric meaning of the polar ellipse relation in this construction. Analogous identities exist for hyperbola and parabola, but we do not detail them here.
\end{proof}

At this stage, the orbit is parameterized geometrically by \((a,b,c)\). Its physical determination, however, is by dynamical data: the force-law constant and the initial state \((\mathbf r_0,\mathbf v_0)\) (equivalently \(r_0,v_0\) with direction, or \(r_0,d_0,v_0\)). We defer this equivalence intentionally to \Cref{subsec:proof2f-discussion}. The reason is organizational: we first complete the same hodograph/force-center construction logic for hyperbola and parabola in \Cref{subsec:hyperbola-proof2f,subsec:parabola-proof2f}, then present one unified parameter map for all cases in \Cref{subsec:proof2f-discussion}.

\subsection{Hyperbola: Forward Problem Proof 2F}
\label{subsec:hyperbola-proof2f}

\begin{figure}[t]
  \centering
  \includegraphics[width=\linewidth]{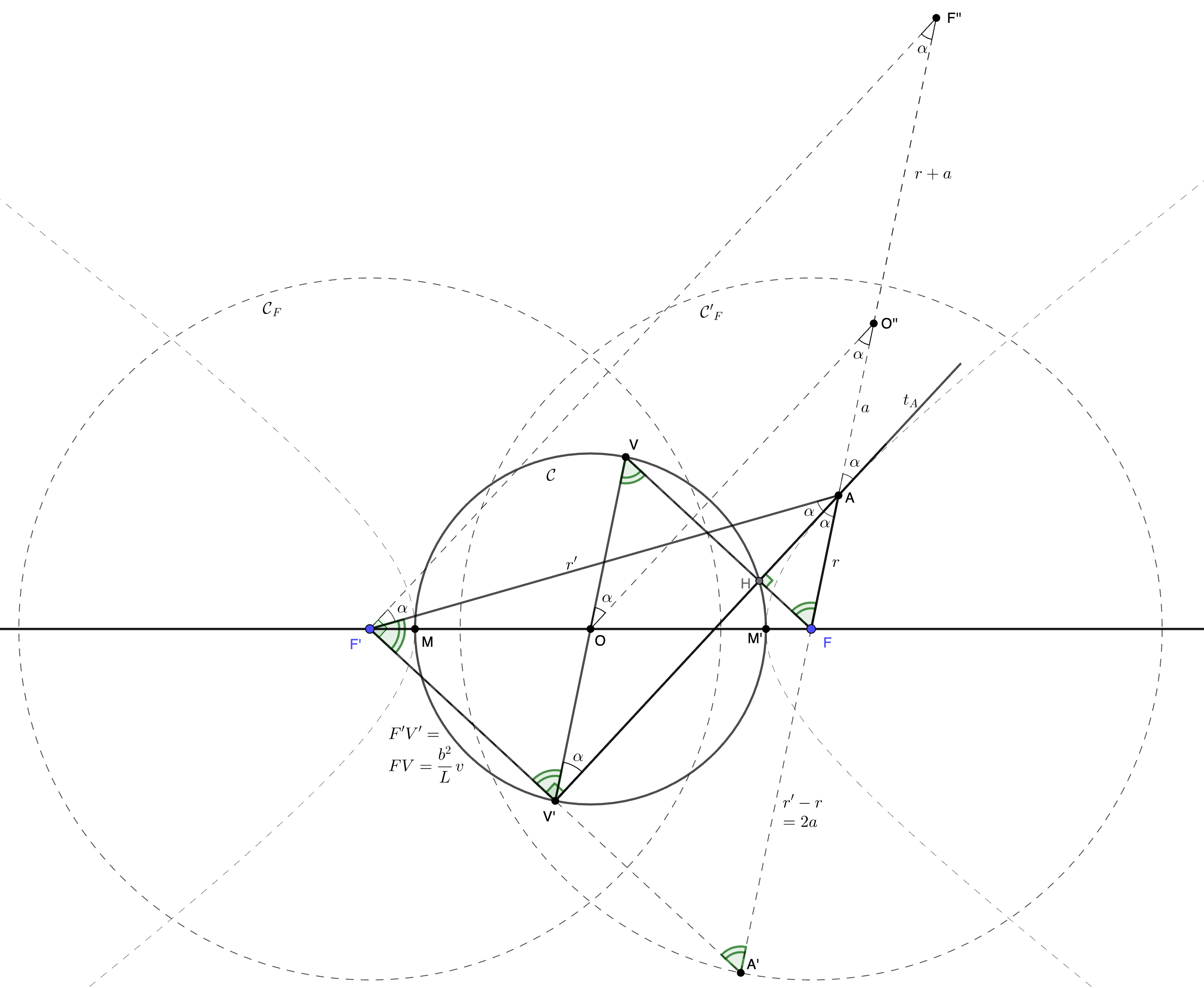}
  \caption{Hyperbola configuration for Forward Problem Proof~2F.}
  \label{fig:hyperbola-proof2f}
\end{figure}

Using \Cref{fig:hyperbola-proof2f}, we set up the hyperbolic counterpart with the same hodograph-circle logic and branch-sign adjustments.

\begin{proposition}[Forward Problem Proof~2F: hyperbola case]
Assume inverse-square centripetal attraction, so the hodograph is a circle. If the velocity origin lies outside that hodograph circle, then the orbit is a hyperbola.
\end{proposition}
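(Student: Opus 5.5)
The plan is to mirror the ellipse Proof~2F step by step, changing only the signs that come from the velocity origin (the force center $F$) lying \emph{outside} the scaled hodograph circle.

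\textbf{Scaled circle and power of a point.} Rotate the hodograph circle of radius $u$ by $90^\circ$; by the convention of \Cref{subsec:hyperbola}, for the hyperbola the rotation is $+90^\circ$ along the sense of motion. Then scale it by $\beta/L$ to obtain $\mathcal C$, with center $O$, radius $a=\beta u/L$, and $OF=:c>a$. For $V\in\mathcal C$, let the line $VF$ meet $\mathcal C$ again at $H$. Since $F$ is now exterior, the secant--secant form of the power of a point reads
\[
FV\cdot FH = FM'\cdot FM=(c-a)(c+a)=c^2-a^2,
\]
with $V$ and $H$ on the same side of $F$. Choosing $\beta:=c^2-a^2=:b^2$ and using $FV=\frac{b^2}{L}v$ gives $v\cdot FH=L$. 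This identifies $H$ as the foot of the perpendicular from $F$ to the velocity line $t_A$, exactly as in the ellipse case and matching the hyperbola product identity already recorded in \Cref{subsec:hyperbola}.

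\textbf{Second focus.} Draw $t_A\perp FH$ through $H$ and let it meet $\mathcal C$ again at $V'$. Because $\angle VHV'=90^\circ$, the segment $VV'$ is a diameter through $O$. The perpendicular to $t_A$ at $V'$ meets line $OF$ at $F'$, and central symmetry about $O$ gives $OF'=OF=c$ and $F'V'=FV$. So $F'$ is the reflection of $F$ in $O$, independent of $V$. The centripetal condition then gives $AF\parallel VV'$, which locates $A$ on $t_A$.

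\textbf{Focal-difference relation.} Set $A':=FA\cap F'V'$. In $\triangle FF'A'$, the point $O$ is the midpoint of $FF'$ and $OV'\parallel FA'$, so $V'$ is the midpoint of $F'A'$ and $FA'=2a$. Since $t_A\perp F'A'$ at its midpoint $V'$, the line $t_A$ is the perpendicular bisector of $F'A'$, so $AA'=AF'$.

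The expected hard step is the betweenness bookkeeping. I would argue that, with $F$ outside $\mathcal C$, the point $A'$ lies on the ray $FA$ but on the segment $FA$ (or beyond $F$) rather than beyond $A$. In that configuration
\[
|AF-AA'|=FA'=2a, \qquad\text{hence}\qquad |AF-AF'|=2a,
\]
which is the hyperbola's focal-difference characterization with foci $F,F'$. To settle the configuration, I would first check a symmetric position, $V$ on line $OF$, where $A$ is the vertex. I would then argue by continuity as $V$ moves along the admissible arc that the order of $F, A', A$ on the line cannot change unless $A'$ passes through $A$, which is impossible since $AA'=AF'>0$. This also shows that only the near branch, the one enclosing the attracting focus $F$, arises, and that $t_A$ bisects the angle $\angle FAF'$ internally, the tangent property of a hyperbola.

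As an alternative closure, I would redo the polar-form argument of 2F'': project $F$ onto $OV$ at $N$ to get $VN=b^2/r=c\cos\angle OFA-a$ up to orientation. This reproduces $r=p/(e\cos\theta-1)$-type polar form with $e=c/a>1$, consistent with \Cref{eq:proof1f-r-alpha}.
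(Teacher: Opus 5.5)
Your skeleton matches the paper's proof step for step: the exterior secant product with $\beta=b^2=c^2-a^2$ (hence $v\cdot FH=L$), the antipodal $V'$, the reflected focus $F'$, $AF\parallel VV'$, the midline $FA'=2a$, and the perpendicular bisector $AA'=AF'$. The problem is the orientation bookkeeping, which is exactly what changes when $F$ leaves $\mathcal C$, and it goes wrong in two places.

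First, the rotation sign is wrong. With $F$ exterior, the power-of-a-point step puts $H$ on the same ray from $F$ as $V$. But the foot of the perpendicular from $F$ to the velocity line lies along $\mathbf v$ turned by $90^\circ$ \emph{against} the orbital sense (check at perihelion). So with $F$ as velocity origin you must rotate by $-90^\circ$. This is the paper's force-center speed-origin convention for the hyperbola in \Cref{subsec:proof2f-discussion}. The $+90^\circ$ you quote from \Cref{subsec:hyperbola} refers to $F'$ as velocity origin (note $\overrightarrow{F'V'}=-\overrightarrow{FV}$). With your $+90^\circ$, the line through $H$ perpendicular to $FH$ is the point reflection through $F$ of the true tangent. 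So your claim that $H$ is the foot on the velocity line is false, and the construction traces the reflected curve. In the ellipse case $H$ lies on the opposite ray from $V$, which is why the rotation there goes with the orbital sense.

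Second, the betweenness. The paper places $F$ between $A$ and $A'$, so $AF'-AF=AA'-AF=FA'=2a$. Your primary guess, $A'$ on segment $FA$, gives $AF-AF'=2a$ instead. That is the branch around $F'$ (the centrifugal variant), which contradicts your own near-branch claim. For the same reason, your polar closure should read $VN=a+c\cos\angle OFA$, i.e.\ $r=p/(1+e\cos\angle OFA)$, not $c\cos\angle OFA-a$. Your vertex check, done with the correct rotation, confirms this: with $O$ at the origin, $F=(c,0)$, $V=(-a,0)$, one gets $A=(a,0)$ and $A'=(c+2a,0)$. At the vertex $FA$ and $F'V'$ coincide, so $A'$ must be taken as the mirror image of $F'$ in $t_A$.

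For the hyperbola itself you need no continuity argument. The only order that fails to give $|AF-AA'|=FA'$ is $A$ strictly between $F$ and $A'$. That would force $AF+AF'=2a<2c=FF'$, contradicting the triangle inequality.

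The branch, however, is \emph{not} settled by continuity alone. The condition $AF\parallel OV$ does not see the sign of the force, and $V$ on the arc of $\mathcal C$ nearer to $F$ produces the far branch. You must use attraction to show that $V$ runs over the far arc between the two tangent points from $F$. For example, maximal speed at perihelion makes $FV$ maximal there.
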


\begin{proof}
This proof is almost the same as the elliptic proof in \Cref{subsec:ellipse-proof2f}; the same is true for the alternative proof below. Using the same \(\beta/L\)-scaling template, the only essential normalization change is
\[
  \beta=b^2=c^2-a^2,
\]
which is the hyperbolic identity that matches the rotated/scaled hodograph circle \(\mathcal C\) to the orbit geometry (and the same \(b^2/L\) scale can be rewritten from initial data as in \Cref{subsec:proof2f-discussion}).

With \(OF=c\), \(FV=\frac{b^2}{L}v\), and \(FH=d\), the secant product gives
\[
  FM'\cdot FM=FV\cdot FH=(c-a)(c+a)=c^2-a^2=b^2,
\]
hence \(vd=L\). So the tangent-placement step and the constructions of \(V'\) and \(F'\) proceed exactly as in the ellipse case.

Now set \(r:=AF\) and \(r':=AF'\), and define \(A':=FA\cap F'V'\). The midpoint argument in \(\triangle FF'A'\) is unchanged, yielding
\[
  FA'=2a=V'V.
\]
The reflection property at \(A\) is also unchanged, so \(AA'=AF'=r'\). On the chosen branch (\(A,F,A'\) collinear with \(F\) between \(A\) and \(A'\)),
\[
  r'-r = AA'-AF = FA' = 2a.
\]
This is the gardener characterization of a hyperbola (\(\lvert AF'-AF\rvert=2a\)) with foci \(F,F'\). The remaining line-by-line details are parallel to \Cref{subsec:ellipse-proof2f}.
\end{proof}

\paragraph{Alternative closing step (hyperbola Proof~2F').}
\begin{proof}[Alternative proof (hyperbola 2F')]
The alternative closure from ellipse Proof~2F' transfers directly, with one branch change: construct \(F''\) and \(O''\) on the ray \(FA\) beyond \(A\) (rather than on the segment \(FA\)). Draw \(F'F''\parallel t_A\) and \(OO''\parallel t_A\), meeting the ray \(FA\) at \(F''\) and \(O''\), respectively.

Then \(O''A=OV'=a\), so
\[
  FO''=FA+AO''=r+a.
\]
As before, \(O\) is the midpoint of \(FF'\), and \(OO''\parallel F'F''\), so \(O''\) is the midpoint of \(FF''\). Hence
\[
  FF''=2(r+a),
  \qquad
  AF''=FF''-AF=r+2a.
\]
The same reflection-angle argument gives \(AF''=AF'=r'\). Therefore
\[
  r'=r+2a
  \quad\Longleftrightarrow\quad
  r'-r=2a.
\]
Again, the remaining details are routine parallels of the elliptic case.
\end{proof}

\paragraph{Alternative polar-form closure (hyperbola Proof~2F'').}
\begin{proof}[Alternative proof (hyperbola 2F''): polar-form closure]
This is the direct hyperbolic counterpart of ellipse Proof~2F'' in \Cref{subsec:ellipse-proof2f}. Draw the perpendicular from \(F\) to the line \(VO\), and call the foot \(N\); then use the segment \(VN\). With the same similarity step (\(\triangle VNF\sim\triangle AHF\)) and the same secant-product step, the derivation translates letter-for-letter from the ellipse case, with the hyperbolic normalization \(b^2=c^2-a^2\) and directed-segment sign convention on the chosen branch. Hence the same polar-form closure follows in this notation.
\end{proof}

\paragraph{Swapped-focus centrifugal variant.}
\begin{proof}[Swapped-focus centrifugal variant]
The centrifugal-force version follows almost exactly as well. In \Cref{fig:hyperbola-proof2f}, swap the names \(F\leftrightarrow F'\), rename the current \(V'\) as \(H\), and let the line through the new \(F'\) and \(H\) meet \(\mathcal C\) again at the new \(V\). After this relabeling, the rest of the construction and proof chain is sequentially identical, so we omit repetitive details.
\end{proof}

As in \Cref{subsec:ellipse-proof2f}, this subsection keeps the geometry-first derivation; the equivalent initial-data determination of the scaling and conic parameters is deferred to \Cref{subsec:proof2f-discussion}.

\subsection{Parabola: Forward Problem Proof 2F}
\label{subsec:parabola-proof2f}

\begin{figure}[t]
  \centering
  \includegraphics[width=\linewidth]{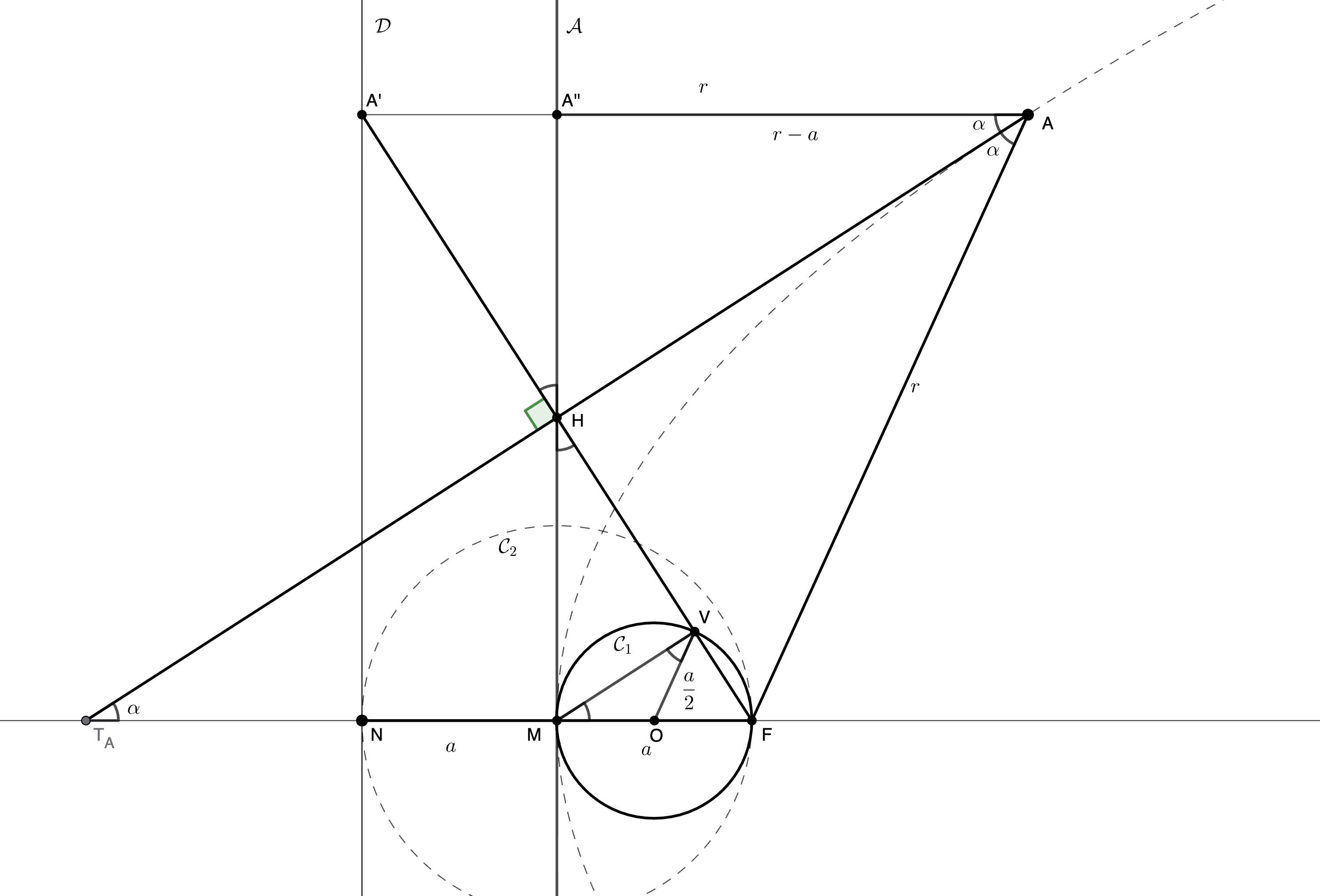}
  \caption{Parabola configuration for Forward Problem Proof~2F.}
  \label{fig:parabola-proof2f}
\end{figure}

Using \Cref{fig:parabola-proof2f}, we set up the parabolic threshold case in the same proof family.

\begin{proposition}[Forward Problem Proof~2F: parabola case]
Assume inverse-square centripetal attraction, so the hodograph is a circle. If the velocity origin lies on that hodograph circle, then the orbit is a parabola.
\end{proposition}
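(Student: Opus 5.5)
I will follow the ellipse template of Forward Problem Proof~2F, adapting it to the case where the force center lies on the rotated and scaled hodograph circle. Start with the hodograph circle of radius $u$, rotate it by $90^\circ$, and scale it by $\beta/L$. Since the velocity origin is on the circle, the image of the origin (the force center $F$) lies on the scaled circle $\mathcal C$. The secant product $FM'\cdot FM=a^2-c^2$ then degenerates to zero, so the ellipse normalization $\beta=b^2$ is no longer available.

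I therefore switch to the configuration of the parabola variant of inverse problem Proof~3. That is, I use a circle through $F$ whose diameter $FM$ has length $a$ (with $a:=MF$ as announced at the start of the section), and I scale by $a^2/L$. For a velocity point $V$ on this circle, the chord $FV$ represents the scaled speed, $FV=\frac{a^2}{L}v$. Let $H$ be the foot of the perpendicular from $M$ to the line $FV$. Since the angle in a semicircle is right, $\triangle FHM\sim\triangle FMA_1$-type right-triangle relations give
\[
FV\cdot FH'=FM^2=a^2 .
\]
Here $H'$ is the point on the ray from $F$ perpendicular to the velocity direction, and this is the analogue of the power-of-a-point step used for the ellipse. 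Combining it with $FV=\frac{a^2}{L}v$ yields $v\cdot FH'=L$. This is exactly the areal-rate condition, so it places the tangent line $t_A$ through $H'$, perpendicular to $FH'$, just as in the ellipse case.

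Next I locate $A$ on $t_A$. The centripetal condition makes $FA$ parallel to the normal of the hodograph at $V$, which is the line from the hodograph center through $V$ after rotation. I then need a parabola characterization to close the argument, namely $AF=$ the distance from $A$ to a fixed line (the directrix). As the degenerate "directrix circle" remarks suggest, I take the directrix to be the line perpendicular to $FM$ at distance $2a$ from $F$, on the side opposite to the focus-to-vertex direction; equivalently, it is the image of the tangent to $\mathcal C$ at $M$ under the $2\times$ scaling centered at $F$. Extend $FA$ beyond $F$ by a midpoint construction, mirroring $V'$ as the midpoint of $F'A'$. The point $A'$ on the ray with $FA'$ related to the circle chord then satisfies $AA'=AF$ by the perpendicular-bisector argument, since $t_A$ is perpendicular to $FA'$ at its midpoint. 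The key property to verify is that $A'$ always lies on the fixed directrix line. This follows from the parallelism $FA\parallel$ (normal at $V$) together with the fact that $M$ is diametrically opposite $F$: the foot of the projection of $A'$ onto $FM$ is at constant distance $2\,FM\cdot(\ldots)$ from $F$.

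The main obstacle is this last step: showing that the reflected point $A'$ has a constant projection onto the axis $FM$, independent of $V$. I would first try a Thales-circle argument. The angle $FVM$ is right, and the rotated normal direction at $V$ makes the segment from $F$ to the image point subtend a constant projection, so the projection of $A'$ equals $2\,FM$ times a fixed factor. If this is messy, I would fall back on the polar closure in the style of ellipse Proof~2F''. Drop the perpendicular from $F$ to the radius through $V$ to get the foot $N$. The similarity $\triangle VNF\sim\triangle AHF$ then gives $VN\cdot FA=\text{const}$ with $VN=\rho(1-\cos\theta)$ for the circle's radius $\rho$. This is the $e=1$ case of \eqref{eq:proof1f-r-alpha}, and hence a parabola with $p=2a$.
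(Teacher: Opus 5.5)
Your setup matches the paper: a circle on diameter $FM=a$, scale $a^2/L$, the velocity line at distance $L/v$ from $F$, and the orbit point fixed by $AF\parallel OV$. But the closing step you present as primary has a genuine gap.

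First, a repair. Since $\angle FVM=90^\circ$, the foot of the perpendicular from $M$ to $FV$ is $V$ itself. Also, ``the point on the ray from $F$ perpendicular to the velocity'' does not pin down $H'$. Define instead $H:=FV\cap\mathcal A$, where $\mathcal A$ is the tangent to the circle at $M$. Then $\triangle FVM\sim\triangle FMH$ gives $FV\cdot FH=a^2$. Just as importantly, it shows that $H$ always lies on the fixed line $\mathcal A$.

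The only coherent reading of your $A'$ (``$t_A$ is perpendicular to $FA'$ at its midpoint'') is the reflection of $F$ in $t_A$. That point depends only on $t_A$. It lies on $\mathcal D$ simply because its midpoint with $F$ is $H\in\mathcal A$, and $AA'=AF$ holds for \emph{every} $A\in t_A$. So the step you call the main obstacle is automatic and uses neither $AF\parallel OV$ nor any Thales argument. The step that actually matters is never addressed: you need $AA'\perp\mathcal D$, so that $AA'$ is the \emph{distance} from $A$ to $\mathcal D$. Without it, nothing distinguishes the orbit point from any other point of $t_A$.

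That missing step is exactly where the paper uses the centripetal condition. Let $T_A:=t_A\cap FM$. Then $MV\parallel t_A$ (both $\perp FV$), $OV\parallel AF$, and $MO$ lies along $FT_A$, so $\triangle AT_AF\sim\triangle VMO$. Since $OV=OM$, this forces $FA=FT_A$. This isosceles triangle is the reflection property, and it gives $AA'\parallel FM$ in your notation.

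The paper then finishes directly. $H$ is the midpoint of $AT_A$, so the distance from $A$ to $\mathcal A$ is $T_AM=r-a$, and $\mathcal D$ lies a further $a$ beyond $\mathcal A$. Note that $\mathcal D$ is on $M$'s side of $F$, since $M$ is the vertex. Your verbal ``opposite side'' description conflicts with your correct $2\times$-scaling description.

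Your fallback, on the other hand, works. The right triangles $VNF$ and $FHA$ share the angle between the lines $FV$ and $OV\parallel FA$, so $VN\cdot FA=VF\cdot FH=a^2$. Because $F$ lies on the circle, $VN=\rho(1-\cos\theta)$ with $\rho=a/2$. This gives $r(1-\cos\theta)=2a$, the focus--directrix relation with $p=2a$. It is the polar closure the paper only mentions in passing as Proof~2F''. If you make it the main argument, with $H$ defined via $\mathcal A$, it completes your proof.
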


\begin{proof}
As in the previous two subsections, rotate the hodograph by \(90^\circ\), but now scale by \(a^2/L\) (with equivalent initial-data form in \Cref{subsec:proof2f-discussion}). In the parabola configuration of \Cref{fig:parabola-proof2f}, the corresponding point \(H\) lies on the fixed line \(\mathcal A\), where \(\mathcal A\perp FM\) and \(FM\) is the diameter of the scaled hodograph circle \(\mathcal C_1\).

For \(V\in\mathcal C_1\), let \(H=VF\cap\mathcal A\). Since \(\triangle MVF\) and \(\triangle HMF\) are right triangles with a common acute angle at \(F\) (because \(V,H,F\) are collinear), they are similar. Hence
\[
  FV\cdot FH=MF^2=a^2.
\]
Using the areal relation \(v\cdot FH=L\), we obtain
\[
  FV=\frac{a^2}{L}\,v,
\]
which is the parabolic analogue of the elliptic/hyperbolic scaling formulas.

Now draw \(t_A\) through \(H\) perpendicular to \(FH\). The orbital point for this \(V\) is
\[
  A:=t_A\cap\{\text{line through }F\text{ parallel to }OV\},
\]
from the same centripetal-direction argument used above.

Let \(T_A:=t_A\cap \overrightarrow{FM}\). Since \(MV\perp VF\) and \(t_A\perp VF\), we have \(MV\parallel AT_A\). Also \(OV\parallel AF\) and \(MO\parallel FT_A\), so \(\triangle A T_A F\sim\triangle VMO\). Therefore the tangent-bisector (reflection) relation is the same as before: the tangent at \(A\) bisects the angle between the focus ray \(AF\) and the fixed direction parallel to \(FM\).

For the directrix closure, draw through \(A\) the line perpendicular to \(\mathcal A\), and let it meet \(\mathcal A\) at \(A''\). By definition \(AF=r\). With the same parallel-line relations,
\[
  AA''=T_A M=F T_A-F M=F A-a=r-a.
\]
Now let \(\mathcal D\parallel\mathcal A\) meet the ray \(FM\) at \(N\) with \(NM=a\). Define \(A'\) as the intersection of ray \(AA''\) with \(\mathcal D\). Then
\[
  AA'=AA''+a=(r-a)+a=r=AF.
\]
Thus the distance to the focus equals the distance to the directrix, which is exactly the defining property of a parabola. (In the plotted construction, \(A',H,F\) are collinear; equivalently, one may define \(A'\) via \(FH\cap\mathcal D\) and verify \(AA'\parallel FM\).) The remaining routine checks are left to the reader.
\end{proof}

\paragraph{Alternative scaled-circle variant.}
\begin{proof}[Alternative scaled-circle variant]
The same conclusion can also be constructed with the auxiliary circle \(\mathcal C_2\) shown in \Cref{fig:parabola-proof2f}, corresponding to a \(2a^2/L\) hodograph scaling (again reducible to initial data as in \Cref{subsec:proof2f-discussion}). As in the ellipse and hyperbola cases, choosing a different but equivalent hodograph proxy changes only intermediate geometric details, not the final parabolic conclusion.
\end{proof}

Alternative polar-form closure (parabola Proof~2F'') in parabola is trivial and equivalent to the directrix closure. The same distance relation \(r=r\cos\beta+2a\) can be read directly from the diagram, where \(\beta\) is the angle between \(AF\) and the directrix direction \(FM\). This is the parabolic polar formula in this construction, with the same \(a^2/L\) scaling as before.

Again, we postpone the explicit initial-condition parameter map to \Cref{subsec:proof2f-discussion}: this subsection first establishes the geometric mechanism that forces the parabolic form.

\subsection{Discussion}
\label{subsec:proof2f-discussion}

\subsubsection{Rotation/Scaling Convention and Proxy Choice}

This subsection takes CRS as the primary reference point for rotation/scaling conventions in hodograph-based constructions \cite{carinena2016newlook}.

Before parameterizing by initial data, we clarify notation and conventions. Our Forward-Proof~2F scaling is not written exactly in the same normalization style as CRS. In CRS/CNS-style presentation, the scaling is often written as a signed factor tied to the energy sign (\(E<0\), \(E=0\), \(E>0\)), so negative scaling is allowed. Here we do not adopt that convention: we keep the geometric proxy radius positive (ellipse auxiliary circle \(\mathcal C\) has radius \(a\)) and encode orientation by rotation choice. The two conventions are equivalent and give the same geometric/dynamical results.

As emphasized in \cite{gonzalezvillanueva1998circular}, one can even avoid this preprocessing entirely: no mandatory rescaling of the hodograph and no mandatory shift of the velocity origin to specially convenient centers (such as one of the foci) are required in principle, because perpendicular-and-parallel constructions can still recover the orbit. We do not emphasize that route here, since carefully chosen hodograph proxies in configuration space make the Euclidean argument shorter and more transparent. So these choices are not necessities, but geometric conveniences chosen for proof design. Nonetheless, for readers new to the topic, we find the auxiliary circle the most natural primary proxy; to our knowledge, using the auxiliary circle explicitly as the primary hodograph proxy has not been proposed in the literature, and we regard this framing as one contribution of this work.

For ellipse, we take as the main convention a \(+90^\circ\) rotation in the same sense as orbital motion, followed by scaling to the auxiliary circle. An equivalent but less natural alternative is the \(-90^\circ\) transform with the empty focus convention. In the directrix-circle picture (radius \(2a\)), these two conventions appear as two corresponding transformed circles. This is one reason the auxiliary-circle proxy is algebraically cleaner in configuration space.

For hyperbola, the same duality appears with center roles reversed: in our main convention, \(+90^\circ\) gives the empty-focus-centered proxy, while \(-90^\circ\) gives the force-center-shared speed-origin convention. This can appear as a reversed rotation sign relative to the hyperbola convention in \cite{carinena2016newlook}; the difference is convention-level, since that treatment allows negative scaling.

For parabola, \(\Cref{fig:parabola-proof2}\) already exhibits multiple equivalent rotated-hodograph proxies. In the notation there, \(C_1,C_2,C_3,C_4\) may be read as \(-90^\circ\) proxy choices (with representative centers \(F,F,H,J\)); the dual \(+90^\circ\) reading corresponds to speed-origin placements \(M,N,M,N\), respectively. The figure does not explicitly draw all scaled velocity vectors for this second reading, but it leads to equivalent proofs with different intermediate Euclidean chains. Their radii \((a/2,\ a,\ a,\ 2a)\) produce different \(s_{\mathrm{hodo}}\) values. In this paper we take \(C_1\) as the main auxiliary proxy.

\subsubsection{Initial-Data Parameterization}

Across the ellipse, hyperbola, and parabola cases, Proof~2F and its variants follow one common four-step template. First, rotate and scale the hodograph circle to obtain a geometric proxy in configuration space (with conic-dependent scale choices). Second, combine the secant/product identity with the areal invariant \(L\) so that the tangent carrier \(t_A\) is fixed by the constructed point \(H\). Third, recover the force-center direction by imposing the parallel condition \(AF\parallel OV\), which determines the orbit point on \(t_A\). Fourth, identify the conic through an invariant second focus (ellipse/hyperbola) or invariant directrix (parabola), together with the reflection and gardener/directrix characterization.

This auxiliary-circle framing, imported from the inverse-problem side, gives a concrete pointwise forward construction: each proxy point \(V\) determines a unique orbital point \(A\), rather than relying on a tangent-envelope argument. In this sense, the geometric map from hodograph data to orbit position is explicit at every step.

The same framework also connects cleanly to physical parameters. Equation~(7.1) comes directly from Section~6: \(u=\mu/L\) is the hodograph-circle radius from \Cref{eq:hodo-step3}, and \(p=L^2/\mu\) is the conic polar coefficient from \Cref{eq:proof1f-r-alpha}. Keep only the invariants \(\mu\) and \(L\), then
\begin{equation}
  u=\frac{\mu}{L},
  \qquad
  p=\frac{L^2}{\mu},
\end{equation}
with \(p=\frac{1}{2\kappa}\) from \Cref{eq:kappa-def}.
For ellipse, the equivalent geometric form \(p=b^2/a\) was stated in \Cref{eq:p-b2-over-a-ellipse}.

At an initial orbit point \(A_0\), let
\begin{equation}
  r_0:=FA_0,\qquad d_0:=FH_0,\qquad L=v_0d_0.
\label{eq:proof2f-disc-r0d0L}
\end{equation}
Then
\begin{equation}
  v_{r0}^2=v_0^2-\left(\frac{L}{r_0}\right)^2
  =L^2\!\left(\frac{1}{d_0^2}-\frac{1}{r_0^2}\right).
\label{eq:proof2f-disc-vr0}
\end{equation}
From \Cref{eq:proof1f-r-alpha}, \(r=p/(1-e\cos\alpha)\), so at \(A_0\),
\begin{equation}
  e\cos\alpha_0=1-\frac{p}{r_0},
  \qquad
  e\sin\alpha_0=\frac{p}{L}v_{r0},
\label{eq:proof2f-disc-e-alpha0}
\end{equation}
hence
\begin{equation}
  e^2
  =\left(1-\frac{p}{r_0}\right)^2+\frac{p^2}{L^2}v_{r0}^2
  =1+\frac{p^2}{d_0^2}-\frac{2p}{r_0}.
\label{eq:proof2f-disc-e2-initial}
\end{equation}
Define
\begin{equation}
  E_0:=\frac{v_0^2}{2}-\frac{\mu}{r_0}
  =\frac{L^2}{2d_0^2}-\frac{\mu}{r_0},
\label{eq:proof2f-disc-E0}
\end{equation}
where \(L=v_0d_0\).
Then
\begin{equation}
  e^2=1+\frac{2E_0L^2}{\mu^2},
\label{eq:proof2f-disc-e2-E0}
\end{equation}
so the regime is
\begin{equation}
  E_0<0\ (\text{ellipse}),\qquad
  E_0=0\ (\text{parabola}),\qquad
  E_0>0\ (\text{hyperbola}),
\end{equation}
consistent with \Cref{eq:proof1f-regimes}.

For ellipse/hyperbola, \(e=c/a\), \(p=b^2/a\), and \(s_{\mathrm{hodo}}=b^2/L\). For parabola (with \(a:=MF\) in \Cref{subsec:parabola-proof2f}), \(p=2a\) and \(s_{\mathrm{hodo}}=a^2/L\). Using \(2E_0=v_0^2-\frac{2\mu}{r_0}\), the initial-data form simplifies to
\begin{equation}
  s_{\mathrm{hodo}}
  =
  \begin{cases}
    -\dfrac{L}{2E_0}, & E_0<0,\\[8pt]
    \dfrac{L^3}{4\mu^2}, & E_0=0,\\[8pt]
    \dfrac{L}{2E_0}, & E_0>0.
  \end{cases}
\label{eq:proof2f-disc-shodo-E0}
\end{equation}
Using \(u=\mu/L\), this gives
\begin{equation}
  a=s_{\mathrm{hodo}}\,u
  =
  \begin{cases}
    -\dfrac{\mu}{2E_0}, & E_0<0,\\[8pt]
    \dfrac{L^2}{4\mu}, & E_0=0,\\[8pt]
    \dfrac{\mu}{2E_0}, & E_0>0.
  \end{cases}
\label{eq:proof2f-disc-shodo-u-E0}
\end{equation}
For non-parabolic conics \((E_0\neq 0)\), this is equivalent to
\begin{equation}
  a=\frac{\mu}{2|E_0|},
  \qquad
  b^2=ap=\frac{L^2}{2|E_0|},
  \qquad
  c=ae,\quad
  e=\sqrt{1+\frac{2E_0L^2}{\mu^2}},
\label{eq:proof2f-disc-abc-from-initial}
\end{equation}
so \(a,b,c\) are all fixed by initial data \((\mu,L,r_0,d_0)\).
Equation~\eqref{eq:proof2f-disc-shodo-u-E0} is equivalently the hodograph-scaling statement emphasized in van Haandel--Heckman and then used directly in the CRS treatment \cite{vanhaandelheckman2009teaching,carinena2016newlook}.
So the geometric scaling in Proof~2F is fixed directly by initial dynamical data.

\subsubsection{Orbit-Wide Form and Conserved Specific Energy}

The derivation above is not tied to the special notation \(r_0,v_0,d_0\). Equations~\eqref{eq:proof2f-disc-vr0}--\eqref{eq:proof2f-disc-e2-E0} apply at any point \(A\) on the same orbit by replacing
\[
  (r_0,v_0,d_0,E_0)\ \mapsto\ (r,v,d,E),
\]
with
\[
  L=vd,\qquad E:=\frac{v^2}{2}-\frac{\mu}{r}.
\]
Then the same algebra gives
\begin{equation}
  e^2=1+\frac{2EL^2}{\mu^2}.
\label{eq:proof2f-disc-e2-E}
\end{equation}
Since \(\mu\) is fixed by the force law, \(L\) is fixed by central-force areal invariance, and \(e\) is the global eccentricity of one conic orbit, \Cref{eq:proof2f-disc-e2-E} implies that \(E\) is constant along the orbit. Thus this geometric framework also recovers conservation of specific mechanical energy. Multiplying by the planet mass gives the usual total-energy conservation statement.

This does not claim energy conservation must be derived from \Cref{eq:proof1f-regimes} or from one specific geometric route; conservation of energy is broader. But within the forward-problem discussion, it is useful to state explicitly that the same geometric invariants imply it. Many forward-problem expositions take energy as a starting law (or derive it by integration first, then use geometry), as in the pedagogical lines discussed by Markowsky, van Haandel--Heckman, and Simha \cite{markowsky2011newton,vanhaandelheckman2009teaching,simha2021algebra}.

\subsubsection{Kepler's Third Law and Universal Gravitation}

For the elliptic case, Section~3 gives the area by affine-circle mapping:
\[
  \mathrm{Area}(\text{orbit})=\pi ab.
\]
Since \(L\) is twice areal speed, one full period satisfies
\begin{equation}
  T=\frac{\pi ab}{L/2}=\frac{2\pi ab}{L}.
\label{eq:proof2f-disc-period}
\end{equation}
Using \(b^2=ap\) and \(p=L^2/\mu\),
\begin{equation}
  T^2
  =\frac{4\pi^2a^2b^2}{L^2}
  =\frac{4\pi^2a^3p}{L^2}
  =\frac{4\pi^2}{\mu}\,a^3,
\label{eq:proof2f-disc-third-law}
\end{equation}
hence
\begin{equation}
  \frac{T^2}{a^3}=\frac{4\pi^2}{\mu}=\text{constant}
\label{eq:proof2f-disc-third-law-ratio}
\end{equation}
for all bodies orbiting the same center (same \(\mu\)).

This is the Kepler-third-law form in the present notation. Combined with \( \mathbf F=m\,\mathbf a\), it leads to the inverse-square force model
\[
  \mathbf a=-\frac{\mu}{r^2}\,\hat{\mathbf r},
  \qquad
  \mathbf F=-\frac{\mu\,m_{\mathrm p}}{r^2}\,\hat{\mathbf r}.
\]
By two-body symmetry, \(\mu\) is proportional to the source mass; for solar orbits, \(\mu=G M_\odot\), giving
\[
  \mathbf F=-\frac{G M_\odot m_{\mathrm p}}{r^2}\,\hat{\mathbf r}.
\]
The mass-independence of \(\mathbf a\) in this form is precisely the empirical content of the proportionality between inertial and gravitational mass. Newton's comparison of lunar orbital acceleration with near-Earth free fall is the classical argument for universality. For historical and modern accounts, see the \emph{Principia} sources and reconstructions \cite{newton1687principia,newton1846mottewikisource,newton1934mottecajori,chandrasekhar1995principia,nauenberg2003keplerarea,markowsky2011newton}.

Historically, the route from the third law to gravitation also required a nontrivial geometric step: to place terrestrial gravity and lunar motion under one law, Newton had to control the effects of extended bodies. In the \emph{Principia} he proved the spherical-symmetry (shell) result---an external body is attracted by a spherically symmetric mass as if all its mass were concentrated at the center---thereby making the inverse-square model physically applicable to planets and stars. Coupled with the Moon-versus-surface-fall comparison, this supports the universality claim. The same work also gives geometric treatments of barycentric motion (e.g., Sun--Jupiter), curvature--force relations, and time-of-flight constructions, illustrating how Newton extended Euclidean geometry into dynamics by introducing time into spatial constructions; modern analytical mechanics rewrites much of this in calculus form, but the underlying geometric architecture remains rich and continues to generate interest for deeper dives \cite{newton1687principia,newton1846mottewikisource,newton1934mottecajori,chandrasekhar1995principia,markowsky2011newton}.

To close the forward-problem arc: once the initial data are given, \Cref{eq:proof2f-disc-r0d0L,eq:proof2f-disc-shodo-E0,eq:proof2f-disc-shodo-u-E0,eq:proof2f-disc-abc-from-initial} uniquely fix the rotated/scaled hodograph proxy (its size and orientation) and the corresponding force-center placement in that proxy geometry. From there, the remaining construction follows exactly the same flow as \Cref{subsec:ellipse-proof2f,subsec:hyperbola-proof2f,subsec:parabola-proof2f}. This is why we chose the present order: first show geometrically why conic structure is inevitable, then use initial conditions only to select the exact member (shape and scale) within that conic family.

\section{Summary and Future Work}

This paper develops a Euclidean, \emph{Principia}-style treatment of the Kepler problem using finite-step constructions, tangent/triangle geometry, and Newton's ultimate-ratio viewpoint. For the inverse problem, the proof chain across Sections~2--5 is explicit and modular. Section~2 establishes the dynamical bridge: from constant areal speed, the acceleration direction is central, and two local lemmas convert a geometric ratio into the inverse-square force form. Section~3 introduces the auxiliary-circle/affine transport layer; the tangent-transfer and matching-drop construction reduces local deflection to a conic-invariant ratio, culminating in Inverse Problem Proof~1 for the ellipse (\Cref{subsec:proof1-ellipse-affine}). A more universal proof architecture then generalizes to all conic types: conic-specific geometry computes the local ratio, while the Section~2 conversion step remains universal.

Within that inverse-problem architecture, Proof~2 and Proof~3 play different roles. Proof~2 is based on direct displacement computation, identifying the constant sagitta-to-chord-square ratio. Proof~3 uses a more direct hodograph route: first establish that conic motion gives a circular hodograph, then use \(\Delta \mathbf v/\Delta t\) to recover acceleration, which is inverse-square in radius.

For the forward problem, we label the two forward routes as Proof~1F and Proof~2F, with variant closures denoted by Proof~2F'. In Proof~1F we keep the hodograph-circle theorem in view but proceed in a discrete Newtonian style. The infinitesimal \(\Delta t\)-scaled hodograph circle is translated in parallel as it glides along the orbit: its shape and orientation are preserved while its center shifts from step to step. This moving-circle picture provides a direct geometric bridge from velocity-space circularity to conic recovery in configuration space. 

Our second forward route, Proof~2F, is fully geometric. From the initial condition and fixed physical constants, the hodograph is determined; after rotation and scaling, the remaining steps are ruler-and-compass constructions. The method is pointwise and unique: from local data (the velocity \(\mathbf v\), with both magnitude and direction, together with invariant \(L\)), we set the arm length \(d:=FH=L/v\), identify \(H\), draw \(t_A\perp FH\), and impose the centripetal-direction condition \(AF\parallel OV\). These constraints determine the orbital point \(A\) uniquely in each local configuration. Repeating the construction yields the full orbit. A key invariant is a fixed second focus \(F'\) for ellipse/hyperbola, or a fixed directrix line \(\mathcal D\) for parabola; this certifies the conic class. Variants denoted by Proof~2F' are mainly technical Euclidean line-construction variants that demonstrate flexibility of the method, but are not required for the main architecture.

We also include multiple geometric realizations of the hodograph (and rotated/scaled proxies), including different choices of velocity origin. Several parabola constructions appear to be less emphasized in the existing literature. The force-center auxiliary-circle viewpoint used throughout is also less common than directrix-circle-centered expositions. The infinitesimal \(\Delta t\)-scaled moving-hodograph-circle picture appears to be a useful and potentially original pedagogical viewpoint. More broadly, we intentionally provide a denser geometric presentation than the minimalist style of the \emph{Principia}, supported by modern construction software and figures, to make relation chains explicit and accessible.

Finally, the main limitation remains geometric uniformity across conic types: Inverse Problem Proof~1 (the ellipse affine-circle route) is especially clean, but we have not yet found an equivalent proof of the same type for hyperbola and parabola, which currently require additional case-specific constructions.

\subsection*{Acknowledgment}
This study was motivated by appreciation of geometric beauty and by sustained work on geometric approaches and their extension to broader generalizations, especially through ellipse properties, auxiliary-circle structures, and their related identities. A major precursor was the author's work on an elementary geometry problem for middle-school students, documented in \cite{cryptodogares2026geometrydegreefreedom}.
That direction led to a deeper reading of Newton's \emph{Principia} treatment of space-time Euclidean geometry, and then to possible new exploration paths for both inverse and forward problems.

The author is deeply grateful to the long geometric tradition represented by Euclid, Galileo, Kepler, Apollonius, Newton, Leibniz, Cauchy, Hamilton, Maxwell, Feynman, and Bernoulli, and to modern contributors including Goodstein \& Goodstein, Chandrasekhar, Markowsky, Derbes, van Haandel--Heckman, and Cari\~nena--Ra\~nada--Santander. The manuscript is intended as a pedagogical account for geometric tradition and its application to mechanics.

\subsection*{Code and Tooling Disclosure}
Code, \LaTeX\ sources, and geometric construction files are available at:
\url{https://github.com/CryptoDogAres/AlternativeKeplerToNewton/}

For long-term citation stability, the best practice is to include both the repository URL and the arXiv identifier.

Figures were produced with GeoGebra Classic 6. Writing and editing assistance used OpenAI ChatGPT, Codex, and Prism (including skill-assisted proofreading and \LaTeX\ editing support). Iterative drafting and paper editing were performed in Visual Studio Code.

\bibliographystyle{alphaurl}
\bibliography{bib/references}

\end{document}